\tikzstyle{noir}=[circle,draw,fill=black!40]
\tikzstyle{debut}=[circle,draw,very thick,fill=white]
\tikzstyle{blanc}=[circle,draw, fill=white]
\title{Towards an algebraic theory of rational word functions}
\author{Nathan Lhote}
\newtheorem{thm}{Theorem}[section]
\newtheorem{prp}{Proposition}[section]
\newtheorem{lem}{Lemma}[section]
\newtheorem{cor}{Corollary}[section]
\theoremstyle{definition}
\newtheorem{xmp}{Example}[section]
\theoremstyle{remark}
\newtheorem{rmk}{Remark}[section]
\newenvironment{prf}
{\textit{Proof:}}
{\hfill $\square$\\}
\newcommand{\HRule}{\rule{\linewidth}{0.5mm}}
\begin{document}

\begin{titlepage}

\begin{center}

\includegraphics[width=0.4\textwidth]{./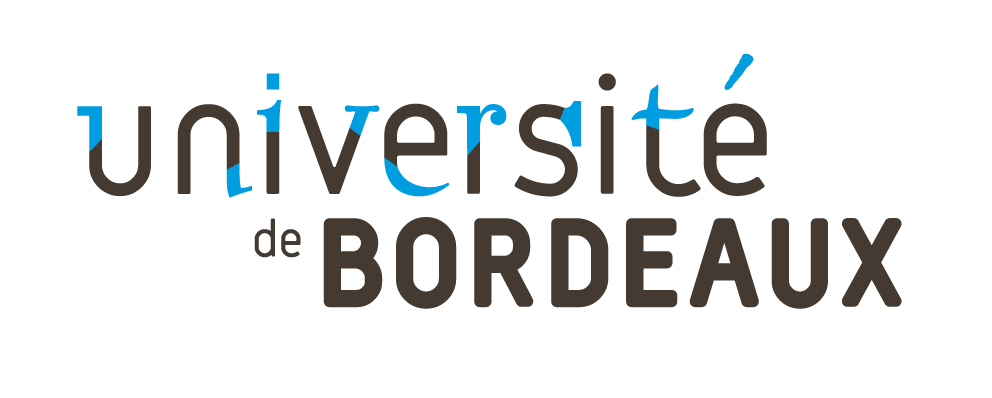}~\hfill ~\includegraphics[width=0.4\textwidth]{./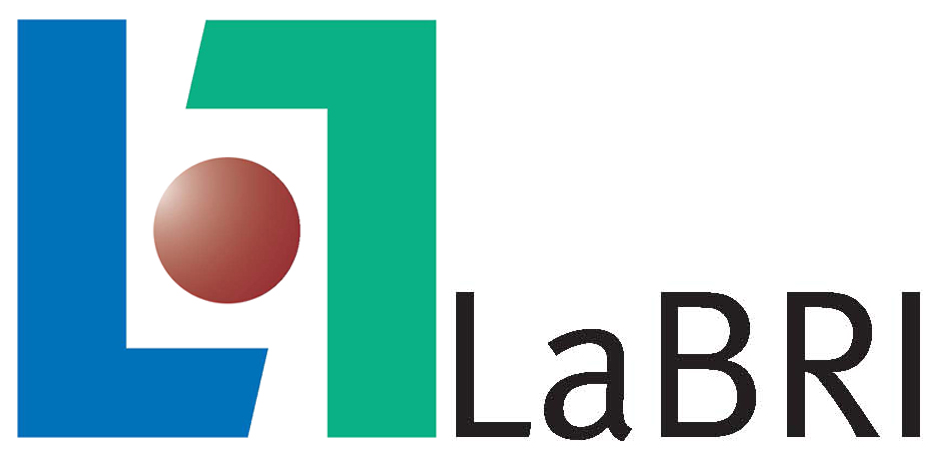}~\\[1cm]

\Large Internship report\\[0.5cm]

\HRule \\[0.4cm]
{ \huge Towards an algebraic characterization of rational word functions \\[0.4cm] }

\HRule \\[1.5cm]

\noindent
\begin{minipage}[t]{0.4\textwidth}
\begin{flushleft} \large
\emph{Author:}\\
Nathan Lhote
\end{flushleft}
\end{minipage}
\begin{minipage}[t]{0.4\textwidth}
\begin{flushright} \large
\emph{Supervisors:} \\
Emmanuel Filiot\footnotemark[1]\\
Olivier Gauwin\footnotemark[2]\\
Anca Muscholl\footnotemark[2]
\end{flushright}
\end{minipage}

\vfill

{\large \today}

\end{center}
\footnotetext[1]{Universit\'e Libre de Bruxelles}
\footnotetext[2]{LaBRI, Universit\'e de Bordeaux}

\end{titlepage}

\chapter*{Introduction}

In formal language theory, several models characterize regular languages: finite automata, congruences of finite index, monadic second-order logic (MSO), see \emph{e.g.}~\cite{buchi60,elgot61, trakhtenbrot61}.
Further connections exist between monoid varieties and logical fragments of MSO, for instance aperiodic monoids have been shown to recognize exactly the word languages defined by first-order (FO) formulas \cite{schutzenberger65,mcnaughtonp71}.

When considering word relations instead of languages, automata are replaced by transducers, that is automata with outputs associated with transitions. A transducer is an automaton that reads an input word and returns an output word obtained by concatenating the outputs of the transitions. However, several important properties do not generalize from automata to transducers. For instance, the well known equivalence between deterministic and non-deterministic one-way automata, as well as the equivalence between one-way and two-way automata, do not transfer to transducers. One property that is preserved is the equivalence between automata and MSO formulas: it has been shown \cite{engelfrieth01} that MSO word transductions and two-way transducers define the same class of word functions called \emph{regular} functions. A recently introduced model of computation, streaming string transducers (SST), has been shown to compute the same class of regular functions \cite{alurc10}.
Two recent related results are the equivalence between FO transductions and aperiodic SSTs \cite{filiotkt14} and the equivalence between FO transductions and aperiodic two-way transducers \cite{cartond15}. However it is not known if one can decide if a given regular function is FO definable.

\paragraph*{Our contribution:}
Our result deals with functions that are definable by one-way word transducers.
These functions are known in the literature as \emph{rational} functions.

The notion of minimal automaton goes beyond minimizing the state space.
Indeed to decide whether a regular language satisfies some algebraic property, like aperiodicity, it suffices to consider the minimal automaton.
Therefore in order to have an algebraic characterization of rational functions, we need a notion similar to the one of minimal automata for transductions.
For the class of functions defined by deterministic transducers, such a notion exists \cite{choffrut03} and this minimal transducer enjoys, among deterministic transducers, the same kind of minimality properties.

In an attempt to obtain a similar notion for rational functions, we study the model of bimachines \cite{schutzenberger61} which has been shown to be a canonical model for rational functions (see \emph{e.g.} \cite{berstelb79}).
We describe a canonical bimachine, introduced by \cite{reutenauers91}, and show that this representation preserves, similarly to the minimal automaton for languages, some algebraic properties of the function.
In the case of aperiodic rational function, \emph{i.e.} functions definable by an aperiodic one-way transducer, a characterization has already been given in \cite{reutenauers95}.
Our main contribution is to give an effective characterization of aperiodic rational functions, and extend it to other algebraic varieties of rational functions.
We also introduce \emph{translations}, a model of logical transductions inspired by and equivalent to bimachines.
We show an equivalence between transducers satisfying some algebraic property and logical translations, for instance between aperiodic transducers and FO-translations.

\tableofcontents

\chapter{Automata for languages and transductions}

\section{Words, languages and transductions}

Let an \emph{alphabet} $\Sigma$ be a finite set of symbols called \emph{letters}.
Let $\Sigma^\ast$ denote the free monoid on $\Sigma$.
An element of $\Sigma^\ast$ is called a \emph{word} and the identity element $\varepsilon$ is called the \emph{empty word}.
A \emph{language} is a set of words.

A \emph{transduction} $R$ is a subset of $\Sigma^\ast\times\Sigma^\ast$.
For $(u,v)\in R$, $u$ is called an \emph{input} word and $v$ an \emph{output word}.
The \emph{domain} of $R$, denoted by $\mathrm{dom}(R)$, is the set of input words.
A transduction is called \emph{functional} if it is a (partial) function.
For a functional transduction $f$ we write $f(u)=v$ instead of $(u,v)\in f$.
\begin{xmp}
We give two running examples of functional transductions:

\begin{itemize}

\item Let $\Sigma=\left\{a\right\}$.
The transduction $f_{even}:\Sigma^\ast\rightarrow \Sigma^\ast$ copies the word if its length is even and outputs $\varepsilon$ otherwise.

\item Let $\Sigma=\left\{a,b\right\}$.
The transduction $f_{ends}:\Sigma^\ast\rightarrow \Sigma^\ast$ replaces each letter by $a$ if the first and last letter are $a$ and yields $\varepsilon$ otherwise. \emph{e.g.}, $f_{ends}(abaa)=a^4$, $f_{ends}(baaab)=f_{ends}(abab)=\varepsilon$.

\end{itemize}

\end{xmp}

\section{Finite state automata}

A \emph{left non-deterministic finite state automaton} (NFA or simply automaton) over $\Sigma$ is a tuple $A=\left(Q,I,F,\Delta \right)$,
where $Q$ is a finite set of states,
$I\subseteq Q$ is the set of initial states,
$F\subseteq Q$ is the set of accepting states,
$\Delta\subseteq Q\times\Sigma\times Q$ is the transition relation.

We denote $(p,\sigma,q)\in \Delta$ by $p\xrightarrow{\sigma}_A q$ (we will omit the $A$ when the context is clear).
We extend the relation to words: $p_0\xrightarrow{\sigma_1\ldots\sigma_k} p_k$ when we have $p_0\xrightarrow{\sigma_1} p_1\xrightarrow{\sigma_2} p_2\cdots p_{k-1}\xrightarrow{\sigma_k} p_k$. Such a sequence is called a \emph{run of $A$ on $\sigma_1\ldots\sigma_k$}, and if $p_0\in I$ and $p_k\in F$, it is called a \emph{successful run}. We also add $q\xrightarrow \varepsilon q$, $\forall q\in Q$.

A word $u\in \Sigma^\ast$ is \emph{recognized} or \emph{accepted} by $A$ if there exist $p\in I$, $q\in F$ such that $p\xrightarrow u q$. The set of all words recognized by $A$ is called the \emph{language recognized} by $A$ and is denoted by $L(A)$.

A \emph{left deterministic finite state automaton} (left DFA or simply DFA) on an alphabet $\Sigma$ is defined similarly to an NFA as a tuple $A=(Q,q_0,F,\delta)$ where the transition function is $\delta: Q\times\Sigma\rightarrow Q$ and $q_0\in Q$ is the unique initial state.

Let $A=\left(Q,I,F,\Delta \right)$ be an automaton, and let $P_1,P_2\subseteq Q$. Then we define $L_{P_1,P_2}(A)$ as the language recognized by $A'=\left(Q,P_1,P_2,\Delta \right)$.

\begin{rmk}
Right automata are defined exactly as left automata except that they read the words from right to left.
Let $A=\left(Q,I,F,\Delta \right)$ be a right automaton.
Then we denote $(p,\sigma,q)\in \Delta$ by $q\xleftarrow \sigma p$ and
$p_k\xleftarrow{\sigma_1\ldots\sigma_k} p_0$ when we have $p_k\xleftarrow{\sigma_1} p_{k-1}\cdots p_{1}\xleftarrow{\sigma_k} p_0$.
Unless specified otherwise, an automaton is assumed to be a left automaton.

\end{rmk}

\begin{xmp}

The DFA of Figure~\ref{nfas}~(a) recognizes the language $L_{even}$ of words of even length.
The NFA of Figure~\ref{nfas}~(b) recognizes the language $L_{ends}$ of words over $\Sigma=\left\{ a,b \right\}$ that start and end with an $a$.

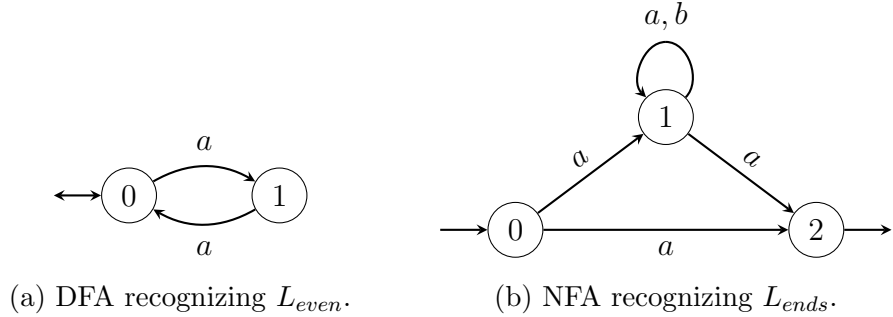
\begin{figure}[t]
        \centering
        \begin{subfigure}[b]{0.4\textwidth}
        	\centering
\begin{tikzpicture}
\node[blanc] (q0) at (0,0) {0} ;
\node[blanc] (q1) at (2,0) {1} ;
\draw[thick,->,>=stealth] (q0) to[bend left] (q1);
\draw[thick,->,>=stealth] (q1) to[bend left] (q0);

\draw[thick,<->,>=stealth] (-1,0) --(q0);

\node (n) at (1,0.7) {$a$};
\node (n) at (1,-0.7) {$a$};

\end{tikzpicture}
\caption{DFA recognizing $L_{even}$.}
        \end{subfigure}
        ~
        \begin{subfigure}[b]{0.4\textwidth}
        \centering
\begin{tikzpicture}
\node[blanc] (q0) at (0,0) {0} ;
\node[blanc] (q1) at (2,1.5) {1} ;
\node[blanc] (q3) at (4,0) {2} ;
\draw[thick,->,>=stealth] (-1,0) --(q0);
\draw[thick,<-,>=stealth] (5,0) --(q3);
\draw[thick,->,>=stealth] (q0)--(q1) node[midway,above,sloped] {$a$};
\draw[thick,->,>=stealth] (q0)--(q3) node[midway,below,sloped] {$a$};
\draw[thick,->,>=stealth] (q1)--(q3) node[midway,above,sloped] {$a$};
\draw[thick] (q1) to[out=45,in=0](2,2.5) node[above] {$a,b$};
\draw[thick,->,>=stealth] (2,2.5) to[out=180,in=135](q1);
\end{tikzpicture}
\caption{NFA recognizing $L_{ends}$.}
        \end{subfigure}
        \caption{Two automata.}
        \label{nfas}
\end{figure}

\end{xmp}

\section{Transducers}
A \emph{nondeterministic finite state transducer} (NFT) on an alphabet $\Sigma$ is a tuple $T=(Q,I,F,\Delta,i,t)$,
where $Q$ is a finite set of states,
$I\subseteq Q$ is the set of initial states,
$F\subseteq Q$ is the set of accepting states,
$\Delta: Q\times\Sigma\times Q\rightarrow \Sigma^\ast$ is a partial function defining the transitions \footnote{This type of transducer is sometimes called \emph{real-time}\cite{sakarovitch09}. In the general case, a transition of a transducer may be labelled by any pair of words},
$i:I\rightarrow \Sigma^\ast$ is the initial function.
and $t:F\rightarrow \Sigma^\ast$ is the terminal function.

Let $\Delta (p,\sigma,q)=v$ be a transition of $T$. The letter $\sigma$ and the word $v$ are respectively called the \emph{input} and the \emph{output} of the transition.
We denote $\Delta (p,\sigma,q)=v$ by $p\xrightarrow{\sigma\mid v}_T q$ (we will omit the $T$ when the context is clear).
We extend the relation to words: $p_0\xrightarrow{\sigma_1\ldots\sigma_k\mid v_1\ldots v_k} p_k$ when we have $p_0\xrightarrow{\sigma_1\mid v_1} p_1\xrightarrow{\sigma_2\mid v_2} p_2\cdots p_{k-1}\xrightarrow{\sigma_k\mid v_k} p_k$. Such a sequence is called a \emph{run of $T$ on $\sigma_1\ldots\sigma_k$}, and if $p_0\in I$ and $p_k\in F$, it is called a \emph{successful run}. We also add $q\xrightarrow {\varepsilon\mid \varepsilon} q$, $\forall q\in Q$.
An NFT realizes a transduction noted $\llbracket T \rrbracket\subseteq \Sigma^\ast \times \Sigma^\ast$. We also say that $T$ \emph{defines} the transduction $\llbracket T \rrbracket$.
The pair $(u,v)\in \llbracket T \rrbracket$ if there exist $q_0\xrightarrow{u\mid w}q_f$ such that $q_0\in I$, $q_f\in F$ and $v=i(q_o)\ w\ t(q_f)$.
An NFT is called \emph{functional} if it defines a functional transduction.
From now on the NFTs considered will be functional.

A \emph{deterministic finite state transducer} (DFT) on an alphabet $\Sigma$ is defined similarly to an NFT as a tuple $T=(Q,q_0,F,\delta,i,t)$ 
where the transition function is $\delta: Q\times\Sigma\rightarrow Q\times \Sigma^\ast$ and $q_0\in Q$.

The \emph{underlying automaton} of a transducer is the automaton obtained by removing the outputs from the transitions.
Let $T=(Q,I,F,\Delta,i,t)$ be a transducer, and let $\bar{\Delta}\subseteq Q\times \Sigma\times Q$ be the projection of $\Delta$. Then $A_T=(Q,I,F,\bar\Delta)$ is called the underlying automaton of $T$.

Functions defined by NFTs are called \emph{rational}, functions defined by DFTs are called \emph{subsequential} and functions defined by DFTs without terminal output are called \emph{sequential}.

\begin{xmp}
We give two examples of transducers:
The transducer of Figure~\ref{nfts}~(a) defines the function $f_{even}$ while the transducer of Figure~\ref{nfts}~(b) defines the function $f_{ends}$.

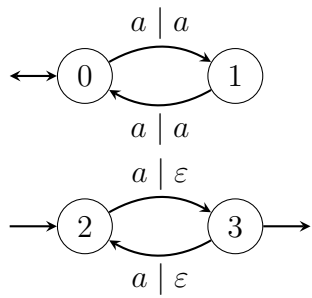
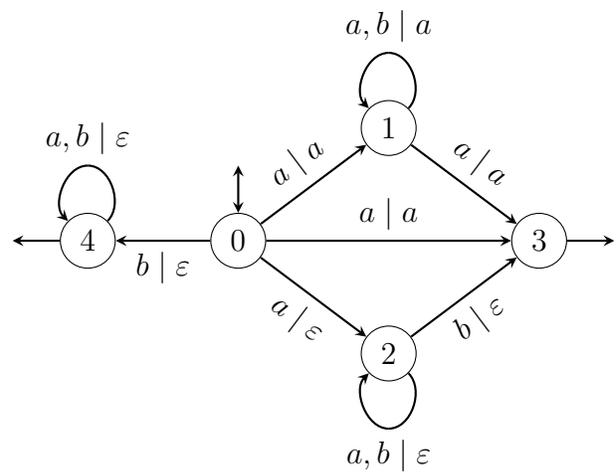
\begin{figure}[t]
        \centering
        \begin{subfigure}[b]{0.4\textwidth}
        	\centering
\begin{tikzpicture}
\node[blanc] (q0) at (0,0) {0} ;
\node[blanc] (q1) at (2,0) {1} ;
\draw[thick,->,>=stealth] (q0) to[bend left] (q1);
\draw[thick,->,>=stealth] (q1) to[bend left] (q0);

\draw[thick,<->,>=stealth] (-1,0) --(q0);

\node (n) at (1,0.7) {$a\mid a$};
\node (n) at (1,-0.7) {$a\mid a$};

\node[blanc] (q2) at (0,-2) {2} ;
\node[blanc] (q3) at (2,-2) {3} ;
\draw[thick,->,>=stealth] (q2) to[bend left] (q3);
\draw[thick,->,>=stealth] (q3) to[bend left] (q2);

\draw[thick,->,>=stealth] (-1,-2) --(q2);
\draw[thick,<-,>=stealth] (3,-2) --(q3);

\node (n) at (1,-1.3) {$a\mid \varepsilon$};
\node (n) at (1,-2.7) {$a\mid \varepsilon$};

\end{tikzpicture}
\caption{NFT defining $f_{even}$.}
        \end{subfigure}
        ~
        \begin{subfigure}[b]{0.4\textwidth}
        \centering
\begin{tikzpicture}
\node[blanc] (q0) at (0,0) {0} ;
\node[blanc] (q1) at (2,1.5) {1} ;
\node[blanc] (q2) at (2,-1.5) {2} ;
\node[blanc] (q3) at (4,0) {3} ;
\node[blanc] (q4) at (-2,0) {4} ;
\draw[thick,<->,>=stealth] (0,1) --(q0);
\draw[thick,<-,>=stealth] (5,0) --(q3);
\draw[thick,<-,>=stealth] (-3,0) --(q4);
\draw[thick,->,>=stealth] (q0)--(q1) node[midway,above,sloped] {$a\mid a$};
\draw[thick,->,>=stealth] (q0)--(q3) node[midway,above,sloped] {$a \mid a$};
\draw[thick,->,>=stealth] (q0)--(q2) node[midway,below,sloped] {$a\mid \varepsilon$};
\draw[thick,->,>=stealth] (q1)--(q3) node[midway,above,sloped] {$a\mid a$};
\draw[thick,->,>=stealth] (q2)--(q3) node[midway,below,sloped] {$b\mid \varepsilon$};
\draw[thick] (q1) to[out=45,in=0](2,2.5) node[above] {$a,b\mid a$};
\draw[thick,->,>=stealth] (2,2.5) to[out=180,in=135](q1);
\draw[thick] (q2) to[out=-45,in=0](2,-2.5) node[below] {$a,b\mid\varepsilon$};
\draw[thick,->,>=stealth] (2,-2.5) to[out=180,in=-135](q2);
\draw[thick,->,>=stealth] (q0)--(q4) node[midway,below,sloped] {$b\mid \varepsilon$};
\draw[thick] (q4) to[out=45,in=0](-2,1) node[above] {$a,b\mid \varepsilon$};
\draw[thick,->,>=stealth] (-2,1) to[out=180,in=135](q4);

\end{tikzpicture}
\caption{NFT defining $f_{ends}$.}
        \end{subfigure}
        \caption{Two transducers.}
        \label{nfts}
\end{figure}

\end{xmp}

\chapter{Logical and algebraic characterizations of classes of regular languages}

\section{Congruences and monoids}

Let $\Sigma$ be a finite alphabet and let $\sim$ be an equivalence relation on $\Sigma^\ast$.
We say that $\sim$ is a \emph{right congruence} if it verifies:
$$\forall u,v\in \Sigma^\ast, \forall \sigma\in \Sigma, u\sim v\Rightarrow u\sigma\sim v\sigma $$
Symmetrically we define a \emph{left congruence}.
A \emph{congruence} is defined as both a left and a right congruence.

Let $\sim_1$ and $\sim_2$ be two equivalence relations.
We say that $\sim_1$ is \emph{finer} than $\sim_2$, or that $\sim_2$ is \emph{coarser} than $\sim_1$, denoted by ${\sim_1}\sqsubseteq{\sim_2}$ if for all $u,v$ we have $u\sim_1v\Rightarrow u\sim_2v$.

We recall that a monoid $M$ is a set given with an associative binary operation (written multiplicatively) and an \emph{identity element} which is neutral (both left and right) for this operation.
A \emph{monoid homomorphism} (morphism for short) is an application $\mu: M_1\rightarrow M_2$ between two monoids  verifying $\mu(xy)=\mu(x)\mu(y)$ for $x,y\in M_1$ and $\mu(1_{M_1})=1_{M_2}$ with $1_{M_i}$ being the identity element of $M_i$ for $i\in \left\{1,2 \right\}$.
We say that a language $L$ of $\Sigma^\ast$ is \emph{recognized} by $M$ if there exist $P\subseteq M$ and $\mu: \Sigma^\ast\rightarrow M$ a monoid morphism such that $L=\mu^{-1}(P)$.
Let $\sim$ be a congruence on $\Sigma^\ast$.
We say that $L$ is \emph{recognized} by $\sim$ if there exists $P\subseteq \Sigma^\ast/\sim$ such that $L=\left\{u\mid [u]\in P \right\}$.

Let $M$ be a monoid and let $\mu:\Sigma^\ast \rightarrow M$.
Then $\sim_\mu$ defined by $u\sim_\mu v \Leftrightarrow \mu(u)=\mu(v)$ is a congruence, and any language recognized by $M$ is recognized by $\sim_\mu$ for some $\mu$.
Conversely, for a congruence $\sim$ we have a monoid $\Sigma^\ast/\sim$ with a binary operation induced by the natural morphism $u\mapsto [u]$ such that any language recognized by $\sim$ is recognized by $\Sigma^\ast/\sim$.
Hence there is a natural equivalence between recognizability by congruence and by monoid.

\begin{xmp}
Let $\Sigma=\left\{a\right\}$.
Let $\mu:\begin{array}{rcl}
\Sigma^\ast &\rightarrow &\left\{0,1\right\}\\
a^n &\mapsto & n\ \mathrm{mod}\ 2
\end{array}$ be the morphism from $\Sigma^\ast$ into the group of two elements (written additively).
Then $L_{even}=\mu^{-1}(0)$. The language $L_{even}$ is recognized by the group of two elements.

The corresponding congruence is defined by $a^m\sim a^n$ iff $m=n\ \mathrm{mod}\ 2$.
Then $L_{even}=[\varepsilon]$. The language $L_{even}$ is recognized by the congruence $\sim$.
\end{xmp}

\subsection{Syntactic congruences and monoids}
Let $L$ be a language. We define the \emph{syntactic congruence} of $L$ as:
$$u\approx_L v \Leftrightarrow \left( \forall x,y\quad xuy\in L \Leftrightarrow xvy\in L \right)$$
From this we can naturally define the \emph{syntactic monoid} of $L$,  $M(L)=\Sigma^\ast/\approx_L$.
According to the Myhill-Nerode Theorem \cite{nerode58}, a language is regular if and only if its syntactic congruence has finite index.

\begin{rmk}
\label{rmk2}
Let us also show that $\approx_L$ is the coarsest congruence recognizing $L$.
\end{rmk}

\begin{prf}
We see that, by definition, $L$ is a union of classes modulo $\approx_L$ hence the syntactic congruence of $L$ does recognize $L$.
Let $\sim$ be a congruence  recognizing $L$.
Since $\sim$ is a congruence, if $u\sim v$ then for all $x,y\in \Sigma^\ast$, $xuy\sim xvy$. This means that if $u\sim v$, for all $x,y\in \Sigma^\ast$, $xuy\in L \Leftrightarrow xvy\in L$. Hence $u\approx_L v$ and $\sim$ is finer than $\approx_L$.
\end{prf}

\begin{xmp}
Figure~\ref{mon1}~(a) represents the multiplication table of the syntactic monoid of $L_{ends}$, with elements labelled as representatives of congruence classes (omitting the identity element).

Figure~\ref{mon1}~(b) shows the automaton associated with the syntactic congruence of $L_{ends}$, with the states labelled as representatives of congruence classes.

The automaton associated with the syntactic congruence of $L_{even}$ is simply the automaton of Figure~\ref{nfas}.
The syntactic monoid of $L_{even}$ is the group with two elements.
\begin{figure}[t]
        \centering
        \begin{subfigure}[b]{0.4\textwidth}
\centering
\begin{tabular}{c||cccc}
&$a$&$b$&$ab$&$ba$\\
\hline
\hline
$a$&$a$&$ab$&$ab$&$a$\\
\hline
$b$&$ba$&$b$&$b$&$ba$\\
\hline
$ab$&$a$&$ab$&$ab$&$a$\\
\hline
$ba$&$ba$&$b$&$b$&$ba$\\

\end{tabular}
\caption{Multiplication table of the syntactic monoid of $L_{ends}$.}
        \end{subfigure}
        ~
        \begin{subfigure}[b]{0.4\textwidth}
\centering
\begin{tikzpicture}
\node[blanc,minimum height=9mm] (e) at (0,0) {$\varepsilon$} ;
\node[blanc,minimum height=9mm] (a) at (2,1.5) {$a$} ;
\node[blanc,minimum height=9mm] (ab) at (4,1.5) {$ab$} ;
\node[blanc,minimum height=9mm] (b) at (2,-1.5) {$b$} ;
\node[blanc,minimum height=9mm] (ba) at (4,-1.5) {$ba$} ;
\draw[thick,->,>=stealth] (-1,0) --(e);
\draw[thick,<-,>=stealth] (2,0.5) --(a);
\node[above] at (3,2) {$b$};
\node[below] at (3,1) {$a$};
\node[above] at (3,-1) {$a$};
\node[below] at (3,-2) {$b$};

\draw[thick,->,>=stealth] (e)--(a) node[midway,above,sloped] {$a$};
\draw[thick,->,>=stealth] (e)--(b) node[midway,below,sloped] {$b$};
\draw[thick,->,>=stealth] (a) to[bend left] (ab);
\draw[thick,->,>=stealth] (b) to[bend left] (ba);
\draw[thick,->,>=stealth] (ab) to [bend left] (a);
\draw[thick,->,>=stealth] (ba) to[bend left] (b);
\draw[thick] (a) to[out=45,in=0](2,2.5) node[above] {$a$};
\draw[thick,->,>=stealth] (2,2.5) to[out=180,in=135] (a);
\draw[thick] (ab) to[out=45,in=90](5,1.5) node[right] {$b$};
\draw[thick,->,>=stealth] (5,1.5) to[out=-90,in=-45](ab);

\draw[thick] (b) to[out=-45,in=0](2,-2.5) node[below] {$b$};
\draw[thick,->,>=stealth] (2,-2.5) to[out=180,in=-135] (b);
\draw[thick] (ba) to[out=45,in=90](5,-1.5) node[right] {$a$};
\draw[thick,->,>=stealth] (5,-1.5) to[out=-90,in=-45](ba);

\end{tikzpicture}
\caption{Automaton associated with the syntactic congruence of $L_{ends}$.}
        \end{subfigure}
        \caption{}
        \label{mon1}
\end{figure}

\end{xmp}

\subsection{Transition congruences and monoids}
Let $A$ be an automaton, then we define the \emph{transition congruence} of $A$ as:
$$u\equiv_A v \Leftrightarrow \left( \forall p,q\in Q \quad u\in L_{p,q} \Leftrightarrow v\in L_{p,q} \right)$$
From this we can naturally define the \emph{transition monoid} of $A$,  $M(A)=\Sigma^\ast/\equiv_A$.

\begin{rmk}
Let $A$ be an automaton. Then $L(A)$ is recognizable by $\equiv_A$.
Indeed, if $u\equiv_A v$ and $u \in L(A)$ then $v\in L(A)$.
Therefore $L(A)$ is the union of congruence classes.
This also implies that $L(A)$ is recognizable by the monoid $M(A)$.
\end{rmk}

Let $A$ be a left (resp. right) deterministic automaton, then we define the \emph{right} (rep. \emph{left}) \emph{congruence associated with $A$} as:
$$u\sim_A v \Leftrightarrow \left( \forall p\in Q \quad u\in L_{q_0,p} \Leftrightarrow v\in L_{q_0,p} \right)$$
Conversely, let $\sim$ be a right (resp. left) congruence.
We can define a unique deterministic left (resp. right) automaton without final state associated with $\sim$ as:
$A_\sim=(\Sigma^\ast/\sim,[\varepsilon],\delta)$ with $\delta$ being the natural right (resp. left) action of $\Sigma$ on $\Sigma^\ast/\sim$.

\begin{rmk}
\label{rmk1}
Let $A$ be a left (resp. right) automaton. If we assume that $A$ is trim, \emph{i.e.} any state appears in at least one successful run of $A$, then $u\equiv_A v$ if and only if for all $w\in \Sigma^\ast$, $wu\sim_A wv$ (resp. $uw\sim_A vw$).

Let us also remark that in that case, $\equiv_A$ is the coarsest congruence to be finer than $\sim_A$.
\end{rmk}
For a transducer $T$, we will use the notations $M(T)$, and $\equiv_T$ when referring to the transition monoid of the underlying automaton of $T$.

\section{Monoid varieties}
Let $U$ be a finite set of variables and $s_1,s_2\in U^\ast$.
We say that a monoid $M$ satisfies the equality $s_1=s_2$ if for any morphism $\eta:U^\ast\rightarrow M$, we have $\eta(s_1)=\eta(s_2)$.
We say that a congruence $\sim$ satisfies the equality $s_1=s_2$ if for any morphism $\eta:U^\ast\rightarrow\Sigma^\ast$, we have $\eta(s_1)\sim \eta(s_2)$.

A \emph{monoid pseudovariety} (variety for short) is defined \cite{straubing96} as a set of finite monoids that is closed under finite direct product, the taking of submonoids (\emph{i.e.} a subset stable by multiplication containing the same identity element) and of homomorphic images.
We will use the Theorem given in \cite{eilenbergs76} to characterize monoid varieties.
Let $E$ be a countable set of equalities.
Then we say that the set of finite monoids satisfying all but finitely many equalities in $E$ is the monoid variety \emph{ultimately defined} by $E$.
The Theorem states that any monoid variety is ultimately definable.
In the following we will use this characterization as a definition.

Let \textbf V be a monoid variety, we say that an automaton (resp. a transducer) is a \emph{\textbf V-automaton} (resp. \emph{\textbf V-transducer}) if its transition monoid is in \textbf V.
A language is called a \emph{\textbf V-language} if there exists a \textbf V-automaton recognizing it.
A congruence is called a \emph{\textbf V-congruence} if the monoid associated with it is in \textbf V.

\begin{prp}
\label{v-congruence}
Let \textbf V be a monoid variety.
\begin{itemize}
\item Let $\sim$ be a congruence. Then $\sim$ is a \textbf V-congruence if and only if it satisfies all but finitely many equalities defining \textbf V.

\item Let $\sim_1$, $\sim_2$ be two congruences such that $\sim_1\sqsubseteq \sim_2$. If $\sim_1$ is a \textbf V-congruence, then so is $\sim_2$.

\end{itemize}
\end{prp}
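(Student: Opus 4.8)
The plan is to reduce both items to a single bridging observation relating the two senses of ``satisfies an equality''. Fix a pair $s_1,s_2\in U^\ast$ and set $M=\Sigma^\ast/\sim$, with $\pi:\Sigma^\ast\rightarrow M$ the canonical surjection $u\mapsto[u]$. I claim that $M$ satisfies $s_1=s_2$ \emph{as a monoid} if and only if $\sim$ satisfies $s_1=s_2$ \emph{as a congruence}, and I would prove this by moving morphisms across $\pi$. For the direction from the congruence to the monoid, given any $\eta:U^\ast\rightarrow M$ I choose for each variable a $\pi$-preimage of its image and extend freely to a morphism $\tilde\eta:U^\ast\rightarrow\Sigma^\ast$ with $\pi\circ\tilde\eta=\eta$; then $\tilde\eta(s_1)\sim\tilde\eta(s_2)$ gives $\eta(s_1)=\pi(\tilde\eta(s_1))=\pi(\tilde\eta(s_2))=\eta(s_2)$. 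For the converse, given any $\eta:U^\ast\rightarrow\Sigma^\ast$, the composite $\pi\circ\eta:U^\ast\rightarrow M$ is a morphism, so the monoid equality yields $\pi(\eta(s_1))=\pi(\eta(s_2))$, which is exactly $\eta(s_1)\sim\eta(s_2)$.

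With this equivalence in hand, the first item follows from the definition of \textbf{V} as ultimately defined by a set $E$ of equalities. Here $\sim$ being a \textbf{V}-congruence means precisely that $M$ is finite and lies in \textbf{V}, and by ultimate definability $M\in\textbf{V}$ exactly when $M$ satisfies all but finitely many equalities of $E$. Applying the bridging observation equality by equality transfers this directly to the statement that $\sim$ satisfies all but finitely many equalities of $E$. I would note once that a \textbf{V}-congruence automatically has finite index, so the finiteness hypothesis needed for the monoid side is carried along for free.

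For the second item I would give the short homomorphic-image argument. Since $\sim_1\sqsubseteq\sim_2$, the assignment $[u]_{\sim_1}\mapsto[u]_{\sim_2}$ is well defined (if $u\sim_1 v$ then $u\sim_2 v$) and is a surjective monoid morphism $M_1\rightarrow M_2$, so $M_2$ is a homomorphic image of $M_1$. As \textbf{V} is closed under homomorphic images, $M_1\in\textbf{V}$ forces $M_2\in\textbf{V}$. Equivalently, invoking the first item, every equality satisfied by $\sim_1$ is satisfied by $\sim_2$, because $\eta(s_1)\sim_1\eta(s_2)$ entails $\eta(s_1)\sim_2\eta(s_2)$ by finer-ness, so $\sim_2$ inherits the ``all but finitely many'' property.

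The only delicate point is the lifting step inside the bridging observation: one must be sure that every morphism $U^\ast\rightarrow M$ lifts along $\pi$ to a morphism $U^\ast\rightarrow\Sigma^\ast$. This uses surjectivity of $\pi$ together with the freeness of $U^\ast$, so that arbitrary preimage choices on the generators extend uniquely to a morphism. Once this is settled, the remainder is routine bookkeeping with the two finite exceptional sets of equalities.
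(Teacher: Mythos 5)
Your proof is correct, and on the first item it is essentially the paper's own argument: your bridging observation (that $\Sigma^\ast/\sim$ satisfies $s_1=s_2$ as a monoid iff $\sim$ satisfies it as a congruence) is proved in the paper by exactly the same two steps --- composing a word morphism with the natural projection in one direction, and lifting a morphism $U^\ast\rightarrow M$ through the surjection by choosing preimages of generators in the other. You genuinely diverge on the second item: your primary argument is structural, exhibiting $\Sigma^\ast/\sim_2$ as a homomorphic image of $\Sigma^\ast/\sim_1$ via the well-defined surjection $[u]_{\sim_1}\mapsto[u]_{\sim_2}$ and invoking closure of \textbf{V} under homomorphic images, whereas the paper stays entirely inside the equational framework and just observes that every equality satisfied by $\sim_1$ passes to the coarser $\sim_2$ --- the argument you relegate to ``equivalently''. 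Both are sound. The paper's route is self-contained given that it adopts ultimate definability as its working definition of a variety; yours instead leans on the closure-properties definition (which the paper states before setting it aside), and has the merit of not needing the Eilenberg--Sch\"utzenberger characterization at all for this item. One caveat you share with the paper: in the converse of the first item, passing from ``satisfies all but finitely many equalities'' to membership in \textbf{V} tacitly assumes that $\Sigma^\ast/\sim$ is finite, since pseudovarieties contain only finite monoids; your finite-index remark covers the forward direction only, but as the paper glosses the same point, this is not a gap relative to it.
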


\begin{prf}
Let $U$ be a finite set of variables.
Let $M=\Sigma^\ast/\sim$ and let $\mu: \Sigma^\ast \rightarrow M$ be the natural morphism.
Let us assume that $\sim$ is a \textbf V-congruence.
Let $\eta:U^\ast\rightarrow \Sigma^\ast$ be a morphism.
Since $M\in \mathbf V$, we have by definition that if $M$ satisfies $s_1=s_2$ then $\mu\circ\eta(s_1)=\mu\circ\eta(s_2)$ which means that $\eta(s_1)\sim\eta(s_2)$.
Conversely, let us assume that $\sim$ satisfies all but finitely many equalities of \textbf V.
Let $\eta:U^\ast\rightarrow M$ be a morphism.
Let us show there exists a morphism $\nu:U^\ast \rightarrow \Sigma^\ast$ such that $\eta=\mu\circ\nu$.
We only have to give the image of any $u\in U$ to describe a morphism.
We define $\nu(u)=w_u$ with $w_u\in \mu^{-1}(\eta(u))$ (which is not empty since $\mu$ is surjective).
This yields a morphism which verifies $\eta=\mu\circ\nu$.
Hence we have that if $\sim$ satisfies  $s_1=s_2$ then $\nu(s_1)\sim\nu(s_2)$ which means that $\eta(s_1)=\eta(s_2)$.

Let $s_1,s_2\in U^\ast$ such that $\sim_1$ satisfies $s_1=s_2$,
\emph{i.e.} for any morphism $\eta: U^\ast \rightarrow \Sigma^\ast$, $\eta(s_1)\sim_1\eta(s_2)$.
In particular, $\eta(s_1)\sim_2\eta(s_2)$, which means that $\sim_2$ satisfies $s_1=s_2$.
\end{prf}

\begin{rmk}
Some of the most commonly used varieties are shown in Figure~\ref{table}.
A particular case is the variety of aperiodic monoids, denoted by \textbf A, which has been shown to recognize the languages definable in first-order logic \cite{schutzenberger65,mcnaughtonp71}.

For a given variety \textbf V, the \emph{membership problem}, \emph{i.e.} deciding if some monoid belongs to \textbf V, is not decidable \textit{a priori}.
For a discussion on this problem see \cite{pin97}.
However, if the membership problem is decidable, then the problem of deciding if a language is a \textbf V-language is also decidable. Indeed, according to Remark~\ref{rmk2} and Proposition~\ref{v-congruence}, one only has to check that the syntactic monoid of the language belongs to \textbf V.

\end{rmk}

\section{Logics}

Given an alphabet $\Sigma$, \emph{monadic second-order formulas} (MSO formulas) are built over first order variables $x,y,\ldots$ and second order variables $X,Y,\ldots$. Atomic formulas are built using the binary predicate $<$ and for each $\sigma \in \Sigma$ a unary predicate, also denoted by $\sigma$. The formulas are defined by the following grammar:

$$\varphi \mathrel{::=}\exists X\ \varphi\mid \exists x\ \varphi\mid \varphi\wedge \varphi \mid \neg \varphi\mid x\in X \mid \sigma(x) \mid x<y\mid (\varphi)$$
Universal quantifiers and other boolean connectives are defined as usual:
$\forall X\ \varphi \equiv \neg \exists X\ \neg \varphi$, $\forall x\ \varphi \equiv \neg \exists x\ \neg \varphi$, $\varphi_1\vee\varphi_2\equiv\neg (\neg \varphi_1 \wedge \neg \varphi_2)$,
$\varphi_1\rightarrow\varphi_2\equiv\neg \varphi_1 \vee \varphi_2$.
We also define the constant formulas $\top$ and $\bot$ which are respectively always and never satisfied.

We do not define the semantics of MSO formulas, neither the standard notion of free and bound variables, but rather give examples and refer the reader to \cite{ebbinghausF95} for formal definitions.

Given a closed formula $\varphi$, and a word $w$ satisfying $\varphi$, we write $w\models \varphi$ and $L(\varphi)$ denotes the language of words satisfying $\varphi$.

A \emph{logical fragment} of MSO (logic for short) $\mathcal F$ is a subset of formulas of MSO. Given a language $L$, we say that $L$ is $\mathcal F$-definable (or an $\mathcal F$-language) is there exists a closed formula $\varphi\in \mathcal F$ such that $L=L(\varphi)$.

\begin{xmp}
We define the successor relation as:
$$\mathsf{succ}(x,y)=(x<y)\wedge (\neg\exists z\ (x<z)\wedge(z<y))$$
We define the minimum and maximum unary predicates:
$$\mathsf{min}(x)=\neg \exists y\ (y<x)$$
$$\mathsf{max}(x)=\neg \exists y\ (x<y)$$
Then we define the MSO formula $\varphi_{even}$ which defines the language of words of even length:
$$\begin{array}{rl}\varphi_{even}=\exists X\ \exists Y\ \forall x\ \forall y\ & x\in  X\leftrightarrow \neg(x\in Y)\\
\wedge&\mathsf{min}(x)\rightarrow x\in X\\
\wedge& \mathsf{max}(y)\rightarrow y\in Y\\
\wedge& \mathsf{succ}(x,y)\rightarrow (x\in X\wedge y\in Y)\vee (x\in Y\wedge y\in X)
\end{array}$$
\end{xmp}
We give below several logics and refer to Figure~\ref{table} to see the languages defined by these logics.

\emph{First-order} formulas (FO) denotes the logic of formulas without any second order variables.
\begin{xmp}
Let $\Sigma=\left\{a,b\right\}$.
We define the formula $\varphi_{ends}$ which defines the language $L_{ends}$:
$$\varphi_{ends}=\exists x\ \exists y\ \mathsf{min}(x)\wedge a(x) \wedge\mathsf{max}(y)\wedge a(y)$$
\end{xmp}

\emph{MSO$[=]$} denotes the logic of formulas without the binary predicate $<$.

Let $k$ be a non-negative integer, then \emph{FO$^k$} denotes the logic of FO formulas with only $k$ variables.

Let \textbf V be a monoid variety and let $\mathcal F$ be a logic.
We say that $\mathcal F$ is a logic \emph{corresponding to} \textbf V if for any language $L$ we have:
$L$ is an $\mathcal F$-language if and only if $L$ is a \textbf V-language.

\chapter{Logic-bimachine correspondence}

\section{Bimachines}

Here we will describe the notion of a bimachine, introduced by Sch\"utzenberger in \cite{schutzenberger61}, which is in some sense both left and right sequential (or subsequential).
Bimachines have been shown, see \emph{e.g.} \cite{berstelb79}, to define exactly rational functions and are in that sense equivalent to functional NFTs.

A bimachine is given by a tuple $B=(L,R,\omega,\lambda,\rho)$ where:
\begin{itemize}
\item $L$ is the set of states of a deterministic left automaton given with an initial state $l_0$.
\item $R$ is the set of states of a deterministic right automaton given with an initial state $r_0$.
\item The output function $\omega:L\times \Sigma \times R\rightarrow \Sigma^\ast$ which may be partial.
\item The terminal left function $\rho:L\rightarrow \Sigma^\ast$ and the terminal right function $\lambda:R\rightarrow \Sigma^\ast$ which may both be partial.
\end{itemize}

The output function is extended to words in this fashion:
Let $u=\sigma_1\ldots\sigma_n \in \Sigma^n$, let $l\xrightarrow{\sigma_1}l_1\cdots l_{n-1}\xrightarrow{\sigma_n} l_n$ be an execution of $L$ on $u$ from $l$ and let $r_n\xleftarrow{\sigma_1}r_{n-1}\cdots r_1\xleftarrow{\sigma_n} r$ be an execution of $R$ on $u$ from $r$. Then, if it is defined:
$$\omega(l,u,r)=\omega(l,\sigma_1,r_{n-1})\cdots\omega(l_{n-1},\sigma_n,r)$$
If $l=l_0$ and $r=r_0$, when it exists the image of $u$ by $B$ is defined as:
$$\llbracket B \rrbracket(u)=\lambda(r_n)\omega(l_0,u,r_0)\rho(l_n)$$.

\begin{rmk}
In order to simplify, we will write $L$ (resp. $R$) when referring both to the set of states and the automaton.
\end{rmk}
\begin{rmk}
When $R$ is reduced to a single element then the bimachine is simply a subsequential transducer (sequential if $\rho$ is a constant function equal to $\varepsilon$ on its domain).
\end{rmk}

\begin{xmp}
\label{xmp-bim}
We describe a bimachine $B=(L,R,\omega,\lambda,\rho)$ defining $f_{ends}$.
The automata $L=\left\{l_0,l_a,l_b\right\}$ and $R=\left\{r_0,r_a,r_b\right\}$ are identical (except that one is a left automaton and the other is a right automaton) and $L$ is shown in Figure~\ref{bim1}. The functions $\lambda$ and $\rho$ are constant and equal to $\varepsilon$,
$\omega(l,\sigma,r)=a$ if $l\in\left\{l_0,l_a\right\}$, $\sigma=a$ and $r\in\left\{r_0,r_a\right\}$ or if $l=l_a$, $\sigma=b$ and $r=r_a$. In the other cases $\omega(l,\sigma,r)=\varepsilon$.
In Figure~\ref{exe} we show the execution of this bimachine on the word $abaa$.

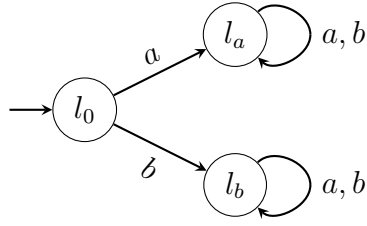
\begin{figure}[t]
\centering
\begin{tikzpicture}
\node[blanc] (l0) at (0,0) {$l_0$} ;
\node[blanc] (l1) at (2,1) {$l_a$} ;
\node[blanc] (l2) at (2,-1) {$l_b$} ;
\draw[thick,->,>=stealth] (-1,0) --(l0);
\draw[thick,->,>=stealth] (l0)--(l1) node[midway,above,sloped] {$a$};
\draw[thick,->,>=stealth] (l0)--(l2) node[midway,below,sloped] {$b$};

\draw[thick] (l1) to[out=45,in=90](3,1) node[right] {$a,b$};
\draw[thick,->,>=stealth] (3,1) to[out=-90,in=-45](l1);
\draw[thick] (l2) to[out=45,in=90](3,-1) node[right] {$a,b$};
\draw[thick,->,>=stealth] (3,-1) to[out=-90,in=-45](l2);

\end{tikzpicture}
\caption{Left automaton of $B$.}
\label{bim1}
\end{figure}

\begin{figure}[t]
\centering
\begin{tikzpicture}
\node[minimum size=35pt] (l0) at (0,0) {$l_0$} ;
\node[minimum size=35pt] (l1) at (2,0) {$l_a$} ;
\node[minimum size=35pt] (l2) at (4,0) {$l_a$} ;
\node[minimum size=35pt] (l3) at (6,0) {$l_a$} ;
\node[minimum size=35pt] (l4) at (8,0) {$l_a$} ;

\node (u1) at (1,-1) {$a$} ;
\node (u2) at (3,-1) {$b$} ;
\node (u3) at (5,-1) {$a$} ;
\node (u4) at (7,-1) {$a$} ;

\node[minimum size=35pt] (r0) at (8,-2) {$r_0$} ;
\node[minimum size=35pt] (r1) at (6,-2) {$r_a$} ;
\node[minimum size=35pt] (r2) at (4,-2) {$r_a$} ;
\node[minimum size=35pt] (r3) at (2,-2) {$r_a$} ;
\node[minimum size=35pt] (r4) at (0,-2) {$r_a$} ;

\node (v0) at (0,-3) {$\varepsilon$} ;
\node (v1) at (2,-3) {$a$} ;
\node (v2) at (4,-3) {$a$} ;
\node (v3) at (6,-3) {$a$} ;
\node (v4) at (8,-3) {$a$} ;
\node (v5) at (10,-3) {$\varepsilon$} ;

\draw[thick,->,>=stealth] (l0) --(l1);
\draw[thick,->,>=stealth] (l1) --(l2);
\draw[thick,->,>=stealth] (l2) --(l3);
\draw[thick,->,>=stealth] (l3) --(l4);

\draw[thick,->,>=stealth] (r0) --(r1);
\draw[thick,->,>=stealth] (r1) --(r2);
\draw[thick,->,>=stealth] (r2) --(r3);
\draw[thick,->,>=stealth] (r3) --(r4);

\draw[rotate=45] (-0.3,0.3) rectangle (0.3,-3.3);
\draw[xshift=2cm,rotate=45] (-0.3,0.3) rectangle (0.3,-3.3);
\draw[xshift=4cm,rotate=45] (-0.3,0.3) rectangle (0.3,-3.3);
\draw[xshift=6cm,rotate=45] (-0.3,0.3) rectangle (0.3,-3.3);

\draw (11,1) -- (11,-4);

\node at (13,0) {Run of $L$};
\node at (13,-1) {Input};
\node at (13,-2) {Run of $R$};
\node at (13,-3) {Output};

\end{tikzpicture}
\caption{Execution of a bimachine}
\label{exe}
\end{figure}
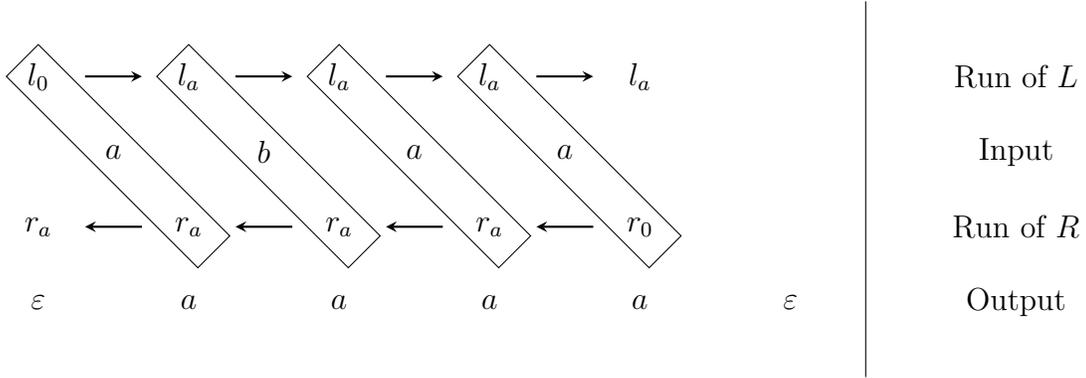
\end{xmp}
A bimachine is called \emph{complete} if its output function is total.
\begin{prp}
\label{complete}
Any bimachine is equivalent to some complete bimachine.
\end{prp}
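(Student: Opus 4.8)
The plan is to keep the left automaton $L$ and the terminal function $\rho$ unchanged, to leave every already-defined value of $\omega$ as it is, and to set the remaining (undefined) values of the output function to $\varepsilon$, so that it becomes total. The only thing that can go wrong is that this enlarges the domain: a word $u=\sigma_1\cdots\sigma_n$ may fail to lie in $\mathrm{dom}(\llbracket B\rrbracket)$ precisely because some factor $\omega(l_{i-1},\sigma_i,r_{n-i})$ occurring along its run is undefined, and filling that entry with $\varepsilon$ would wrongly turn $u$ into an input with an image. The idea is therefore to enrich the right automaton so that its final state, reached after reading all of $u$, records whether every factor along the run was defined, and then to use the right terminal function to suppress the output in the bad case.

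First I would record that $u\in\mathrm{dom}(\llbracket B\rrbracket)$ holds exactly when three conditions are met: $\rho(l_n)$ is defined, $\lambda(r_n)$ is defined, and $\omega(l_{i-1},\sigma_i,r_{n-i})$ is defined for every position $i$. The first two conditions depend only on the final states $l_n$ and $r_n$, so they are already handled by the terminal functions; only the third, which depends on the whole run rather than on a single endpoint, is an obstacle to completing the output function.

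Writing $\delta_L$ and $\delta_R$ for the transition functions of the two deterministic automata, I would next define a refined right automaton $R'=R\times\{0,1\}^{L}$. A state is a pair $(r,g)$ with $r\in R$ and $g\colon L\to\{0,1\}$, the intended meaning being that $g(\ell)=1$ iff processing the suffix read so far, started from left state $\ell$, produces a defined factor at every position. Its initial state is $(r_0,g_\varepsilon)$ with $g_\varepsilon\equiv 1$, and reading a letter $\sigma$ (from right to left) sends $(r,g)$ to $(\delta_R(r,\sigma),g')$ where
\begin{equation*}
g'(\ell)=
\begin{cases}
1 & \text{if } \omega(\ell,\sigma,r) \text{ is defined and } g(\delta_L(\ell,\sigma))=1,\\
0 & \text{otherwise.}
\end{cases}
\end{equation*}
A routine induction on $|u|$ then shows that after reading $u$ the automaton $R'$ is in a state $(r_n,g)$ for which $g(\ell)=1$ iff all factors along the run started from $\ell$ are defined; in particular $g(l_0)=1$ iff the third condition above holds.

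Finally I would set $B'=(L,R',\omega',\lambda',\rho)$, where $\omega'(l,\sigma,(r,g))$ equals $\omega(l,\sigma,r)$ when the latter is defined and $\varepsilon$ otherwise, $\lambda'(r,g)$ equals $\lambda(r)$ if $g(l_0)=1$ and is undefined otherwise, and $\rho$ is unchanged. By construction $\omega'$ is total, so $B'$ is complete. On an input in $\mathrm{dom}(\llbracket B\rrbracket)$ all three conditions hold, hence $\omega'$ agrees with $\omega$ at every position of the run and the terminal functions coincide, so $B'$ returns the same word as $B$; on an input outside the domain at least one of the three conditions fails, and in each case the corresponding terminal function, or the flag $g(l_0)=0$ feeding $\lambda'$, leaves the image undefined. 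Thus $\llbracket B'\rrbracket=\llbracket B\rrbracket$. The conceptual heart, and the step I expect to be the main obstacle, is exactly the subset-style construction on the right automaton: the partiality of $\omega$ encodes domain information spread over the whole run, which cannot be relocated to an endpoint of the original automata, and the trick is to let the right automaton carry the bit ``all factors so far are defined'' for every candidate left state at once, so that the single relevant value at $l_0$ can be read off at the end.
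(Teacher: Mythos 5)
Your construction is correct, but it follows a genuinely different route from the paper's. The paper also pads the undefined entries of $\omega$ with $\varepsilon$, but it disposes of the domain problem by an appeal to rationality: since the domain of a bimachine-definable function is a rational language, it picks an automaton $A$ recognizing $\mathrm{dom}(f)$, refines the \emph{left} automaton to $L\times A$, and lets $\rho'(l,q)$ be defined only when $q$ is final in $A$, so that out-of-domain words are rejected by the left terminal function. You instead refine the \emph{right} automaton to $R\times\{0,1\}^{L}$, making it compute on the fly, for every candidate left state $\ell$, whether all output factors along the run over the suffix read so far are defined, and you reject out-of-domain words through $\lambda'$. Each approach buys something: the paper's proof is shorter and produces a machine whose size is governed by a DFA for the domain, but it leans on the external fact that $\mathrm{dom}(\llbracket B\rrbracket)$ is rational (and implicitly needs $A$ deterministic so that $L\times A$ is again a deterministic left automaton); your proof is self-contained, since the flag table is in effect a direct proof that the property ``every output factor along the run is defined'' is recognizable by a right automaton, at the price of a $2^{|L|}$ blow-up and of assuming $\delta_L,\delta_R$ total (harmless under the paper's conventions, and easily patched otherwise by setting the flag to $0$ when $\delta_L(\ell,\sigma)$ is undefined). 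Your invariant on $g$ is the right one, your case analysis of the three definedness conditions matches the semantics exactly, and both constructions preserve the domain verbatim.
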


\begin{prf}
The idea of the proof is to extend the output function by outputting $\varepsilon$ when it is not defined. To ensure that the domain of the function is preserved, we refine the left (for instance) automaton to only accept words in the domain.

Let $B=(L,R,\omega,\lambda,\rho)$ be a bimachine defining a function $f$. We construct $B'=(L',R,\omega',\lambda,\rho')$ a complete bimachine equivalent to $B$.
Let $A$ be an automaton recognizing $\mathrm{dom}(f)$, which is rational.
We define $L'=L\times A$.
Let $(l,q)$ be a state of $L'$ and let $r\in R$. Then $\omega'((l,q),\sigma,r)=\omega(l,\sigma,r)$ if it is defined and $\omega'((l,q),\sigma,r)=\varepsilon$ otherwise.
If $q$ is a final state of $A$ then $\rho'(l,q)=\rho(l)$, otherwise $\rho'(l,q)$ is not defined.

If a word belongs to $\mathrm{dom}(f)$ then the outputs in the run of $B'$ are the same as the ones in $B$. If a word does not belong to $\mathrm{dom}(f)$ then it is not accepted by $L'$.
Hence $B$ and $B'$ are equivalent.
\end{prf}

\subsection{From bimachines to transducers}
\label{totrans}
Let $B=(L,R,\omega,\lambda,\rho)$ be a bimachine defining $f$, we can construct an NFT $T=(Q,I,F,\Delta,i,t)$ defining the same function:

\begin{itemize}
\item $Q=L\times R$
\item $I=\left\{l_0 \right\}\times \mathrm{dom}(\lambda)$.
\item $F=\mathrm{dom}(\rho) \times \left\{r_0\right\}$.
\item $\Delta((l_1,r_1),\sigma,(l_2,r_2))=v$\\ if
$l_1\xrightarrow{\sigma}l_2$, $r_1\xleftarrow{\sigma}r_2$ and $\omega(l_1,\sigma,r_2)=v$.
\item $i:(l_0,r) \mapsto \lambda(r)$ whenever $(l_0,r)\in I$.
\item $t:(l,r_0) \mapsto \rho(l)$ whenever $(l,r_0)\in F$.

\end{itemize}

\subsection{From transducers to bimachines}
\label{tobim}

An NFA $A=\left(Q,I,F,\Delta \right)$ is called \emph{unambiguous} is for any word $u=\sigma_1\ldots \sigma_n\in \Sigma^n$ there exists at most one sequence $q_0\xrightarrow{\sigma_1}q_1\cdots q_{n-1}\xrightarrow{\sigma_n}q_n$ with $q_0\in I$ and $q_n\in F$.
An NFT is called \emph{unambiguous} if its underlying automaton is unambiguous.
It has been shown that any functional NFT is equivalent to some unambiguous NFT \cite{berstelb79}.

Let $T=(Q,I,F,\Delta,i,t)$ be an unambiguous NFT,
we can naturally describe an equivalent bimachine $B=(L,R,\omega,\lambda,\rho)$ with:

\begin{itemize}
\item $L=\Sigma^\ast / \equiv_T$ with the right action of $\Sigma$ as its transition function: $[u]\xrightarrow \sigma [u\sigma]$ and $l_0=[\varepsilon]$.
\item $R=\Sigma^\ast / \equiv_T$ with the left action of $\Sigma$ as its transition function: $[\sigma u]\xleftarrow \sigma [u]$ and $r_0=[\varepsilon]$.
\item $\omega(l,\sigma,r)$ is defined as the unique (by unambiguity) $v$ such that
$\exists u_l\in l, u_r\in r$ and $p,q\in Q$ with $u_l\in L_{I,p}$, $u_r\in L_{q,F}$ and $\Delta(p,\sigma,q)=v$.
\item $\Delta((l_1,r_1),\sigma,(l_2,r_2))=v$ if $l_1\xrightarrow{\sigma}l_2$, $r_1\xleftarrow{\sigma}r_2$ and $\omega(l_1,\sigma,r_2)=v$.
\item $\lambda: \begin{array}{c} R\rightarrow \Sigma^\ast \\ \left[ u\right] \mapsto i(p) \end{array}$ where $p$ is the unique (by unambiguity) $p\in I$ such that $u\in L_{p,F}$.
\item $\rho: \begin{array}{c} R\rightarrow \Sigma^\ast \\ \left[ u\right] \mapsto t(p) \end{array}$ where $p$ is the unique (by unambiguity) $p\in F$ such that $u\in L_{I,p}$.

\end{itemize}

\section{Logical transduction}

In the paper \cite{filiot15} an equivalence is given between NFTs and so called \emph{order-preserving} logical transducers.
The following definition coincides with this class of order-preserving logical transductions, in the logics of MSO$[<]$ and FO$[<]$.
It is also somewhat similar to the notion of logical translation given by \cite{mckenzieSTV06} in the particular case of length-preserving transductions.

Let $\mathcal F$ be a logic.
An $\mathcal F$-translation is given by a tuple
$$\mathcal T=\left(k,S,\left(\varphi_{j,\sigma,v}^<,\varphi_{j,\sigma,v}^>\right)_{\begin{smallmatrix} 1\leq j\leq k\\ \sigma \in \Sigma, v\in S \end{smallmatrix}},\left(\varphi_v^i\right)_{v\in S},\left(\varphi_v^t\right)_{v\in S}\right)$$
$\mathcal T$ is composed of $k>0$ and $S\subset \Sigma^\ast$ finite, the set of outputs. For $1\leq j\leq k$, $\sigma\in\Sigma$ and $v\in S$, $\varphi_{j,\sigma,v}^<$, $\varphi_{j,\sigma,v}^>$, $\varphi_v^i$ and $\varphi_v^t$ are closed $\mathcal F$-formulas.

By definition $(u,w)\in\llbracket \mathcal T \rrbracket$ if:
\begin{quote}
for $u=\sigma_1\ldots \sigma_n\in \Sigma^n$ there exists a decomposition $w=w_0w_1\ldots w_nw_{n+1}\in S^{n+2}$ such that:
\begin{itemize}
\item for $i \in \left\{1,\ldots,n \right\}$ there exists $j$, with $1\leq j\leq k$ such that we have both:
$\sigma_1\ldots \sigma_{i-1}\models \varphi_{j,\sigma_i,w_i}^<$ and 
$\sigma_{i+1}\ldots \sigma_n\models \varphi_{j,\sigma_i,w_i}^>$.
\item $u\models \varphi_{w_0}^i$ and $u\models \varphi_{w_{n+1}}^t$.
\end{itemize}

\end{quote}

We want to ensure that the domain of the translation is an $\mathcal F$-language.
For this we add the following condition of \emph{exhaustiveness}:
\begin{quote}
For any $\sigma\in \Sigma$, $u,w\in \Sigma^\ast$, there exists a pair $\left(\varphi_{j,\sigma,v}^<,\varphi_{j,\sigma,v}^>\right)$ with $j\leq k$ and $v\in S$ such that $u\models \varphi_{j,\sigma,v}^<$ and $v\models \varphi_{j,\sigma,v}^>$.

\end{quote}

\begin{rmk}
An $\mathcal F$-translation $\mathcal T=\left(k,S,\left(\varphi_{j,\sigma,v}^<,\varphi_{j,\sigma,v}^>\right)_{\begin{smallmatrix} 1\leq j\leq k\\ \sigma \in \Sigma, v\in S \end{smallmatrix}},\left(\varphi_v^i\right)_{v\in S},\left(\varphi_v^t\right)_{v\in S}\right)$ might not define a function but a relation.
In order to ensure that it is functional, we have to add some conditions:
\begin{itemize}
\item Let $j,j'\leq k,\sigma\in \Sigma, v_1\neq v_2\in S$. Either $\varphi_{j,\sigma,v_1}^< \wedge \varphi_{j',\sigma,v_2}^< $ is unsatisfiable or $\varphi_{j,\sigma,v_1}^> \wedge \varphi_{j',\sigma,v_2}^> $ is unsatisfiable.
\item Let $v_1\neq v_2\in S$, then $\varphi_{v_1}^i \wedge \varphi_{v_2}^i$ is unsatisfiable.
\item Let $v_1\neq v_2\in S$, then $\varphi_{v_1}^t \wedge \varphi_{v_2}^t$ is unsatisfiable.
\end{itemize}
In the following we will always assume that a translation satisfies these conditions.
\end{rmk}

\begin{xmp}
We give a translation defining $f_{ends}$.
Let $\Sigma=\left\{a,b\right\}$ and $\mathcal T=\left(k,S,\left(\varphi_{j,\sigma,v}^<,\varphi_{j,\sigma,v}^>\right)_{\begin{smallmatrix} 1\leq j\leq k\\ \sigma \in \Sigma, v\in S \end{smallmatrix}},\left(\varphi_v^i\right)_{v\in S},\left(\varphi_v^t\right)_{v\in S}\right)$ with $k=2$ and $S=\left\{\varepsilon,a\right\}$.
\begin{itemize}
\item $\varphi_{1,a,\varepsilon}^<=\exists x\ \mathsf{min}(x) \wedge b(x)$ and $\varphi_{1,a,\varepsilon}^>=\top$.
\item $\varphi_{2,a,\varepsilon}^<=\top$ and $\varphi_{2,a,\varepsilon}^>=\exists x\ \mathsf{max}(x) \wedge b(x)$.
\item $\varphi_{1,a,a}^<=\neg \exists x\ \mathsf{min}(x) \wedge b(x)$ and $\varphi_{1,a,a}^>=\neg \exists x\ \mathsf{max}(x) \wedge b(x)$
\item $\varphi_{1,b,\varepsilon}^<=\neg\exists x\ \mathsf{min}(x) \wedge a(x)$ and $\varphi_{1,b,\varepsilon}^>=\top$.
\item $\varphi_{2,b,\varepsilon}^<=\top$ and $\varphi_{2,b,\varepsilon}^>=\neg\exists x\ \mathsf{max}(x) \wedge a(x)$.
\item $\varphi_{1,b,a}^<=\exists x\ \mathsf{min}(x) \wedge a(x)$ and $\varphi_{1,b,a}^>=\exists x\ \mathsf{max}(x) \wedge a(x)$
\item $\varphi_\varepsilon^i=\top$ and $\varphi_\varepsilon^t=\top$.
\item All the other formulas are set to $\bot$.
\end{itemize}

\end{xmp}

\subsection{From machines to logics}

\begin{prp}
\label{prpbt}
Let \textbf V be a monoid variety and let $\mathcal F$ be a corresponding logic.
Let $f$ be a function definable by a complete \textbf V-bimachine. Then $f$ is definable by an $\mathcal F$-translation.
\end{prp}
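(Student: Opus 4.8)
The plan is to simulate the bimachine's per-position computation by $\mathcal F$-formulas, exploiting that both automata of a $\mathbf V$-bimachine are $\mathbf V$-automata, so that the languages describing their runs are $\mathbf V$-languages and hence $\mathcal F$-definable by the correspondence. Fix a complete $\mathbf V$-bimachine $B=(L,R,\omega,\lambda,\rho)$ defining $f$ and write $\delta_L(u)\in L$ for the state reached by $L$ from $l_0$ on $u$, and $\delta_R(u)\in R$ for the state reached by $R$ from $r_0$ reading $u$ right-to-left. For $u=\sigma_1\ldots\sigma_n$ and position $i$, the letter $\sigma_i$ is emitted as $\omega(\delta_L(\sigma_1\ldots\sigma_{i-1}),\sigma_i,\delta_R(\sigma_{i+1}\ldots\sigma_n))$, while the left and right terminal factors are $\rho(\delta_L(u))$ and $\lambda(\delta_R(u))$. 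I would take $S$ to be the (finite) set of words occurring in the images of $\omega$, $\lambda$ and $\rho$, and $k=|L|$, identifying each index $1\le j\le k$ with a left state $l_j$.

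The heart of the construction is a family of closed $\mathcal F$-formulas. For $l\in L$ let $\psi^<_l$ be an $\mathcal F$-formula defining $L_{l_0,l}(L)=\{u\mid \delta_L(u)=l\}$, and for $r\in R$ let $\psi^>_r$ be an $\mathcal F$-formula defining $\{u\mid \delta_R(u)=r\}$. These exist because each such language is recognized by a $\mathbf V$-automaton (namely $L$, resp.\ $R$, with a single final state), hence is a $\mathbf V$-language, hence $\mathcal F$-definable since $\mathcal F$ corresponds to $\mathbf V$. I then set, for $l\in L$, $\sigma\in\Sigma$ and $v\in S$,
$$\varphi^<_{l,\sigma,v}=\psi^<_l,\qquad \varphi^>_{l,\sigma,v}=\bigvee_{r:\,\omega(l,\sigma,r)=v}\psi^>_r,$$
and for the terminal parts
$$\varphi^i_v=\bigvee_{r:\,\lambda(r)=v}\psi^>_r,\qquad \varphi^t_v=\bigvee_{l:\,\rho(l)=v}\psi^<_l,$$
with empty disjunctions read as $\bot$. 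The index $j=l$ pins down the left state through $\varphi^<$, after which $\varphi^>$ tests which compatible right state the suffix reaches; the terminal formulas read $\delta_R(u)$ and $\delta_L(u)$ on the whole word to recover $\lambda$ and $\rho$, the partiality of $\lambda,\rho$ cutting out exactly the words outside $\mathrm{dom}(f)$.

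To verify correctness, given $u=\sigma_1\ldots\sigma_n$ set $l_{i-1}=\delta_L(\sigma_1\ldots\sigma_{i-1})$, $r_{n-i}=\delta_R(\sigma_{i+1}\ldots\sigma_n)$ and $w_i=\omega(l_{i-1},\sigma_i,r_{n-i})$; then $j=l_{i-1}$ witnesses the position-$i$ clause, and $w_0=\lambda(\delta_R(u))$, $w_{n+1}=\rho(\delta_L(u))$ witness the terminal clauses, so that $\llbracket\mathcal T\rrbracket(u)=w_0w_1\ldots w_nw_{n+1}=\llbracket B\rrbracket(u)$. The three functionality conditions hold because $L$ and $R$ are deterministic: distinct indices $l\ne l'$ give incompatible $\varphi^<$ (a prefix reaches one $L$-state), while for a fixed $l$ two distinct outputs $v_1\ne v_2$ force incompatible $\varphi^>$ (a suffix reaches one $R$-state $r$, and $\omega(l,\sigma,r)$ is a single word), and similarly for $\varphi^i,\varphi^t$. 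Exhaustiveness holds since $B$ is complete: for any $\sigma$, prefix and suffix, the runs of $L$ and $R$ determine states $l,r$, and totality of $\omega$ yields $v=\omega(l,\sigma,r)$ with both $\varphi^<_{l,\sigma,v}$ and $\varphi^>_{l,\sigma,v}$ satisfied.

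The step needing most care is the $\mathcal F$-definability of $\psi^>_r$: unlike $L$, the automaton $R$ scans from right to left, so $\{u\mid\delta_R(u)=r\}$ is the \emph{reverse} of the language recognized by the left automaton obtained from $R$, whose transition monoid is the opposite of $M(R)$. To push this language through the correspondence I would use that it is recognized by the $\mathbf V$-automaton $R$ together with closure of the relevant varieties, and of order-based logics such as $\mathrm{FO}[<]$ and $\mathrm{MSO}[<]$, under reversal, so that $\mathbf V$-languages are stable under reversal and $\psi^>_r\in\mathcal F$ is obtained by mirroring the order predicate in the formula defining the reversed language. This is the only place where anything beyond ``$L$ and $R$ are $\mathbf V$-automata'' is used, and I would isolate it as a separate closure lemma for $\mathcal F$.
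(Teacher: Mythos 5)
Your core construction is the paper's construction: the paper likewise takes formulas $\varphi_l^L$ defining $L_{l_0,l}(L)$ and $\varphi_r^R$ defining $L_{r_0,r}(R)$ and assembles the translation from them, the only cosmetic differences being that it sets $k=|L|\cdot|R|$ and indexes by pairs $(l,r)$ where you set $k=|L|$ and fold the choice of $r$ into a disjunction, and that your terminal formulas $\bigvee_{\lambda(r)=v}\psi_r^>$ and $\bigvee_{\rho(l)=v}\psi_l^<$ are the correct (disjunctive) form of what the paper writes as conjunctions; your verification of correctness, functionality and exhaustiveness from determinism, completeness of $L$ and $R$, and totality of $\omega$ is the same as the paper's.

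The genuine problem is your last paragraph, i.e.\ exactly the step you single out as needing the most care. You propose to obtain $\psi_r^>$ by a reversal argument, invoking closure of $\mathbf V$ and of $\mathcal F$ under reversal. That closure is an \emph{extra hypothesis}: the proposition is stated for an arbitrary variety with a corresponding logic, and varieties need not be closed under taking opposite monoids (the $\mathcal R$-trivial monoids form a variety whose reversal is the distinct variety of $\mathcal L$-trivial monoids), so a proof resting on such a lemma only establishes a weaker statement. Moreover the lemma is unnecessary, because the monoid you attach to $R$ is not the one the hypothesis refers to. In the paper, the transition congruence $\equiv_R$ of a \emph{right} automaton is defined with right-to-left reading ($u\equiv_R v$ iff $u$ and $v$ connect the same pairs of states under right-to-left runs), and ``$R$ is a $\mathbf V$-automaton'' means $M(R)=\Sigma^\ast/\equiv_R\in\mathbf V$. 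The language $\{u\mid \delta_R(u)=r\}=L_{r_0,r}(R)$ is a union of $\equiv_R$-classes, hence recognized by $M(R)\in\mathbf V$; monoid recognizability $L=\mu^{-1}(P)$ is indifferent to reading direction, so this is already a $\mathbf V$-language and thus $\mathcal F$-definable. Equivalently: if $\tilde R$ is the left automaton obtained by reversing the \emph{edges} of $R$ (not, as in your paragraph, by keeping the edges and reversing the reading direction, which is what produces the opposite monoid and the reversed language), then a right-to-left run of $R$ on $u$ from $r_0$ to $r$ is literally a left-to-right run of $\tilde R$ on $u$ from $r$ to $r_0$, so $\{u\mid\delta_R(u)=r\}=L_{r,r_0}(\tilde R)$ and $\equiv_{\tilde R}=\equiv_R$ (a fact the paper uses explicitly in Proposition~\ref{prp3}). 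So $\psi_r^>$ exists for every variety, with no closure lemma; repairing your proof amounts to deleting the last paragraph and citing this identification instead.
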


\begin{prf}
Let $B=(L,R,\omega,\lambda,\rho)$ be a complete \textbf V-bimachine.
We can assume that the automata $L$ and $R$ are complete (for instance by taking the automata associated with $\sim_L$ and $\sim_R$, respectively).
Let $l\in L$, then there is an $\mathcal F$-formula $\varphi_l^L$ recognizing the language $L_{l_0,l}(L)$. Similarly for $r\in R$ let $\varphi_r^R$ be an $\mathcal F$-formula recognizing $L_{r_0,r}(R)$.
Let us define $\mathcal T=\left(k,S,\left(\varphi_{j,\sigma,v}^<,\varphi_{j,\sigma,v}^>\right)_{\begin{smallmatrix} 1\leq j\leq k\\ \sigma \in \Sigma, v\in S \end{smallmatrix}},\left(\varphi_v^i\right)_{v\in S},\left(\varphi_v^t\right)_{v\in S}\right)$ with $k=|L|\cdot|R|$, $S$ being the set of outputs in $B$.
For convenience, we will index the formulas with pairs of states instead of integer $\leq |L|\cdot|R|$.
Let $l\in L,\sigma\in \Sigma, r\in R,v\in S$ such that $\omega(l,\sigma,r)=v$.
We define $\varphi_{(l,r),\sigma,v}^<=\varphi_l^L$ and $\varphi_{(l,r),\sigma,v}^>=\varphi_r^R$.
We also define for $v\in S$, $\varphi_v^i=\bigwedge_{\lambda(r)=v}\varphi_r^R$ and $\varphi_v^t=\bigwedge_{\rho(l)=v}\varphi_l^L$.
We can see that since $L$ and $R$ are complete and $\omega$ is total, the translation is indeed exhaustive.

Let $(u,w)\in\llbracket B \rrbracket$.
Let $u=\sigma_1\ldots\sigma_n \in \Sigma^n$, let $l_0\xrightarrow{\sigma_1}l_1\cdots l_{n-1}\xrightarrow{\sigma_n} l_n$ be the execution of $L$ on $u$ and let $r_n\xleftarrow{\sigma_1}r_{n-1}\cdots r_1\xleftarrow{\sigma_n} r_0$ be the execution of $R$ on $u$.
Let $w=v_0v_1\ldots v_nv_{n+1}$ such that $\omega(l_{i-1},\sigma_i,r_{n-i})=v_i$ for $1\leq i\leq n$, $\lambda(r_n)=v_0$ and $\rho(l_n)=v_{n+1}$.
Then for $1\leq i\leq n$ we have both $\sigma_1\ldots\sigma_{i-1}\models \varphi_{(l_{i-1},r_{n-i}),\sigma_i,v_i}^<$ and $\sigma_{i+1}\ldots\sigma_n\models \varphi_{(l_{i-1},r_{n-i}),\sigma_i,v_i}^>$. We also have $u\models \varphi_{v_0}^i$ and $u\models \varphi_{v_{n+1}}^t$.
Hence $(u,w)\in\llbracket \mathcal T \rrbracket$.
Conversely let us assume $(u,w)\in\llbracket \mathcal T \rrbracket$.
With the same notations as above, we have for $1\leq i\leq n$, $\sigma_1\ldots\sigma_{i-1}\models \varphi_{(l_{i-1},r_{n-i}),\sigma_i,v_i}^<$ and $\sigma_{i+1}\ldots\sigma_n\models \varphi_{(l_{i-1},r_{n-i}),\sigma_i,v_i}^>$. We also have $u\models \varphi_{v_0}^i$ and $u\models \varphi_{v_{n+1}}^t$.
We just have to remark that the sequence of left (resp. right) states corresponds to an execution on $L$ (resp. $R$).
Then we have automatically that $v_i=\omega(l_{i-1},\sigma_i,r_{n-i})$, $\lambda(r_n)=v_0$ and $\rho(l_n)=v_{n+1}$. Thus we obtain $(u,w)\in\llbracket B \rrbracket$.
\end{prf}

\subsection{From logics to machines}

\begin{prp}
\label{t-to-b}
Let \textbf V be a monoid variety and let $\mathcal F$ be a corresponding logic.
Let $f$ be a function definable by an $\mathcal F$-translation. Then $f$ is definable by a complete \textbf V-bimachine.
\end{prp}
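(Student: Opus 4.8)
I want to invert the construction of Proposition~\ref{prpbt}. Given an $\mathcal F$-translation $\mathcal T$, I must build a complete $\mathbf V$-bimachine $B=(L,R,\omega,\lambda,\rho)$ computing the same function. The key idea is that each formula $\varphi_{j,\sigma,v}^<$ (and $\varphi^i_v$) is an $\mathcal F$-language, hence a $\mathbf V$-language, so it is recognized by a $\mathbf V$-automaton reading left-to-right; dually each $\varphi_{j,\sigma,v}^>$ (and $\varphi^t_v$) is recognized by a $\mathbf V$ right-automaton. The left automaton $L$ of the bimachine should simultaneously track the states of all the ``$<$'' automata, and $R$ should track all the ``$>$'' automata. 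Since $\mathbf V$ is a variety (closed under finite products, submonoids, and homomorphic images), the product of finitely many $\mathbf V$-automata is again a $\mathbf V$-automaton, so $L$ and $R$ stay inside $\mathbf V$.

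**Carrying it out:** First I would collect all left-formulas $\{\varphi^<_{j,\sigma,v}\}\cup\{\varphi^i_v\}$ and fix a $\mathbf V$-automaton for each, then let $L$ be their product (say, the automaton of the coarsest common refinement $\sim_L$, which is a $\mathbf V$-congruence by the second item of Proposition~\ref{v-congruence} combined with closure of $\mathbf V$ under products). Dually build the right automaton $R$ from all right-formulas. Crucially, a state $l\in L$ then determines, for the input prefix read so far, the truth value of every left-formula; symmetrically a state $r\in R$ determines the truth value of every right-formula on the corresponding suffix. Now define the output function: for a position carrying letter $\sigma$ with left-state $l$ and right-state $r$, set $\omega(l,\sigma,r)=v$ where $v$ is the unique $v\in S$ for which some index $j$ satisfies ``$l\models\varphi^<_{j,\sigma,v}$ and $r\models\varphi^>_{j,\sigma,v}$'' (reading $l,r$ as prefix/suffix truth-assignments). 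The functionality conditions in the Remark guarantee this $v$ is unique, and exhaustiveness guarantees it exists, so $\omega$ is total and $B$ is complete. Finally set $\lambda(r)=v$ when $r\models\varphi^i_v$ and $\rho(l)=v$ when $l\models\varphi^t_v$; again functionality makes these well-defined partial functions.

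**Correctness and obstacle:** I would then verify $\llbracket B\rrbracket=\llbracket\mathcal T\rrbracket$ by running $B$ on an input $u=\sigma_1\cdots\sigma_n$: the left run reaches at position $i$ a state $l_{i-1}$ recording exactly which left-formulas $\sigma_1\cdots\sigma_{i-1}$ satisfies, the right run reaches $r_{n-i}$ recording which right-formulas $\sigma_{i+1}\cdots\sigma_n$ satisfies, so $\omega(l_{i-1},\sigma_i,r_{n-i})$ reproduces exactly the letter $w_i$ that $\mathcal T$ assigns at position $i$; the boundary terms $\lambda,\rho$ reproduce $w_0,w_{n+1}$. This matches the definition of $\llbracket\mathcal T\rrbracket$ precisely because the formulas are closed and evaluated on prefixes/suffixes, which is exactly what the two one-way automata compute.

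The main obstacle I anticipate is the bookkeeping that a single bimachine state must encode the joint satisfaction pattern of \emph{all} relevant formulas, not just one; I must argue that the truth value of each $\varphi^<_{j,\sigma,v}$ on a prefix is a function of the $\mathbf V$-automaton's state (true because an automaton state determines membership in the recognized language), and that taking the product keeps us in $\mathbf V$. A secondary subtlety is matching the indexing exactly: $\mathcal T$ allows a \emph{disjunction over $j$} for a single output $v$, whereas $\omega$ must pick one value; here I rely on the functionality conditions, which ensure that for fixed $\sigma$ the sets of prefix/suffix pairs witnessing distinct outputs are disjoint, so no conflict arises.
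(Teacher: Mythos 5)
Your construction follows the same route as the paper's proof: gather the formulas relevant to each side, take a single $\mathbf V$-congruence finer than the congruences of all their languages (closure of $\mathbf V$ under finite direct products keeps this in $\mathbf V$), define $\omega$ via exhaustiveness and the functionality conditions, and verify correctness position by position. There is, however, one concrete error: you attach the two boundary families to the wrong automata. You place $\varphi^i_v$ among the left formulas (tracked by $L$) and $\varphi^t_v$ among the right formulas (tracked by $R$), and then define $\lambda(r)=v$ when $r\models\varphi^i_v$ and $\rho(l)=v$ when $l\models\varphi^t_v$. With your grouping these last definitions do not make sense: $R$ was built to refine only the languages of the $\varphi^>_{j,\sigma,v}$ and $\varphi^t_v$, so a state of $R$ does not determine whether the input satisfies $\varphi^i_v$; symmetrically, a state of $L$ does not determine satisfaction of $\varphi^t_v$. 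As literally written, $\lambda$ and $\rho$ are therefore not well defined.

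The correct assignment is forced by the bimachine semantics $\llbracket B\rrbracket(u)=\lambda(r_n)\,\omega(l_0,u,r_0)\,\rho(l_n)$: the block $w_0$, characterized by $u\models\varphi^i_{w_0}$, must be produced by $\lambda$, which is a function of the state $r_n$ reached by the \emph{right} automaton after reading all of $u$; hence $\sim_R$ must recognize $L(\varphi^i_v)$, i.e.\ $\varphi^i_v$ belongs with the right formulas. Dually, $w_{n+1}$ is produced by $\rho(l_n)$, so $\varphi^t_v$ belongs with the left formulas. This is exactly the paper's choice, $F_L=\{\varphi^<_{j,\sigma,v}\}\cup\{\varphi^t_v\}$ and $F_R=\{\varphi^>_{j,\sigma,v}\}\cup\{\varphi^i_v\}$: although both $\varphi^i_v$ and $\varphi^t_v$ are evaluated on the whole input, the full-word information available at the left end of the output can only come from the right automaton's run, and vice versa. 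Once you swap the two boundary families in your grouping (your own definitions of $\lambda$ and $\rho$ already presuppose the swap), your proof coincides with the paper's.
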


\begin{prf}
Let $\mathcal T=\left(k,S,\left(\varphi_{j,\sigma,v}^<,\varphi_{j,\sigma,v}^>\right)_{\begin{smallmatrix} 1\leq j\leq k\\ \sigma \in \Sigma, v\in S \end{smallmatrix}},\left(\varphi_v^i\right)_{v\in S},\left(\varphi_v^t\right)_{v\in S}\right)$ be an $\mathcal F$-translation.

We will describe a bimachine $B=(L,R,\omega,\lambda,\rho)$ equivalent to $\mathcal T$.
We define two sets of formulas: $F_L=\left\{\varphi_{j,\sigma,v}^<\mid 1\leq j\leq k,\ \sigma\in\Sigma,\ v\in S\right\}\cup \left\{\varphi_v^t \mid v\in S\right\}$ and $F_R=\left\{\varphi_{j,\sigma,v}^>\mid 1\leq j\leq k,\ \sigma\in\Sigma,\ v\in S\right\}\cup \left\{\varphi_v^i \mid v\in S\right\}$.
For any formula in $F_L$ there is a \textbf V-congruence recognizing the same language. Let $\sim_L$ be a \textbf V-congruence finer than all of these.
Similarly we can define $\sim_R$ and we obtain naturally the automata $L$, $R$.
Then for $[u]\in L$, $[w]\in R$ and $\sigma\in \Sigma$, $\omega([u],\sigma,[w])=v$ with $v$ being the unique word in $S$ such that for some $j\leq k$, $u\models \varphi_{j,\sigma,v}^<$ and $w\models \varphi_{j,\sigma,v}^>$.
Let us remark that $\omega$ is total by exhaustiveness.
Hence we obtain a complete \textbf V-bimachine and exactly as above, we can show it defines $f$.
\end{prf}

\begin{rmk}
If $\mathcal F$ is a logic corresponding to a variety \textbf V, it can easily be seen that $\mathcal F$-translations with the formulas $\varphi_{j,\sigma,v}^>=\top$ are equivalent to \textbf V-DFTs. Moreover, if we set for $v\neq \varepsilon$ that $\varphi_{v}^t=\bot$, we obtain the functions definable by \textbf V-DFTs with no terminal output.
\end{rmk}

\chapter{Algebraic characterization of classes of subsequential functions}
\label{chap4}

The first part of this chapter deals with a minimization procedure described by \cite{choffrut03} and we show it preserves the equalities of the transition congruence.
In particular, one can decide if a subsequential function is definable by an aperiodic DFT.
In the second part we show that a determinization algorithm of transducers preserves the aperiodicity of the transition monoid.

\section{Minimization}
\label{mindft}
In order to have an algebraic classification of functions, we need a notion similar to the one of minimal automaton. In the case of subsequential functions, there is such notion which we call the minimal DFT.
We describe the construction of a minimal DFT given in \cite{choffrut03} and prove that it preserves equality in the transition congruence.
However we do not prove here that it is correct and refer the reader to the original paper.
This minimization is canonical and not machine dependent.

Let $f$ be a subsequential function definable by some DFT $T =(Q,q_0,F,\delta,i,t)$.
For $u\in \Sigma^\ast$, we define when it exists $\widehat f(u)=\wedge\left\{f(uw)\mid  uw\in \mathrm{dom}(f) \right\}$.
We define the \emph{syntactic congruence} of $f$, for $u,v\in \Sigma^\ast$, $u\sim_f v$ if:

For all $w\in \Sigma^\ast$, $uw\in \mathrm{dom}(f) \Leftrightarrow vw\in \mathrm{dom}(f)$ and when it is defined, $\widehat f(u)^{-1} f(uw) = \widehat f(v)^{-1}f(vw)$.\\
Then the transducer $T_f=(Q_f,q_{0f},F_f,\delta_f,i_f,t_f)$ is equivalent to $T$ with:
\begin{itemize}
\item $Q_f=\left\{ [u]\mid  u\in \mathrm{dom}(\widehat f)\right\}$
\item $q_{0f}=[\varepsilon]$ (we assume that the domain of $T$ is not empty).
\item $\delta([u],\sigma)=([u\sigma],\widehat f(u)^{-1}\widehat f(u\sigma))$ when $\widehat f(u\sigma)$ is defined.
\item $F_f=\left\{[u]\mid  u\in \mathrm{dom}(f) \right\}$.
\item $i_f([\varepsilon])=\widehat f(\varepsilon)$.
\item $t_f([u])=\widehat f(u)^{-1}f(u)$ whenever $[u]\in F_f$.
\end{itemize}

\begin{prp}
\label{prp4}
Given a monoid variety \textbf V, let $f$ be a subsequential function. The function $f$ is definable by a \textbf V-DFT if and only if its minimal transducer is a \textbf V-DFT.
\end{prp}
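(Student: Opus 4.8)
The plan is to treat the two implications separately. The backward implication is immediate: if the minimal transducer $T_f$ is a \textbf V-DFT, then since $T_f$ defines $f$ (it is equivalent to the given $T$), the function $f$ is definable by a \textbf V-DFT. So all the content lies in the forward implication, which I would reduce to a comparison of transition congruences. Concretely, assuming $f$ is defined by some \textbf V-DFT $T$, I would aim to show $\equiv_T\sqsubseteq\equiv_{T_f}$. Granting this, $T$ being a \textbf V-DFT means $\equiv_T$ is a \textbf V-congruence, and the second item of Proposition~\ref{v-congruence} then yields that $\equiv_{T_f}$ is a \textbf V-congruence, i.e. $M(T_f)\in\mathbf V$, so $T_f$ is a \textbf V-DFT.

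To obtain $\equiv_T\sqsubseteq\equiv_{T_f}$ I would first observe that the underlying automaton $A_{T_f}$ of the minimal transducer is the deterministic automaton whose states are the $\sim_f$-classes and whose run on $u$ ends in $[u]$; hence its associated right congruence is exactly $\sim_f$. I would check that $\sim_f$ is genuinely a right congruence (so that $A_{T_f}$ is well defined) and that $A_{T_f}$ is trim, every class $[u]$ with $u\in\mathrm{dom}(\widehat f)$ being both reachable from $[\varepsilon]$ and co-reachable to an accepting class. Remark~\ref{rmk1} then identifies $\equiv_{T_f}$ as the coarsest congruence finer than $\sim_f$. Consequently it suffices to prove that $\equiv_T$ is finer than $\sim_f$: if $\equiv_T\sqsubseteq\sim_f$, then for any $u\equiv_T v$ and any $w$ we have $wu\equiv_T wv$ since $\equiv_T$ is a congruence, hence $wu\sim_f wv$, which is precisely $u\equiv_{T_f}v$.

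The heart of the argument, and the step I expect to be the main obstacle, is establishing $\equiv_T\sqsubseteq\sim_f$, because $\equiv_T$ ignores outputs whereas $\sim_f$ is defined through them. Since $T$ is deterministic, $u\equiv_T v$ means $u$ and $v$ induce the same partial transition map on the states of $A_T$; in particular they reach a common state $q$ from $q_0$ and admit the same accepting continuations $\{w\mid q\cdot w\in F\}$, which gives the domain clause of $\sim_f$. For the output clause I would decompose, for each admissible $w$, $f(uw)=i(q_0)\,\mathrm{out}(u)\,X_w$ and $f(vw)=i(q_0)\,\mathrm{out}(v)\,X_w$, where $\mathrm{out}(x)$ is the output of $T$ reading $x$ from $q_0$ and the tail $X_w$ depends only on the common reached state $q$ and on $w$. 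The delicate point is the longest-common-prefix correction: using that $\bigwedge_w(z\,X_w)=z\,\bigwedge_w X_w$ for a fixed prefix $z$, one gets $\widehat f(u)=i(q_0)\,\mathrm{out}(u)\,P$ and $\widehat f(v)=i(q_0)\,\mathrm{out}(v)\,P$ with the same $P=\bigwedge_w X_w$, again depending only on $q$. Cancelling prefixes gives $\widehat f(u)^{-1}f(uw)=P^{-1}X_w=\widehat f(v)^{-1}f(vw)$, which is exactly the output clause, so $u\sim_f v$. The careful bookkeeping of these prefix quotients, together with the verification that $X_w$ and $P$ depend only on the reached state $q$, which is where determinism and the full transition-map equality encoded by $\equiv_T$ are used, is the technical core of the proof.
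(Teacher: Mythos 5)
Your proposal is correct and follows essentially the same route as the paper: the paper also reduces, via Proposition~\ref{v-congruence}, to showing ${\equiv_T}\sqsubseteq{\equiv_{T_f}}$, and proves it by the same core computation — words $vm_1\equiv_T vm_2$ reach a common state $p$ of $T$, so $\widehat f(vm_1)=i\,x_1\,s$ and $\widehat f(vm_2)=i\,x_2\,s$ with the same tail $s=\bigwedge\left\{f_p(w)\mid w\in \mathrm{dom}(f_p)\right\}$ depending only on $p$, whence the prefix quotients coincide, $vm_1\sim_f vm_2$ for every prefix $v$, and thus $m_1\equiv_{T_f}m_2$. The only cosmetic difference is that you pass from $\sim_f$ to $\equiv_{T_f}$ by invoking Remark~\ref{rmk1} (after checking trimness), whereas the paper verifies that step directly from the transitions of $T_f$.
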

\begin{prf}
According to Proposition~\ref{v-congruence} we only have to show ${\equiv_T}\sqsubseteq{\equiv_{T_f}}$.
Let $f$ be a subsequential function definable by some DFT $T =(Q,q_0,F,\delta,i,t)$. For $q\in Q$ let $T_q=(Q,q,F,\delta,\varepsilon,t)$ and let $f_q$ be the corresponding function.
Let $m_1,m_2\in \Sigma^\ast$ two words such that $m_1\equiv _{T}m_2$.
We construct $T_f$ using the minimization in \cite{choffrut03}. Let $v$ be a word in $\Sigma^\ast$.
Let  $q_0\xrightarrow{vm_1\mid x_1}p$ (if $p$ does not exist, we can add a garbage state in $T$).
Then obviously $q_0\xrightarrow{vm_2\mid x_2}p$ and for all $w\in \Sigma^\ast$, $vm_1w\in \mathrm{dom}(f) \Leftrightarrow vm_2w\in \mathrm{dom}(f)$.
Then when it is defined:
$$\widehat f(vm_1)=ix_1\bigwedge \left\{ f_p(w)\mid  w\in \mathrm{dom} (f_p) \right\}$$
$$\widehat f(vm_2)=ix_2\bigwedge \left\{ f_p(w)\mid  w\in \mathrm{dom} (f_p) \right\}$$
Let $s=\bigwedge \left\{ f_p(w)\mid  w\in \mathrm{dom} (f_p) \right\}$,
then for any $w\in \Sigma^\ast$ such that $vm_1w\in \mathrm{dom}(f)$:
$$\widehat f(vm_1)^{-1}f(vm_1w)=s^{-1}x_1^{-1}i^{-1}ix_1f_p(w)=s^{-1}f_p(w)$$
$$\widehat f(vm_2)^{-1}f(vm_2w)=s^{-1}x_2^{-1}i^{-1}ix_2f_p(w)=s^{-1}f_p(w)$$
Hence $vm_1\sim_f vm_2$ for any $v\in \Sigma^\ast$.
We have in $T_f$ for any $v\in \Sigma^\ast$:
$[v]\xrightarrow{m_1}[vm_1]=[vm_2]$ and $[v]\xrightarrow{m_2}[vm_2]$.
Hence $m_1\equiv _{T_f}m_2$ and $\equiv _{T_f}$ is the coarsest amongst the transition congruences of all the DFTs defining $f$.
\end{prf}

\begin{thm}
Let \textbf V be a monoid variety. Let us assume that the membership problem is decidable for \textbf V.
Then there is an algorithm to decide if a function defined by a DFT is definable by a \textbf V-DFT.
\end{thm}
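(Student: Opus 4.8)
The plan is to reduce the decision problem to a single membership test on a canonical, effectively computable object, namely the minimal DFT $T_f$ furnished by Proposition~\ref{prp4}. The whole argument rests on the equivalence already established there: $f$ is definable by a \textbf V-DFT if and only if $T_f$ is itself a \textbf V-DFT. Since being a \textbf V-DFT means that the transition monoid $M(T_f)$ lies in \textbf V, and since membership in \textbf V is assumed decidable, it suffices to exhibit an effective construction of $T_f$ from a given DFT $T$ and then test $M(T_f)$.

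First I would argue that the minimal transducer $T_f$ is effectively computable from $T$. Its states are the classes of the syntactic congruence $\sim_f$; by the computation in the proof of Proposition~\ref{prp4}, specialized to $v=\varepsilon$, we get $m_1\equiv_T m_2 \Rightarrow m_1\sim_f m_2$, so $\equiv_T\sqsubseteq\sim_f$ and $\sim_f$ has finite index, bounded by that of $\equiv_T$. The only non-combinatorial ingredient is the auxiliary map $\widehat f(u)=\wedge\{f(uw)\mid uw\in\mathrm{dom}(f)\}$, the longest common prefix of all outputs completing $u$. Running $T$ on $u$ to the state $p$ reached with output $x$, this value equals $x$ times the greatest common prefix of the accepting outputs $f_p(w)$ reachable from $p$; as $T$ is finite, this prefix is determined by the finitely many states and cycles reachable from $p$ and is computable. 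Hence the transition, initial and terminal functions of $T_f$ can all be evaluated, and the finitely many classes of $\sim_f$ can be enumerated by a reachability fixpoint started from $[\varepsilon]$.

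Second, from $T_f$ I would take its underlying automaton and compute the transition monoid $M(T_f)=\Sigma^\ast/\equiv_{T_f}$, a finite monoid obtained in the usual way from the transition structure. Using the assumed decidability of the membership problem for \textbf V, I would then decide whether $M(T_f)\in\mathbf V$, that is, whether $T_f$ is a \textbf V-DFT. By Proposition~\ref{prp4} this settles the question: $f$ is definable by a \textbf V-DFT precisely when this test succeeds, so the three steps together constitute the desired algorithm.

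The main obstacle lies entirely in the first step, namely establishing the effectivity of $T_f$: computing the $\widehat f$ values and guaranteeing that the construction terminates. Once the canonicity and correctness of Choffrut's construction (which we take as given) are combined with the bound $\equiv_T\sqsubseteq\sim_f$ extracted from the proof of Proposition~\ref{prp4}, the syntactic congruence has finite index, the greatest-common-prefix computations are finite, and the remaining steps are routine finite computations on a finite monoid.
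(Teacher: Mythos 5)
Your proposal is correct and follows essentially the same route as the paper: by Proposition~\ref{prp4} it suffices to build the minimal DFT $T_f$ and test whether its transition monoid belongs to \textbf{V}, which is exactly what the paper's (very terse) proof does. The extra care you take in justifying the effectivity of $T_f$ — the finite index of $\sim_f$ via ${\equiv_T}\sqsubseteq{\sim_f}$ and the computability of $\widehat f$ — is detail the paper leaves implicit by deferring to Choffrut's construction, and it strengthens rather than changes the argument.
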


\begin{prf}
This follows easily from Proposition~\ref{prp4}:
We only have to construct the minimal DFT and check if its transition monoid belongs to \textbf V.
\end{prf}

\begin{xmp}
Let us consider the sequential function $g$ which is equal to $f_{ends}$, restricted to $L_{ends}$.
In other words, $g_{ends}$ is defined on words that start and end with an $a$ and replaces each letter of the input by an $a$.
We have:
\begin{itemize}
\item $\widehat g(\varepsilon)=a$.
\item For $w\in a+a\Sigma^\ast a$, $\widehat g(w)=a^{|w|}$.
\item For $w\in a\Sigma^\ast b$, $\widehat g(w)=a^{|w|+1}$.
\item For $w\in b\Sigma^\ast$, the function is not defined.
\end{itemize}
Hence we obtain the following congruence classes:
$[\varepsilon]$, $[a]=[a+a\Sigma^\ast]$,  $[ab]=[a\Sigma^\ast b]$ and $[b]=[b\Sigma^\ast]$.
From this we obtain the minimal DFT of $g$ given in Figure~\ref{figdft}, with states labelled as representatives of congruence classes.

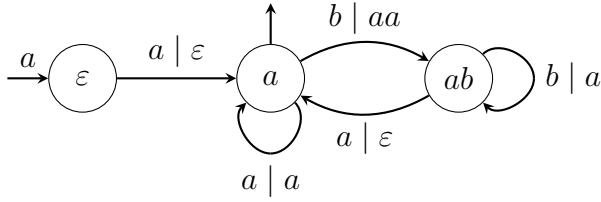
\begin{figure}[t]
\centering
\begin{tikzpicture}
\node[blanc,minimum height=9mm] (e) at (0,0) {$\varepsilon$} ;
\node[blanc,minimum height=9mm] (a) at (2.5,0) {$a$} ;
\node[blanc,minimum height=9mm] (ab) at (5,0) {$ab$} ;
\draw[thick,->,>=stealth] (-1,0) --(e) node[midway,above,sloped] {$a$};
\draw[thick,->,>=stealth] (a) --(2.5,1);
\draw[thick,->,>=stealth] (e)--(a) node[midway,above,sloped] {$a\mid\varepsilon$};
\draw[thick,->,>=stealth] (a)to[bend left](ab);
\node at (3.75,0.8) {$b\mid aa$};
\node at (3.75,-0.8) {$a\mid \varepsilon$};
\draw[thick,->,>=stealth] (ab)to[bend left](a);

\draw[thick] (a) to[out=-45,in=0](2.5,-1) node[below] {$a\mid a$};
\draw[thick,->,>=stealth] (2.5,-1) to[out=180,in=-135](a);
\draw[thick] (ab) to[out=45,in=90](6,0) node[right] {$b\mid a$};
\draw[thick,->,>=stealth] (6,0) to[out=-90,in=-45](ab);

\end{tikzpicture}
\caption{Minimal DFT of $g$.}
\label{figdft}
\end{figure}

\end{xmp}

\section{Determinization of aperiodic transducers}
It has been shown that it is decidable whether an NFT is determinizable or not. In this case we will describe the determinization algorithm given in \cite{bealc02} and show it preserves the aperiodicity of the transducer.
In particular, we have that a subsequential function is definable by an aperiodic DFT if and only if it is definable by an aperiodic NFT, which does not hold for a general variety as can be seen in Section~\ref{app-det}.

Let $T=(Q,I,F,\Delta,i,t)$ be an NFT. We give a construction of $T'=(Q',S_0,F',\delta,i',t')$ equivalent to $T$.
Since we are only interested in the transitions of $T'$, we will not describe the final states nor the terminal output function.

In the obtained transducer $T'$, a state $S$ is a set of pairs $(p,w)\in Q\times \Sigma^\ast$.
Let $X\subseteq \Sigma^\ast$, $X\neq \varnothing$ then we define $\wedge X$ to be the longest common prefix of all the words in $X$.
Let $i'=\wedge\left\{i(q)\mid q\in I \right\}$.
Let $S_0=\left\{(q,w)\mid i(q)=i'w\right\}$.
Then for $S_1$ a state and $\sigma \in \Sigma$, we define $R_2=\left\{ (q,vu)\mid  (p,v)\in S_1\ \mathrm{and}\ p\xrightarrow{\sigma\mid u}q\right\}$.
Let $s=\wedge\left\{w\mid (q,w)\in R_2\right\}$.
Then we define $S_2=\left\{ (q,w)\mid  (q,sw)\in R_2 \right\}$ and there is a transition $S_1\xrightarrow{\sigma\mid s}S_2$ in $T'$.
Let us remark that, by construction, in the underlying automaton of $T'$, $S_1\xrightarrow{u}S_2$ implies that the states of $S_2$ are exactly the states reachable by reading $u$ from a state of $S_1$.

\begin{prp}
Given a subsequential function defined by an \textbf A-NFT, the transducer obtained by determinization is an \textbf A-DFT.
\end{prp}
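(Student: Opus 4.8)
The plan is to show that the transition monoid $M(T')$ of the determinized transducer is aperiodic, which by definition amounts to producing an integer $N$ such that $u^N\equiv_{T'}u^{N+1}$ for every $u\in\Sigma^\ast$. Since $T'$ is deterministic, this congruence simply means that $u^N$ and $u^{N+1}$ induce the same partial transformation $S\mapsto S\cdot u^{N}$ and $S\mapsto S\cdot u^{N+1}$ on the state set $Q'$, which is finite because $f$ is subsequential. So I fix a word $u$ and a state $S\in Q'$ and aim to prove $S\cdot u^n=S\cdot u^{n+1}$ for all sufficiently large $n$, with a bound independent of $u$ and of $S$.

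First I decompose the information carried by a state $S$ into its $Q$-component $P=\left\{q\mid (q,w)\in S\right\}$ and its delay-component, the set of pairs $(q,w)$ recording the deferred outputs $w$. As noted after the construction, the $Q$-component behaves exactly as in the subset construction: $P\cdot u$ is the set of states of $A_T$ reachable from $P$ by reading $u$. A word acts on subsets only through the relation it induces on $Q$, so $u\equiv_T v$ implies that $u$ and $v$ act identically on subsets; hence the transition congruence of the subset automaton is coarser than $\equiv_T$. Since $T$ is an \textbf A-NFT, $\equiv_T$ is an \textbf A-congruence, and by Proposition~\ref{v-congruence} so is the subset congruence. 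The $Q$-components therefore stabilize: there is an $N_1$, bounded uniformly by the aperiodicity index of $M(A_T)$, with $P\cdot u^{n}=P\cdot u^{n+1}=:P_\infty$ for all $n\ge N_1$.

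For the delay-component I observe that, once $n\ge N_1$, passing from $S\cdot u^n$ to $S\cdot u^{n+1}$ consists in reading a single extra copy of $u$ from a state whose $Q$-component is the fixed set $P_\infty$. The runs $p\xrightarrow{u}q$ with $p,q\in P_\infty$ and their outputs depend only on $u$, so this step is one fixed map $\Phi$ from the set $D$ of admissible delay-vectors into itself; $D$ is finite because $f$ being subsequential forces the delays to be bounded. Thus the sequence $(S\cdot u^n)_{n\ge N_1}$ is the $\Phi$-orbit of $S\cdot u^{N_1}$ and is eventually periodic, and it remains to rule out periods $\ge 2$, that is, to show that every $\Phi$-orbit reaches a genuine fixed point.

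This last point is where the real work lies and is the main obstacle. A period $\ge 2$ would mean that, along the cycles of $A_T$ labelled by $u$ through $P_\infty$, the deferred outputs rotate without ever committing to a stable value. I will exclude this with combinatorics on words: boundedness of the delays, equivalently the twinning property enjoyed by a determinizable functional transducer, forces through the Fine--Wilf periodicity lemma that the outputs produced by any two co-reachable $u$-cycles are powers of one and the same primitive word with aligned phases. Conjugate-but-distinct periods are exactly what would create a delay cycle of length $\ge 2$, and they are precisely the configurations making the delays unbounded, contradicting subsequentiality. Consequently $\Phi$ has no nontrivial cycles, the delays stabilize after at most $|D|$ steps, and taking $N=N_1+|D|$, a bound depending only on $T$ and the delay bound and not on $u$, gives $u^N\equiv_{T'}u^{N+1}$ for every $u$. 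Hence $M(T')$ is aperiodic and $T'$ is an \textbf A-DFT.
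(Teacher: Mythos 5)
Your reduction of the problem is sound as far as it goes: stabilization of the $Q$-components via Proposition~\ref{v-congruence} matches the paper's own observation, and viewing the passage from $S\cdot u^n$ to $S\cdot u^{n+1}$ (once the $Q$-component is fixed) as iterating a single map $\Phi$ on a finite set of delay vectors is fine. The gap is in your last step, where you claim that a $\Phi$-cycle of length $\geq 2$ would force unbounded delays and thus contradict subsequentiality. That claim is false, and it is fatal, because from the $Q$-stabilization onward your argument uses nothing about $T$ except boundedness of delays. Concretely, take $T$ over input alphabet $\left\{a\right\}$ with states $p,q$, both initial with $i(p)=\varepsilon$, $i(q)=a$, transitions $p\xrightarrow{a\mid a}q$ and $q\xrightarrow{a\mid b}p$, and only $p$ final with $t(p)=\varepsilon$. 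This $T$ is functional (even unambiguous) and defines the subsequential function $a^n\mapsto (ab)^{\lceil n/2\rceil}$; the $Q$-component $\left\{p,q\right\}$ is fixed by reading $a$, and all delays stay in $\left\{\varepsilon,a,b\right\}$. Yet the determinization produces the genuine two-cycle $\left\{(p,\varepsilon),(q,a)\right\}\xrightarrow{a\mid a}\left\{(q,\varepsilon),(p,b)\right\}\xrightarrow{a\mid b}\left\{(p,\varepsilon),(q,a)\right\}$. Of course this $T$ is not aperiodic ($a$ acts as a transposition on $\left\{p,q\right\}$), so it does not contradict the Proposition -- but it satisfies every hypothesis your delay argument actually invokes, so your argument would wrongly prove its determinization counter-free.

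The root error is the sentence identifying conjugate-but-distinct periods with unbounded delays: you have the dichotomy backwards. Two $u$-cycles whose outputs are conjugate words with shifted phases (such as $ab$ and $ba$ above) keep a \emph{bounded} delay between them -- conjugacy is exactly the twinning/bounded-delay situation -- and shifted phases are precisely what produces delay rotation. What boundedness (via Fine--Wilf) excludes is non-conjugate periods; it cannot align phases. Phase alignment is where the aperiodicity of $T$ must be used a second time, and that is the actual content of the bulk of the paper's proof: by aperiodicity one finds a single state $q_i$ admitting loops on both $u^t$ and $u^{t+1}$ with $t$ a multiple of the candidate period $k$, and the resulting word equations (the $(\alpha),(\beta),(\gamma)$ system, giving $X_j=(x_jw)^k$ and $X_{j+1}=(wx_j)^k$, then propagating to all states and forcing $w=\varepsilon$) are what make the phases coincide and yield $k=1$. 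Your proposal has no counterpart to this step, so the proof does not go through.
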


\begin{prf} 
First let us denote by $u \prec v$ that $u$ is a suffix of $v$. And let $u\approx v$ if $u \prec v$ or $v \prec u$.\\
Let $f$ be a rational function defined by an \textbf A-NFT $T=(Q,I,F,\Delta,i,t)$.
There exists an integer $n$ such that in the underlying automaton of $T$, for all $p,q\in Q$, we have $p\xrightarrow {u^n}q$ if and only if $p\xrightarrow {u^{n+1}}q$.

An automaton is \emph{counter-free} if for any state $q$, any word $u$, any positive integer $k$, $q\xrightarrow{u^k}q \Rightarrow q\xrightarrow{u}q$.
Since $T'$ is deterministic, we only have to show that it is counter-free (see \emph{e.g.} \cite{diekertg08}).

Let $u$ be a word, let $R_0$ be a state of $T'$ and let $k$ be, if it exists, the smallest positive integer such that $R_0\xrightarrow{u^k}R_0$. Let $R_0\xrightarrow{u\mid x_1}R_1\ldots, R_{k-1}\xrightarrow{u\mid x_0}R_{0}$ be the corresponding sequence of transitions in $T'$. We have to show that $k=1$.
Let us remark that all the states of the $R_j$'s must be the same since they are the set of states reachable, in $T$, by $u^n$ from the states of $R_0$. Let $R_j=\left\{(q_1,v_{1,j}),\ldots, (q_m,v_{m,j})\right\}$ be pairwise distinct for $j\in\left\{ 0,\ldots, k-1 \right\}$. We assume that not all $x_j$'s are empty words, otherwise, the conclusion is immediate.

Let $q_{i_1}\xrightarrow {u\mid x_{i_1,i_2}} q_{i_2}$ denote the transitions in $T$. Then by construction of $T'$, we have for any for $j\in\left\{ 0,\ldots, k-1 \right\}$: $$v_{i_1,j}x_{i_1,i_2}=x_{j+1}v_{i_2,j+1}$$
Let $i_0,i_1,\ldots, i_t$ be an index sequence such that $q_{i_l}\xrightarrow uq_{i_{l+1}}$ for $l\in\left\{ 0,\ldots,t-1\right\}$ and $t$ a multiple of $k$.
Then we obtain for any $j,j'\in\left\{ 0,\ldots, k-1 \right\}$:
$$\begin{array}{rll} 
v_{i_0,j}&x_{i_0,i_1}\cdots x_{i_{t-1},i_t}=x_{j+1}\cdots x_{j+t}v_{i_t,j}\\
v_{i_0,j'}&x_{i_0,i_1}\cdots x_{i_{t-1},i_t}=x_{j'+1}\cdots x_{j'+t}v_{i_t,j'}
\end{array}$$
In particular the two words have a common suffix which means that $v_{i_t,j}$ and $v_{i_t,j'}$ have a common suffix.
This suffix can be arbitrarily large since $t$ can be chosen arbitrarily large.
Hence $v_{i,j} \approx v_{i,j'}$ for any $i,j,j'$ (any state $q_i$ is reachable by a sufficiently long sequence).

Let $X_0=x_0\cdots x_{k-1}$. For $j\in\left\{ 1,\ldots, k-1 \right\}$ and let $X_j=x_j\cdots,x_{j-1}$ denote the corresponding rotations of $X_0$. Let us remark that $X_jx_j=x_jX_{j+1}$.
Since $T'$ is finite, there exists an index $i$ such that we have both:
$q_i\xrightarrow{u^t}q_i$ and $q_i\xrightarrow{u^{t+1}}q_i$, and let $i=i_1,i_2,\ldots, i_t,i$ and $i=i'_1,i'_2,\ldots,i'_{t+1},i$ be the corresponding index sequences.
By aperiodicity, we can choose $t=sk$, a multiple of $k$. 
Let $Y=x_{i_1,i_2}\cdots x_{i_t,i_1}$ and let $Y'=x_{i'_1,i'_2}\cdots x_{i'_{t+1},i'_1}$. Then we have for any $j$:

$$\begin{array}{rcl} 
v_{i,j}Y&=&(X_j)^sv_{i,j}\\
v_{i,j+1}Y&=&(X_{j+1})^sv_{i,j+1}\\
v_{i,j}Y'&=&(X_j)^sx_{j}v_{i,j+1}\\
\end{array}$$
Let us assume that $v_{i,j}\prec v_{i,j+1}$ which means that $v_{i,j+1}=wv_{i,j}$.
$$\begin{array}{rclr} 
v_{i,j}Y&=&(X_j)^sv_{i,j}&\quad (\alpha)\\
wv_{i,j}Y&=&(X_{j+1})^swv_{i,j}&\quad(\beta)\\
v_{i,j}Y'&=&(X_j)^sx_{j}wv_{i,j}&\quad(\gamma)\\
\end{array}$$
From $(\alpha)$ and $(\beta)$, we have: $w(X_j)^s=(X_{j+1})^sw$.
And by multiplying by $x_j$ and using $X_jx_j=x_jX_{j+1}$ we have:
$x_jw(X_j)^s=(X_j)^sx_jw$.

From $(\gamma)$, by multiplying by $Y'$ on the right $k$ times we have:
$v_{i,j}(Y')^k=((X_j)^sx_jw)^kv_{i,j}$.
Moreover, since $(Y')^k$ corresponds to a sequence of length a multiple of $k$, we also have:
$v_{i,j}(Y')^k=(X_j)^{ks+1}v_{i,j}$.
Hence $(X_j)^{ks+1}=((X_j)^sx_jw)^k$.
And by combining the two we obtain $(X_j)^{ks+1}=(X_j)^{ks}(x_jw)^k$, hence:
$$X_j=(x_jw)^k$$
This yields

$\begin{array}{rl}
(X_{j+1})^sw&=w(X_j)^s\\
&=w(x_jw)^{ks}\\
&=(wx_j)^{ks}w\\
\end{array}$

Hence
$$ X_{j+1}=(wx_j)^k$$

Let $q_l$ be another state such that $q_l\xrightarrow{u^t}q_l$.
If $v_{l,j}\prec v_{l,j+1}$ then $v_{l,j+1}=w_lv_{l,j}$.
However, it also implies that $(x_jw)^k=(x_jw_l)^k$, hence $w_l=w$.
On the other hand, let us assume $v_{l,j}\succ v_{l,j+1}$ with $w_lv_{l,j+1}=v_{l,j}$.
Let $Y_l$ and $Y'_l$ correspond this time to sequences associated with $q_l\xrightarrow{u^t}q_l$ and $q_l\xrightarrow{u^{t-1}}q_l$, respectively.

$$\begin{array}{rcl} 
w_lv_{l,j+1}Y_l&=&(X_j)^sw_lv_{l,j+1}\\
v_{l,j+1}Y_l&=&(X_{j+1})^sv_{l,j+1}\\
v_{l,j+1}Y'_l&=&(X_{j+1})^{s}x_{j+1}\cdots x_{j-1} w_lv_{l,j+1}\\
\end{array}$$
As above, the first two equations give:
$(X_j)^sw_l=w_l(X_{j+1})^s$

From the third equation we have:
$((X_{j+1})^sx_{j+1}\cdots x_{j-1} w_l)^k=(X_{j+1})^{ks+k-1}$. Let us remark that $(X_{j+1})^sx_{j+1}\cdots x_{j-1}=x_{j+1}\cdots x_{j-1}(X_j)^s$, hence $(X_{j+1})^sx_{j+1}\cdots x_{j-1} w_l=x_{j+1}\cdots x_{j-1}w_l(X_{j+1})^s$. From this we obtain $(x_{j+1}\cdots x_{j-1} w_l)^k=(X_{j+1})^{k-1}$.

Now from above we can write:

$$\begin{array}{rcl}
(x_{j+1}\cdots x_{j-1} w_l)^k&=&((wx_j)^k)^{k-1}\\
x_{j+1}\cdots x_{j-1} w_l&=&(wx_j)^{k-1}\\
wx_jx_{j+1}\cdots x_{j-1} w_l&=&(wx_j)^k\\
wX_jw_l&=&X_{j+1}
\end{array}$$
Hence by length, we obtain $w=w_l=\varepsilon$.

Let $q_l$ be a state reachable from $q_i$ (any state is reachable from such a looping state). Then it can be reached by a sequence of length $t$ and there is a correspond sequence of outputs denoted by $Z_l$ such that:
$$\begin{array}{rcl} 
v_{i,j}Z_l&=&(X_j)^sv_{l,j}\\
wv_{i,j}Z_l&=&(X_{j+1})^sv_{i,j+1}\\
\end{array}$$
Hence $wv_{l,j}=v_{l,j+1}$.

Finally, $w$ is a common prefix to all $v_{i',j+1}$ ($j$ fixed and $i'\in \left\{1,\cdots,m\right\}$) and must by definition be empty. The reasoning is the same if we assume $v_{i,j}\succ v_{i,j+1}$. Hence $v_{i',j}=v_{i',j+1}$ and therefore $k=1$, which means that $T'$ is counter-free and thus is aperiodic.
\end{prf}

\begin{rmk}
In Section~\ref{app-det} we give several examples of varieties \textbf V for which, contrary to the variety of aperiodic monoids, determinization of a transducer does not preserve the variety of the transition monoid.
In fact we even give examples of subsequential functions which are not definable by any \textbf V-DFT yet are definable by a \textbf V-NFT.
\end{rmk}

\begin{xmp}
Let $g$ be defined by the aperiodic NFT of Figure~\ref{detxmp}~(a). Then the DFT obtained by determinization is the one of Figure~\ref{detxmp}~(b).

\begin{figure}[t]
        \centering
        \begin{subfigure}[b]{0.4\textwidth}
\centering
\begin{tikzpicture}
\node[blanc] (q0) at (0,0) {0} ;
\node[blanc] (q1) at (2,1.5) {1} ;
\node[blanc] (q2) at (4,0) {2} ;
\draw[thick,->,>=stealth] (-1,0) --(q0);
\draw[thick,<-,>=stealth] (5,0) --(q2);
\draw[thick,->,>=stealth] (q0)--(q1) node[midway,above,sloped] {$a\mid aa$};
\draw[thick,->,>=stealth] (q0)--(q2) node[midway,below,sloped] {$a\mid a$};
\draw[thick,->,>=stealth] (q1)--(q2) node[midway,above,sloped] {$a\mid\varepsilon$};
\draw[thick] (q1) to[out=45,in=0](2,2.5) node[above] {$a,b\mid a$};
\draw[thick,->,>=stealth] (2,2.5) to[out=180,in=135](q1);
\end{tikzpicture}
\caption{NFT defining $g$.}
        \end{subfigure}
        ~
        \begin{subfigure}[b]{0.4\textwidth}
\centering
\begin{tikzpicture}
\node[rectangle,draw, fill=white] (0) at (0,0) {$0,\varepsilon$} ;
\node[rectangle,draw, fill=white] (12) at (2.5,0) {$\begin{array}{c}1,a\\ 2,\varepsilon\end{array}$} ;
\node[rectangle,draw, fill=white] (1) at (5,0) {$1,\varepsilon$} ;
\draw[thick,->,>=stealth] (-1.1,0) --(0);
\draw[thick,->,>=stealth] (12) --(2.5,1.3);
\draw[thick,->,>=stealth] (0)--(12) node[midway,above,sloped] {$a\mid a$};
\draw[thick,->,>=stealth] (12)to[bend left](1);
\node at (3.75,0.8) {$b\mid aa$};
\node at (3.75,-0.8) {$a\mid \varepsilon$};
\draw[thick,->,>=stealth] (1)to[bend left](12);

\draw[thick] (12) to[out=-65,in=0](2.5,-1.3) node[below] {$a\mid a$};
\draw[thick,->,>=stealth] (2.5,-1.3) to[out=180,in=-115](12);
\draw[thick] (1) to[out=35,in=90](6.1,0) node[right] {$b\mid a$};
\draw[thick,->,>=stealth] (6.1,0) to[out=-90,in=-35](1);

\end{tikzpicture}
\caption{Determinized transducer.}
        \end{subfigure}
        \caption{Determinization of a transducer}
        \label{detxmp}
\end{figure}
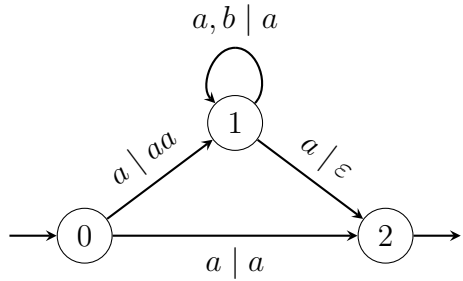
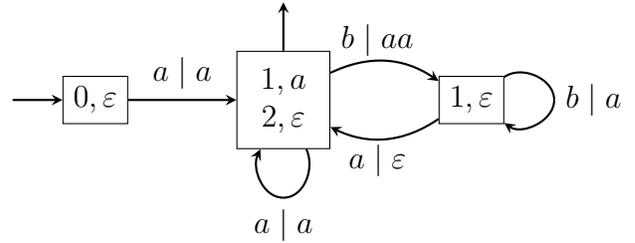

\end{xmp}

\chapter{Algebraic characterization of classes of rational functions}

In this chapter we give a characterization of functions that can be defined by a \textbf V-NFT for a given variety \textbf V. The goal is ultimately to decide if a function can be defined by a logical translation for a logic corresponding to the variety.
The notion of minimal automaton does not transfer to NFTs.
However there is some notion of canonical bimachine, introduced in \cite{reutenauers91}, which is the foundation of the algebraic characterization we give.

\section{V-transducer and V-bimachines}
Since the algebraic characterization deals with bimachines, we need to define a notion of bimachine equivalent to \textbf V-transducers.
A bimachine is called a \emph{\textbf V-bimachine} if both its automata are \textbf V-automata.
We show the following relations between \textbf V-transducers and \textbf V-bimachines.

\begin{prp}
\label{prp3}
Let \textbf V be a monoid variety.
Let $f$ be a rational function.
Then there is a equivalence between the three following properties:
\begin{enumerate}
\item The function $f$ is definable by an unambiguous \textbf V-transducer.
\item The function $f$ is definable by a complete \textbf V-bimachine.
\item The function $f$ is definable by a \textbf V-bimachine and $\mathrm{dom}(f)$ is a \textbf V-language.
\end{enumerate}
Furthermore there is a construction from each one to the others.
\end{prp}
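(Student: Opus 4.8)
The plan is to prove the three properties equivalent by establishing the cycle $(1)\Rightarrow(3)\Rightarrow(2)\Rightarrow(1)$, reusing the two conversions of Sections~\ref{totrans} and~\ref{tobim}, the completion of Proposition~\ref{complete}, and the closure of $\mathbf V$ under finite products, submonoids and homomorphic images (via Proposition~\ref{v-congruence}). The shape of each machine and the fact that it defines $f$ are already supplied by the cited constructions, so the only new content is checking that each construction keeps the relevant transition monoids inside $\mathbf V$. The final ``furthermore'' clause then comes for free, since all three implications are effective.

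For $(1)\Rightarrow(3)$ I would start from an unambiguous $\mathbf V$-transducer $T$ and apply the construction of Section~\ref{tobim}, obtaining a bimachine $B=(L,R,\omega,\lambda,\rho)$ defining $f$ with $L=R=\Sigma^\ast/\equiv_T$; one may assume $L$ and $R$ trim. The right congruence associated with $L$ is exactly $\equiv_T$, and since $\equiv_T$ is already a two-sided congruence, the coarsest congruence finer than it is itself, so Remark~\ref{rmk1} gives $\equiv_L=\equiv_T$, and symmetrically $\equiv_R=\equiv_T$. Hence $M(L)=M(R)=M(T)\in\mathbf V$, so $B$ is a $\mathbf V$-bimachine. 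Finally $\mathrm{dom}(f)$ is the language of the underlying automaton of $T$, recognized by $\equiv_T\in\mathbf V$, so $\mathrm{dom}(f)$ is a $\mathbf V$-language.

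For $(3)\Rightarrow(2)$ I would take a $\mathbf V$-bimachine $B=(L,R,\omega,\lambda,\rho)$ with $\mathrm{dom}(f)$ a $\mathbf V$-language and run the completion of Proposition~\ref{complete}, choosing the automaton $A$ recognizing $\mathrm{dom}(f)$ to be a $\mathbf V$-automaton (possible exactly because $\mathrm{dom}(f)$ is a $\mathbf V$-language). The resulting complete bimachine has left automaton $L'=L\times A$ and unchanged right automaton $R$; the transition monoid of $L\times A$ is a submonoid of $M(L)\times M(A)\in\mathbf V$, so $M(L')\in\mathbf V$ and we obtain a complete $\mathbf V$-bimachine defining $f$.

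For $(2)\Rightarrow(1)$ I would apply the construction of Section~\ref{totrans} to a complete $\mathbf V$-bimachine $B$, obtaining an NFT $T$ with state set $L\times R$ that defines $f$. Unambiguity follows because in any accepting run the left coordinates are forced by the determinism of $L$ from $l_0$, while the right coordinates are forced by reconstructing the run of $R$ backward from the obligatory final state $r_0$, leaving at most one run. The delicate point, and the main obstacle, is showing that $T$ is a $\mathbf V$-transducer: the underlying automaton $A_T$ is nondeterministic and its $R$-coordinate is traversed \emph{against} the orientation of $R$, which a priori threatens to contribute $M(R)^{op}$ rather than $M(R)$. The resolution is that the map sending each word $w$ to the relation it induces on the $R$-coordinate of $A_T$ is a genuine monoid morphism (a projection of the transition morphism of $A_T$) whose value at $w$ is the transpose of the $R$-transition of $w$; since transposition is a kernel-preserving bijection, this morphism has kernel exactly $\equiv_R$, so its image is isomorphic to $M(R)$ (not its opposite). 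Consequently $M(T)=M(A_T)$ embeds into $M(L)\times M(R)\in\mathbf V$, and closure under products and submonoids gives $M(T)\in\mathbf V$, so $T$ is the required unambiguous $\mathbf V$-transducer.
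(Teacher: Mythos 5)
Your proof is correct and takes essentially the same route as the paper: the same cycle $1\Rightarrow 3\Rightarrow 2\Rightarrow 1$, using the construction of Section~\ref{tobim} for $1\Rightarrow 3$, the completion of Proposition~\ref{complete} with a \textbf V-automaton $A$ for $3\Rightarrow 2$, and the construction of Section~\ref{totrans} for $2\Rightarrow 1$. Your transposition argument resolving the apparent $M(R)^{op}$ problem is precisely the content of the paper's terse claim that $\equiv_{\tilde R}=\equiv_R$ (both congruences are quotients of $\Sigma^\ast$, so no opposite monoid ever arises), and your added explicit details on unambiguity and on product submonoids only make explicit what the paper leaves implicit.
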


\begin{prf}
$1.\Rightarrow 3.$ follows from the construction given in Section~\ref{tobim}.
Since $\equiv_T$ is a \textbf V congruence, we have that $L$ and $R$ are both \textbf V-automata. Moreover, $\mathrm{dom}(f)$ is a \textbf V-language since it is accepted by the underlying automaton of a \textbf V-NFT.

$3.\Rightarrow 2.$ follows from the construction given in Proposition~\ref{complete}. We can choose the automaton $A$ to be a \textbf V-automaton since the domain is a \textbf V-language. Hence we obtain a complete \textbf V-bimachine.

$2.\Rightarrow 1.$ follows from the construction given in Section~\ref{totrans}. Since the bimachine is complete, the underlying automaton of the obtained transducer is just the product $L\times \tilde R$, with $\tilde R$ being the left automaton obtained by reversing the transitions in $R$. We have that $\equiv_{\tilde R}$ and $\equiv_R$ are equal. Therefore the transducer we obtain is a \textbf V-transducer.
\end{prf}

\begin{rmk}
For some variety \textbf V, there exist functions definable by a \textbf V-bimachine for which the domain is \emph{not} a \textbf V-language.
An example is given in Section~\ref{app-v-dom}.

A question that remains is the following one:
If a function is definable by a \textbf V-NFT, is it definable by some unambiguous \textbf V-NFT ?
In the next part, we will see that the answer is yes in the particular case of aperiodic transducers.
\end{rmk}

\section{Canonical bimachine}

In this part we give the construction, for a given rational function, of a canonical bimachine that is described in \cite{reutenauers91} and several properties of this bimachine. It is canonical in the sense that it is not machine dependant and only depends on the function itself.
Although, we give the construction of the bimachine, we refer the reader to the original paper for a proof of correctness.

In order to define the bimachine we define a left congruence from which we define the right automaton. Then from this we define the right congruence which yields the left automaton.

\subsection{Left congruence}
Let $f$ be a rational function  on $\Sigma$.

Let the \emph{left distance} between two words $u,v$ be $\lVert u,v \rVert=|u|+|v|-2|u\wedge v|$ where $u\wedge v$ is the longest common prefix of $u$ and $v$.

We define the \emph{left syntactic congruence} of $f$ (which is a left congruence) by $ u\sim_{R_0} v$ if:
$$\begin{array}{c}
\forall w\in \Sigma^\ast,\  wu\in \mathrm{dom}(f) \Leftrightarrow wv\in \mathrm{dom}(f)\\
\text{and}\\
\sup_{w\in \Sigma^\ast}\left\{\lVert f(wu),f(wv)\rVert \right\}<\infty
\end{array}$$

Let $\sim_R $ be a left congruence finer than $\sim_{R_0}$. Let $R=\Sigma^\ast / \sim_R$, $R_0=\Sigma^\ast / \sim_{R_0}$ and let $r_0=[\varepsilon]$ be the initial element of $R$.

For $r,r'\in R$, $\sigma\in \Sigma$, if $r'\xleftarrow{\sigma}r$ then for convenience we denote by $r'=\sigma r$ the action of $\Sigma$ on $R$.

For this fixed $R$ we construct a bimachine $B^R=(L^R,R,\omega^R,\lambda^R,\rho^R)$ which defines $f$ according to \cite{reutenauers91}. By taking $R_0$ as a right automaton we get a completely canonical bimachine $B^0=(L^0,R_0,\omega^0,\lambda^0,\rho^0)$.
Since we are only interested in the automata of the bimachine, we will not give the descriptions of $\omega^R,$ $\lambda^R$ and $\rho^R$, and once again we refer the reader to \cite{reutenauers91}.

\begin{rmk}
The construction is completely symmetrical: we can define similarly a right congruence: $u\sim_{L_0}v$ if the supremum of the \emph{right distance} between $f(uw)$ and $f(vw)$ over all words $w$ is less than infinity.
Then for any right congruence $\sim_L$ finer than $\sim_{L_0}$, we can define $B^L=(L,R^L,\omega^L,\lambda^L,\rho^L)$ defining $f$. We denote $R^{L_0}$ by $R^0$ to simplify.
\end{rmk}

First we show that the automaton $R_0$ (as well as $L_0$) has some minimal property, \emph{i.e.} its transition congruence is coarser than the transition congruence of any transducer defining $f$.

\begin{prp}
\label{prp2}
Let $f$ be a rational function defined by some \textbf V-NFT.
Then the automaton $R_0=(\Sigma^\ast / \sim_{R_0},[\varepsilon],\delta)$,
where $\delta$ is the natural left action on $\Sigma^\ast / \sim_{R_0}$, is a \textbf V-automaton.
\end{prp}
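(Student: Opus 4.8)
The plan is to reduce the statement to the monotonicity of $\mathbf V$-congruences under coarsening. Let $T=(Q,I,F,\Delta,i,t)$ be a \textbf V-NFT defining $f$, so that its transition congruence $\equiv_T$ is a \textbf V-congruence, its quotient being the transition monoid $M(T)$, which lies in $\mathbf V$. Since the transition monoid of $R_0$ is $\Sigma^\ast/\equiv_{R_0}$, the automaton $R_0$ is a \textbf V-automaton as soon as $\equiv_{R_0}$ is a \textbf V-congruence. By Proposition~\ref{v-congruence} this will follow once I show $\equiv_T\sqsubseteq\equiv_{R_0}$.

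First I would reduce this inclusion to a one-sided one. By Remark~\ref{rmk1} (or directly, using that $R_0$ is deterministic and that the transformation induced by $u$ sends the class $[x]$ to $[ux]$), we have $u\equiv_{R_0}v$ if and only if $uw\sim_{R_0}vw$ for all $w\in\Sigma^\ast$. As $\equiv_T$ is a two-sided congruence, $u\equiv_T v$ implies $uw\equiv_T vw$ for every $w$; hence it suffices to prove the single inclusion $\equiv_T\sqsubseteq\sim_{R_0}$, after which $u\equiv_T v$ gives $uw\equiv_T vw$ and therefore $uw\sim_{R_0}vw$ for all $w$, that is $u\equiv_{R_0}v$.

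The heart of the proof is thus to show $u\equiv_T v\Rightarrow u\sim_{R_0}v$, which amounts to the two defining clauses of the left syntactic congruence. The domain clause is routine: $\mathrm{dom}(f)$ is the language of the underlying automaton of $T$, and $u\equiv_T v$ means $u$ and $v$ realise exactly the same pairs of states, so for every $w$ a successful run on $wu$ exists iff one on $wv$ does. The \emph{main obstacle} is the bounded-distance clause $\sup_w\lVert f(wu),f(wv)\rVert<\infty$. I would argue as follows: fix $w$ with $wu,wv\in\mathrm{dom}(f)$, pick a successful run $p_0\xrightarrow{w\mid\alpha}p\xrightarrow{u\mid\beta}q$ of $T$ with $p_0\in I$, $q\in F$, and use $u\equiv_T v$ to get a run $p\xrightarrow{v\mid\beta'}q$; prefixing it by $p_0\xrightarrow{w\mid\alpha}p$ yields a successful run on $wv$. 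By functionality of $T$ we then have $f(wu)=i(p_0)\alpha\beta\,t(q)$ and $f(wv)=i(p_0)\alpha\beta'\,t(q)$, which share the common prefix $i(p_0)\alpha$, so $\lVert f(wu),f(wv)\rVert\le|\beta|+|\beta'|+2|t(q)|$.

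The decisive observation is that this bound does not depend on $w$: if $M$ bounds the output length of any single transition and $T_{\max}$ bounds the terminal outputs, then $|\beta|\le M|u|$, $|\beta'|\le M|v|$ and $|t(q)|\le T_{\max}$, whence $\lVert f(wu),f(wv)\rVert\le M(|u|+|v|)+2T_{\max}$ for every $w$. This gives the finite supremum and completes $\equiv_T\sqsubseteq\sim_{R_0}$, hence $\equiv_T\sqsubseteq\equiv_{R_0}$ and, by Proposition~\ref{v-congruence}, the statement; the corresponding claim for $L_0$ follows by the symmetric construction.
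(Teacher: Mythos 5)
Your proof is correct and takes essentially the same route as the paper's: reduce via Proposition~\ref{v-congruence} to showing that $\equiv_T$ is finer than the transition congruence of $R_0$, then establish this by swapping the run of $T$ on $u$ for a run on $v$ between the same pair of states, which yields a bound on $\lVert f(wu),f(wv)\rVert$ uniform in $w$. Your write-up is in fact more explicit than the paper's on two points the latter leaves implicit (and slightly garbles in notation): the bridge from ${\equiv_T}\sqsubseteq{\sim_{R_0}}$ to ${\equiv_T}\sqsubseteq{\equiv_{R_0}}$ using two-sidedness of $\equiv_T$, and the domain clause of $\sim_{R_0}$.
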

\begin{prf}
Let $T=(Q,I,F,\Delta,i,t)$ be a \textbf V-NFT defining $f$.
According to Proposition~\ref{v-congruence}, it suffices to prove ${\sim_T}\sqsubseteq{\equiv_{R_0}}$.
Let $w\in \Sigma^\ast$ be two words, let $m_1\equiv_Tm_2$ and let $q_0\xrightarrow{w\mid x}p\xrightarrow{m_1\mid y}q_f$ be an accepting run of $T$ on $wm_1$.
Then there is an accepting run $q_0\xrightarrow{w\mid x}p\xrightarrow{m_2\mid y'}q_f$.
Thus the distance $\lVert f(wm_1),f(wm_2)\lVert\leq k(|m_1|+|m_2|+1)$ with $k$ being the maximum length of an output in $T$.
Hence $m_1\sim_{R_0} m_2$.
\end{prf}

\subsection{Right congruence}

Now that we have defined the right automaton of the bimachine, we need to define the left one. In order to do that we define this time a right congruence.

Let $r\in R$, we define the function
$$\widehat f_r(u)=\bigwedge \left\{f(uv)\mid  vr_0=r \right\}$$

These functions are similar to the prefix function defined for minimizing subsequential functions in Section~\ref{mindft}. However this time we need a finite family of prefix functions.
It is proven in \cite{reutenauers91} that $\widehat f_{\sigma r}(u)$ is a prefix of $\widehat f_r(u\sigma)$ for $\sigma\in \Sigma, u\in \Sigma^\ast$ and $r\in R$.
Then for $u,v\in \Sigma^\ast$ we say $u\sim_{L^R} v$ if for any $\sigma\in \Sigma, w\in \Sigma^\ast$ and $r\in R$ we have when it is defined both:
$$\begin{array}{rcl}
\widehat f_{\sigma r}(uw)^{-1}\widehat f_{r}(uw\sigma) &=& \widehat f_{\sigma r}(vw)^{-1}\widehat f_{r}(vw\sigma)\\
\widehat f_{r_0}(uw)^{-1} f(uw) &=& \widehat f_{r_0}(vw)^{-1} f(vw)
\end{array} $$

This right congruence has finite index, according to \cite{reutenauers91}, thus we obtain the left automaton of the bimachine $B^R$ defining $f$.

For the purpose of this report, we want to study the algebraic properties of $L^R$ and for this we need to show that the functions $\widehat f_r$, $r\in R$ maintain some algebraic properties of $T$ (given below in Proposition~\ref{prp1} and Corollary~\ref{cor1}) and for this we give the construction of a family of transducers defining them.
Let us give a construction of a transducer defining $\widehat f_r$, for $r\in R$.
According to \cite{reutenauers91} there exists $L_r=\left\{ v_1,\ldots v_m\right\}$ finite such that $\widehat f_r(u)=\bigwedge \left\{f(uv)\mid  v\in L_r \right\}$.

Let $T\times R=(Q\times R,I\times R,F\times\left\{ r_0\right\},\Delta',i',t')$ be the transducer that does exactly as $T$ does except that it guesses in a non-deterministic fashion the congruence class $\sim_R$ of the word read, and checks by taking the transitions of $R$ backwards to reach $r_0$.
In particular this transducer defines the function $f$, the states are just refined to also give the class in $R$ of the suffix left to read.

Now from this transducer we want to build $T_r=(Q_R,I_R,F_r,\Delta_R,i_R,t_r)$ defining $\widehat f_r$ with $i_R=\wedge\left\{i(q)\mid q\in I\right\}$.
The construction will be similar to the determinization algorithm, except that only states with same class in $R$ will be merged.
The algorithm will end since for $w_1{\sim_R}w_2$ we have by definition the left distance between $f(ww_1)$ and $f(ww_2)$ that is bounded.

Each state of $T_r$ is composed of a set of pairs $(q,w)\in Q\times \Sigma^\ast$ and a class $s\in R$.
For $i_R=\wedge\left\{i(q)\mid q\in I\right\}$ and $s\in R$, each $(\left\{(q,w)\mid i(q)=i_Rw\right\},s)$ belongs to $I_R$.
Let $(P,s)$ be a state of $Q_R$, let $\sigma\in \Sigma$ and $s'\in \Sigma$ such that $s=\sigma s'$.
Then we can determine a set
$$S=\left\{ (q',wu)\mid  \exists (q,w)\in P\ \mathrm{and}\ q\xrightarrow{\sigma\mid u}q'\right\}$$
Let $v$ be the longest common prefix of the words of $S$ then we define
$$P'=\left\{ (q',w')\mid   (q',vw')\in S\right\}$$
There is then a transition $(P,s)\xrightarrow{\sigma\mid v}(P',s')$.
We define $F_r=\left\{ (P,s)\in Q_R\mid  s=r \right\}$.
Let $(P,r)\in F_r$ with $P=\left\{(q_1,w_1),\ldots, (q_m,w_m) \right\}$. For any $v\in L_r$ there exists $j_v$ such that for some $q_{f_v}\in F$, $q_{j_v}\xrightarrow{v\mid u_v}q_{f_v}$.
Then we define $t_r((P,r))=\bigwedge_{} \left\{ w_{j_v}u_v\mid v\in L_r\right\}$.

By construction, $T_r$ defines $\widehat f_r$, furthermore this construction has the advantage of yielding an unambiguous transducer: for a given word, there is only one sequence of states in $R$ corresponding to it and since all states in $Q\times R$ following this sequence are merged, there is only one possible accepting run.

Let $T_R=(Q_R,I_R,\tilde\Delta_R)$ denote the (unique) underlying automaton associated with the $T_r$'s (when not taking final states into account).

For the following, we also need to show that $f$ is definable by a NFT with $T_R$ as its underlying automaton. We define $T_f=(Q_R,I_R,F_f,\Delta_R,i_R,t_f)$ with the final states $F_f\left\{(P,r_0)\mid \exists (p,w)\in P \text{ and } p\in F\right\}$ and the terminal function $T_f(P,r_0)=wt(p)$ for $(p,w)\in P \text{ and } p\in F$.

The construction of $T_R$ allows us to claim the following property and its corollary.
\begin{prp}
\label{prp1}
Let $f$ be a rational function defined by some \textbf V-NFT.
If the automaton $T_R$ is a \textbf V-automaton, then so is the automaton $L^R=(\Sigma^\ast / \sim_{L^R},[\varepsilon],\delta)$, where $\delta$ is the natural right action on $\Sigma^\ast / \sim_L$.
\end{prp}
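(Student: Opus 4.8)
The plan is to reduce the statement, exactly as in Proposition~\ref{prp2} and Proposition~\ref{prp4}, to a comparison of congruences. To show that $L^R=(\Sigma^\ast/\sim_{L^R},[\varepsilon],\delta)$ is a \textbf V-automaton it suffices, by Proposition~\ref{v-congruence}, to show that its transition congruence $\equiv_{L^R}$ is a \textbf V-congruence. By Remark~\ref{rmk1}, $\equiv_{L^R}$ is the coarsest congruence that is finer than the right congruence $\sim_{L^R}$; hence if I establish ${\equiv_{T_R}}\sqsubseteq{\sim_{L^R}}$, then, $\equiv_{T_R}$ being a (two-sided) congruence finer than $\sim_{L^R}$, it must be finer than the coarsest such congruence, giving ${\equiv_{T_R}}\sqsubseteq{\equiv_{L^R}}$. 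Since $T_R$ is assumed to be a \textbf V-automaton, $\equiv_{T_R}$ is a \textbf V-congruence, and the second item of Proposition~\ref{v-congruence} then yields that $\equiv_{L^R}$ is a \textbf V-congruence, as wanted. So the whole proof reduces to the implication $u\equiv_{T_R}v\Rightarrow u\sim_{L^R}v$.

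To prove this implication I would exploit that all the transducers $T_r$ (for $r\in R$) and the transducer $T_f$ defining $f$ share the \emph{same} underlying automaton $T_R$, and that this automaton is unambiguous. The decisive structural observation is that, in the determinization-like construction of the $T_r$, the set-of-pairs component $P$ of a macro-state $(P,s)$ evolves by a deterministic rule that does not depend on the $R$-class component $s$, and the output committed while reading a prefix (the successive longest-common-prefix extractions) is governed entirely by this $P$-component. Consequently the output committed while reading a fixed prefix $u$ is one and the same word $C_u$ whatever the target class is, and likewise $C_v$ for $v$. Using $u\equiv_{T_R}v$ and unambiguity, for every suffix $z$ and every target $r$ the unique accepting runs of $T_r$ on $uz$ and $vz$ pass through the same macro-state after the $u$/$v$ part, with identical deferred outputs, and then read $z$ through identical transitions ending with the same terminal output; writing $G$ for that common tail this gives $\widehat f_r(uz)=C_uG$ and $\widehat f_r(vz)=C_vG$, and similarly for $f$.

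It then remains to substitute these factorizations into the two equalities defining $\sim_{L^R}$. Applying the previous observation with $z=w$ and target $\sigma r$, and with $z=w\sigma$ and target $r$, and using that the committed prefix $C_u$ (resp. $C_v$) is common to both computations, the prefix $C_u$ cancels in $\widehat f_{\sigma r}(uw)^{-1}\widehat f_r(uw\sigma)$ and $C_v$ cancels in $\widehat f_{\sigma r}(vw)^{-1}\widehat f_r(vw\sigma)$, leaving in both cases the same word built from the common tails. These left-quotients are well defined because $\widehat f_{\sigma r}(uw)$ is a prefix of $\widehat f_r(uw\sigma)$, a fact established in \cite{reutenauers91}; the same reasoning with target $r_0$ and with $f$ in place of the outer prefix function handles the second equality $\widehat f_{r_0}(uw)^{-1}f(uw)=\widehat f_{r_0}(vw)^{-1}f(vw)$. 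Finally the domain conditions $uw\in\mathrm{dom}(f)\Leftrightarrow vw\in\mathrm{dom}(f)$ transfer directly, since $\equiv_{T_R}$ preserves the existence of runs of $T_R$. This gives $u\sim_{L^R}v$ and closes the argument.

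I expect the main obstacle to be this middle step: justifying rigorously that the committed output is independent of the guessed $R$-class (so that the cancellations are legitimate and produce the \emph{same} residual for $u$ and for $v$), and keeping careful track of the prefix relations so that every left-quotient appearing is well defined. The reduction to ${\equiv_{T_R}}\sqsubseteq{\sim_{L^R}}$ and the final appeal to Proposition~\ref{v-congruence} and Remark~\ref{rmk1} are, by comparison, routine.
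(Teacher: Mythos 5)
Your proof is correct and takes essentially the same route as the paper's: the same reduction, via Proposition~\ref{v-congruence} and Remark~\ref{rmk1}, to the inclusion ${\equiv_{T_R}}\sqsubseteq{\sim_{L^R}}$, followed by the same run analysis in which the accepting runs of the $T_r$'s and of $T_f$ on $uz$ and $vz$ coincide after the prefix, so the committed prefix outputs cancel in the left-quotients. The structural fact you single out as the main obstacle (independence of the committed output from the guessed $R$-class) is precisely what the paper uses implicitly when it writes both runs through the same states $(P_1,w\sigma r)$, $(P_2,\sigma r)$, $(P_3,r)$ with common outputs $v$, $v'$ and common terminal outputs.
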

\begin{prf}
Let $T=(Q,I,F,\Delta,i,t)$ be a \textbf V-NFT defining $f$.
According to Proposition~\ref{v-congruence} it suffices to show ${\equiv_{T^R}}\sqsubseteq{\sim_{L^R}}$.
Let $m_1\equiv_{T_R}m_2$, $r\in R$ and $w \in \Sigma^\ast$.
Let $(P_0,m_1w\sigma r)\xrightarrow{m_1\mid u_1}(P_1,w\sigma r)\xrightarrow {w\mid v} (P_2,\sigma r)\xrightarrow {\sigma \mid v'} (P_3, r)$ and 
$(P_0,m_1w\sigma r)\xrightarrow{m_2\mid u_2}(P_1,w\sigma r)\xrightarrow {w\mid v} (P_2,\sigma r)\xrightarrow {\sigma \mid v'} (P_3, r)$ be successful runs of $T_r$ (if they exist). Then we have:
$$\begin{array}{l}
\widehat f_r(m_1w\sigma)=u_1vv't_r(P_3,r) \\
\widehat f_r(m_2w\sigma)=u_2vv't_r(P_3,r)\\
\widehat f_{\sigma r}(m_1w)=u_1vt_{\sigma r}(P_2,\sigma r)\\
\widehat f_{\sigma r}(m_2w)=u_2vt_{\sigma r}(P_2,\sigma r)\\
\end{array}$$
By definition $t_{\sigma r}(P_2,\sigma r)$ is indeed a prefix of $v't_r(P_3,r)$, hence $\widehat f_{w r}(m_1)^{-1}\widehat f_r(m_1w)=\widehat f_{w r}(m_2)^{-1}\widehat f_r(m_2w)$.

We have $m_1\in \mathrm{dom}(f)$ if and only if $m_2\in \mathrm{dom}(f)$.
Let $(P_0,m_1wr_0)\xrightarrow{m_1\mid u_1}(P_1,wr_0)\xrightarrow {w\mid v} (P_2,r)$ and 
$(P_0,m_1wr_0)\xrightarrow{m_2\mid u_2}(P_1,wr_0)\xrightarrow {w\mid v} (P_2,r)$ be successful runs of $T_f$ (if they exist).
Then we have:
$$\begin{array}{l}
f(m_1w)=u_1vt_f(P_2,r_0) \\
f(m_2w)=u_2vt_f(P_2,r_0)\\
\widehat f_{r_0}(m_1w)=u_1vt_{r_0}(P_2,r_0)\\
\widehat f_{r_0}(m_2w)=u_2vt_{r_0}(P_2,r_0)\\
\end{array}$$
By definition $t_{r_0}(P_2,r_0)$ is indeed a prefix of $t_f(P_2,r_0)$ since $\varepsilon \in r_0$, hence $\widehat f_{r_0}(m_1)^{-1}f(m_1)=\widehat f_{r_0}(m_2)^{-1}f(m_2)$, and $m_1\sim_{L^R}m_2$.
\end{prf}

\begin{cor}
\label{cor1}
Let \textbf V be a monoid variety stable by determinization of transducers.
Let $f$ be definable by a \textbf V-transducer.
Then $B^0$ is a \textbf V-bimachine.
\end{cor}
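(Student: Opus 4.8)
The plan is to show that both automata of the canonical bimachine $B^0=(L^0,R_0,\omega^0,\lambda^0,\rho^0)$ lie in \textbf{V}; by the definition of a \textbf{V}-bimachine this is exactly what is required. The right automaton is already settled: since $f$ is definable by a \textbf{V}-transducer, Proposition~\ref{prp2} gives that $R_0=(\Sigma^\ast/\sim_{R_0},[\varepsilon],\delta)$ is a \textbf{V}-automaton. It remains to treat the left automaton $L^0=L^{R_0}$. For this I would invoke Proposition~\ref{prp1} with $R=R_0$: that proposition asserts that whenever the underlying automaton $T_{R_0}$ of the family $(T_r)_{r\in R_0}$ is a \textbf{V}-automaton, then so is $L^{R_0}$. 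Thus the whole corollary reduces to the single statement that $T_{R_0}$ is a \textbf{V}-automaton.

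To establish this, I would first show that the auxiliary transducer $T\times R_0$ is a \textbf{V}-transducer. Fix a \textbf{V}-transducer $T=(Q,I,F,\Delta,i,t)$ defining $f$. The underlying automaton of $T\times R_0$ runs the underlying automaton of $T$ in its first component and $R_0$ (read backwards) in its second, and the transitions are componentwise; hence its transition congruence is the intersection $\equiv_T\cap\equiv_{R_0}$ and its transition monoid embeds into, so divides, the direct product $M(T)\times M(R_0)$. Both factors lie in \textbf{V} — the first by hypothesis, the second by Proposition~\ref{prp2} — and \textbf{V} is closed under finite direct product and under submonoids and homomorphic images; therefore $M(T\times R_0)\in\textbf{V}$, i.e. $T\times R_0$ is a \textbf{V}-transducer.

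Next I would observe that $T_{R_0}$ is produced from $T\times R_0$ by the determinization procedure of the previous section: the subset construction with output delays is carried out, the only difference being that the deterministically tracked $R_0$-component is never merged across distinct classes, which is automatic since that class is part of the state. Within each such track the function being computed is one of the subsequential functions $\widehat f_r$, so the construction is, track by track, a genuine determinization of a subsequential \textbf{V}-transducer. Invoking the hypothesis that \textbf{V} is stable by determinization then yields that $T_{R_0}$ is a \textbf{V}-automaton. Combined with Proposition~\ref{prp1} this shows $L^{R_0}=L^0$ is a \textbf{V}-automaton, and together with $R_0\in\textbf{V}$ this proves that $B^0$ is a \textbf{V}-bimachine.

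The main obstacle is precisely this last identification. The construction of $T_{R_0}$ is only \emph{similar} to the plain determinization, since it keeps the $R_0$-component separate and merges only states sharing an $R_0$-class, and moreover $f$ itself need not be subsequential, so $T\times R_0$ cannot be determinized as a whole. The delicate point is therefore to verify rigorously that this class-respecting determinization is covered by the stability hypothesis — concretely, that $M(T_{R_0})$ divides the product of $M(R_0)$ with the transition monoid of an honest determinization of the subsequential functions $\widehat f_r$, so that closure of \textbf{V} under direct product and division forces $M(T_{R_0})\in\textbf{V}$.
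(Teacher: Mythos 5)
Your proof is correct and takes essentially the same route as the paper: Proposition~\ref{prp2} gives that $R_0$ is a \textbf{V}-automaton, the stability-by-determinization hypothesis gives that $T_{R_0}$ is a \textbf{V}-automaton, and Proposition~\ref{prp1} transfers this to $L^0$. The ``delicate point'' you flag at the end is in fact glossed over by the paper itself, which simply asserts that $T_{R_0}$ is a \textbf{V}-automaton ``by construction since \textbf{V} is stable by determinization''; your additional product-and-division argument for $T\times R_0$ and your track-by-track reading of the class-respecting determinization are a refinement of, not a departure from, the paper's argument.
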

\begin{prf}
According to Proposition~\ref{prp2}, $R_0$ is a \textbf V-automaton.
Hence by construction $T_{R_0}$ is a \textbf V-automaton since \textbf V is stable by determinization.
Hence, according to Proposition~\ref{prp1}, $L^0$ is also a \textbf V-automaton.
\end{prf}

\begin{xmp}
We consider the function $f_{ends}$ and one can show that the automaton $R_0$ is equal to the one of Example~\ref{xmp-bim}, \emph{i.e.} $u\sim_{R_0} v$ if they have the same last letter.
The automaton $T_R$ we obtain for the NFT defining $f_{ends}$ is given in Figure~\ref{fig-tr}.

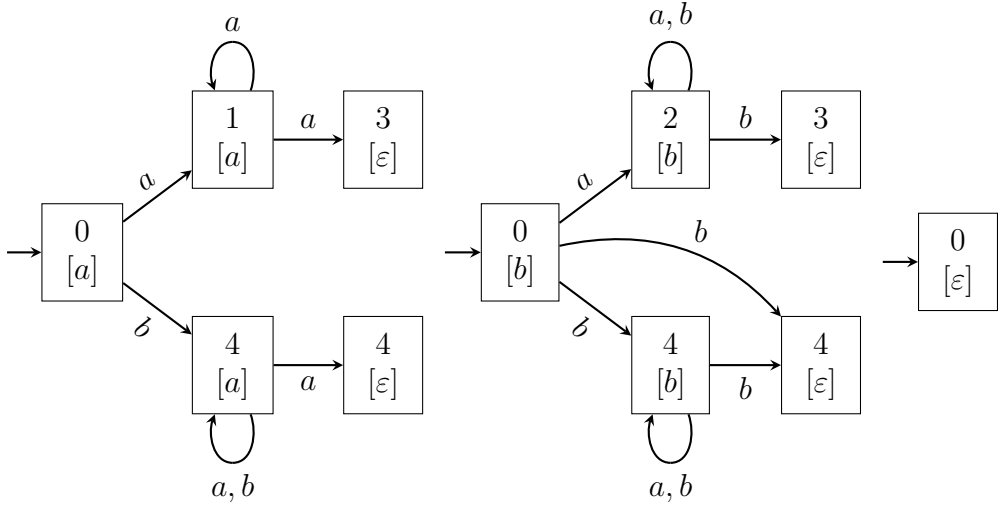
\begin{figure}[t]
\centering
\begin{tikzpicture}

\node[rectangle,draw, fill=white] (0a) at (0,0) {$\begin{array}{c} 0 \\ \left[ a\right] \end{array}$} ;

\node[rectangle,draw, fill=white] (1a) at (2,1.5) {$\begin{array}{c} 1 \\ \left[ a\right] \end{array}$} ;
\node[rectangle,draw, fill=white] (3e) at (4,1.5) {$\begin{array}{c} 3 \\ \left[ \varepsilon\right] \end{array}$} ;
\node[rectangle,draw, fill=white] (4a) at (2,-1.5) {$\begin{array}{c} 4 \\ \left[ a\right] \end{array}$} ;
\node[rectangle,draw, fill=white] (4e) at (4,-1.5) {$\begin{array}{c} 4 \\ \left[ \varepsilon\right] \end{array}$} ;

\draw[thick,<-,>=stealth] (0a) -- ++(-1,0);
\draw[thick,->,>=stealth] (0a)--(1a) node[midway,above,sloped] {$a$};
\draw[thick,->,>=stealth] (0a)--(4a) node[midway,below,sloped] {$b$};
\draw[thick,->,>=stealth] (1a)--(3e) node[midway,above,sloped] {$a$};
\draw[thick,->,>=stealth] (4a)--(4e) node[midway,below,sloped] {$a$};

\draw[thick] (1a) to[out=70,in=0] (2,2.8) node[above] {$a$};
\draw[thick,->,>=stealth] (2,2.8) to[out=180,in=110] (1a);
\draw[thick] (4a) to[out=-70,in=0] (2,-2.8) node[below] {$a,b$};
\draw[thick,->,>=stealth] (2,-2.8) to[out=180,in=-110] (4a);
\end{tikzpicture}~
\begin{tikzpicture}

\node[rectangle,draw, fill=white] (0a) at (0,0) {$\begin{array}{c} 0 \\ \left[ b\right] \end{array}$} ;

\node[rectangle,draw, fill=white] (1a) at (2,1.5) {$\begin{array}{c} 2 \\ \left[ b\right] \end{array}$} ;
\node[rectangle,draw, fill=white] (3e) at (4,1.5) {$\begin{array}{c} 3 \\ \left[ \varepsilon\right] \end{array}$} ;
\node[rectangle,draw, fill=white] (4a) at (2,-1.5) {$\begin{array}{c} 4 \\ \left[ b\right] \end{array}$} ;
\node[rectangle,draw, fill=white] (4e) at (4,-1.5) {$\begin{array}{c} 4 \\ \left[ \varepsilon\right] \end{array}$} ;

\draw[thick,<-,>=stealth] (0a) -- ++(-1,0);
\draw[thick,->,>=stealth] (0a)--(1a) node[midway,above,sloped] {$a$};
\draw[thick,->,>=stealth] (0a)--(4a) node[midway,below,sloped] {$b$};
\draw[thick,->,>=stealth] (1a)--(3e) node[midway,above,sloped] {$b$};
\draw[thick,->,>=stealth] (4a)--(4e) node[midway,below,sloped] {$b$};
\draw[thick,->,>=stealth] (0a) to[bend left](4e);
\node[above] at (2.4,0) {$b$};

\draw[thick] (1a) to[out=70,in=0] (2,2.8) node[above] {$a,b$};
\draw[thick,->,>=stealth] (2,2.8) to[out=180,in=110] (1a);
\draw[thick] (4a) to[out=-70,in=0] (2,-2.8) node[below] {$a,b$};
\draw[thick,->,>=stealth] (2,-2.8) to[out=180,in=-110] (4a);
\end{tikzpicture}~
\begin{tikzpicture}

\node[rectangle,draw, fill=white] (0a) at (0,0) {$\begin{array}{c} 0 \\ \left[ \varepsilon\right] \end{array}$} ;
\draw[thick,<-,>=stealth] (0a) -- ++(-1,0);
\node at (0,-3.2){};
\end{tikzpicture}

\caption{Automaton $T_R$.}
\label{fig-tr}
\end{figure}

\end{xmp}

\subsection{Aperiodic case}

In Chapter~\ref{chap4} we have proven that aperiodic transducers are stable by determinization. Hence we can decide if a given function is definable by an aperiodic transducer, and claim the following Theorem:

\begin{thm}
There is an algorithm to decide if a function defined by a transducer is definable by an FO-translation.
\end{thm}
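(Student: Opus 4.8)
The plan is to reduce FO-translation definability to the aperiodicity of the canonical bimachine $B^0$, a property that is decidable. Recall that FO is the logic corresponding to the variety \textbf{A} of aperiodic monoids, and that membership in \textbf{A} is decidable: a finite monoid $M$ is aperiodic if and only if $x^{|M|}=x^{|M|+1}$ for every $x\in M$, which one tests directly on the transition monoid. The algorithm I would give takes the input transducer defining $f$, constructs the completely canonical bimachine $B^0=(L^0,R_0,\omega^0,\lambda^0,\rho^0)$ (effective by the finite-index results of \cite{reutenauers91}), and accepts if and only if both transition monoids $M(L^0)$ and $M(R_0)$ are aperiodic. Correctness reduces to a single equivalence: $f$ is definable by an FO-translation if and only if $B^0$ is an \textbf{A}-bimachine.

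For the forward direction I would assume $f$ is FO-translation definable. By Proposition~\ref{t-to-b}, applied with $\mathbf V=\mathbf A$ and $\mathcal F=\mathrm{FO}$, the function $f$ is definable by a complete \textbf{A}-bimachine, hence by Proposition~\ref{prp3} by an unambiguous \textbf{A}-transducer. Since aperiodic transducers are stable under determinization (Chapter~\ref{chap4}), the variety \textbf{A} is stable by determinization, so Corollary~\ref{cor1} applies and $B^0$ is an \textbf{A}-bimachine.

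For the converse I would assume $B^0$ is an \textbf{A}-bimachine, that is, $L^0$ and $R_0$ are both aperiodic. The first step is to observe that $\mathrm{dom}(f)$ is an FO-language: taking $w=\varepsilon$ in the definition of $\sim_{R_0}$ gives $u\sim_{R_0}v \Rightarrow (u\in\mathrm{dom}(f)\Leftrightarrow v\in\mathrm{dom}(f))$, so $\mathrm{dom}(f)$ is a union of $\sim_{R_0}$-classes and is therefore recognised by the aperiodic automaton $R_0$, hence is an \textbf{A}-language. Property~3 of Proposition~\ref{prp3} then holds ($f$ is \textbf{A}-bimachine definable and its domain is an \textbf{A}-language), which yields a complete \textbf{A}-bimachine via Proposition~\ref{prp3}, and Proposition~\ref{prpbt} converts this into an FO-translation defining $f$. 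This closes the equivalence, and since aperiodicity is decidable the algorithm terminates with the correct answer.

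The delicate point is the converse direction, where one must rule out the pathology flagged in the remark following Proposition~\ref{prp3}: a \textbf{V}-bimachine whose domain is not a \textbf{V}-language. The domain argument above is exactly what prevents this, exploiting the fact that the canonical left congruence $\sim_{R_0}$ already refines domain membership, so that aperiodicity of $R_0$ transfers to $\mathrm{dom}(f)$. Everything else is the assembly of results already proved: the logic–bimachine correspondence (Propositions~\ref{prpbt} and~\ref{t-to-b}), the transducer–bimachine equivalence (Proposition~\ref{prp3}), the minimality of the canonical bimachine (Corollary~\ref{cor1}), and the stability of aperiodic transducers under determinization established in Chapter~\ref{chap4}, whose proof carries the genuine technical weight.
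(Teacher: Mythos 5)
Your proof is correct and follows essentially the same route as the paper: the forward direction chains Proposition~\ref{t-to-b}, Proposition~\ref{prp3} and Corollary~\ref{cor1} (via stability of aperiodicity under determinization from Chapter~\ref{chap4}), and the converse chains Proposition~\ref{prp3} with Proposition~\ref{prpbt}. The one real difference is that the paper's algorithm includes a separate step checking that $\mathrm{dom}(f)$ is an aperiodic language, whereas you show this check is redundant: taking $w=\varepsilon$ in the definition of $\sim_{R_0}$ shows $\mathrm{dom}(f)$ is a union of $\sim_{R_0}$-classes, so it is recognized by the automaton $R_0$ itself, and aperiodicity of $R_0$ then makes $\mathrm{dom}(f)$ an \textbf{A}-language automatically; this observation is sound, so your streamlined algorithm (construct $B^0$ and test its two transition monoids) is valid and slightly cleaner than the paper's.
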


\begin{prf}
Let us assume $f$ is definable by an \textbf A-NFT. Then its domain is an \textbf A-language and, according to Corollary~\ref{cor1}, the canonical bimachine is an \textbf A-bimachine.
Let us assume that $f$ is definable by an FO-translation.
According to Proposition~\ref{t-to-b}, we can construct a complete aperiodic bimachine defining $f$.
Then according to Proposition~\ref{prp3} we can construct an \textbf A-NFT defining $f$.
Hence to decide if a rational function given by a NFT is definable by a \textbf A-NFT, we need to:
\begin{enumerate}
\item Check that the domain is an aperiodic language.
\item Construct the canonical bimachine $B^0$.
\item If the domain is an aperiodic language and $B^0$ is aperiodic, from Proposition~\ref{prp3} we can build a complete aperiodic bimachine $B$ defining the function.
\item From $B$, according to Proposition~\ref{prpbt}, we can construct an FO-translation defining the function.
\end{enumerate}
\end{prf}

\begin{rmk}
In particular, a function is definable by an aperiodic transducer if and only if it is definable by an unambiguous transducer, since the construction from bimachine to transducer always yields an unambiguous transducer.
\end{rmk}

\section{General variety}

In this section we try to generalize the previous result to any variety.
We show that if a function is definable by an unambiguous \textbf V-NFT, then there exists a bimachine amongst a finite set of "canonical" bimachines that is a \textbf V-bimachine.
We first show the result on total functions and then extend it to the general case.

Let $A$, $A'$ be two deterministic left (resp. right) automata.
In order to simplify the notations, we will write $A\sqsubseteq {A'}$ and say that the automaton $A$ is finer than $A'$ whenever ${\sim_A}\sqsubseteq{\sim_{A'}}$.

\subsection{Total functions}
The construction is a finer version of the previous one and heavily uses the following Theorem of \cite{reutenauers91}.
\begin{thm}
\label{thm1}
Let $f$ be a total rational function, let $R$ be a right automaton finer than $R_0$ and let $L^R$ be the left automaton of the bimachine $B^R$.
Then we say that $L^R$ is \emph{minimal for  $R$} in the following sense:
for any bimachine defining $f$ with $R$ as its right automaton and $L'$ as its left automaton, we have $${L'} \sqsubseteq {L^R}$$
\end{thm}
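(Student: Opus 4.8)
The plan is to prove the inclusion of congruences $\sim_{L'}\sqsubseteq\sim_{L^R}$ directly, which by the convention $A\sqsubseteq A'\Leftrightarrow\sim_A\sqsubseteq\sim_{A'}$ is exactly the assertion $L'\sqsubseteq L^R$. So I fix an arbitrary bimachine $B'=(L',R,\omega',\lambda',\rho')$ defining $f$ with the given right automaton $R$, and two words $u_1,u_2$ with $u_1\sim_{L'}u_2$; since $L'$ is a deterministic left automaton this means that $u_1$ and $u_2$ reach a common state $l^\ast$ of $L'$. The goal is then to check the two defining equalities of $\sim_{L^R}$ for the pair $u_1,u_2$.

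The key step is a decomposition lemma for the output of $B'$. Running $B'$ on any word $u_i z$, I split the produced output at the frontier between $u_i$ and $z$. The letters of $z$ are read starting from the left state $l^\ast$, so the contribution of the $z$-part, together with the terminal left output $\rho'$, depends only on $l^\ast$ and $z$; call it $C(z)$, and note it is the \emph{same} for $u_1z$ and $u_2z$. The letters of $u_i$ are read against right states that only see the suffix $z$ through its $R$-state $r_z:=z\,r_0$, while the terminal right output $\lambda'$ is applied to the global right state $u_i\,r_z$; hence the $u_i$-part together with $\lambda'$ depends only on $u_i$ and $r_z$. Writing $\alpha_i(r_z)$ for this part, I obtain
$$f(u_i z)=\alpha_i(r_z)\,C(z),$$
where $C(z)$ is independent of $i$. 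Totality of $f$ guarantees that every such run exists, so the possible partiality of $\omega',\lambda',\rho'$ causes no trouble.

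Now I feed this into the definition of $\sim_{L^R}$. The crucial observation is that in the prefix function $\widehat f_r(u_iw\sigma)=\bigwedge\{f(u_iw\sigma v)\mid v\,r_0=r\}$ the companion suffix is $z=w\sigma v$, whose $R$-state is $r_z=(w\sigma)(v\,r_0)=(w\sigma)\,r$, a value that is \emph{constant} across the whole range of the meet; likewise in $\widehat f_{\sigma r}(u_iw)$ the suffix $z=wv'$ has constant state $r_z=w(\sigma r)$, the same value $(w\sigma)r$. Thus $\alpha_i(r_z)$ is a fixed common prefix that factors out of the meet, giving $\widehat f_r(u_iw\sigma)=\alpha_i\,\delta$ and $\widehat f_{\sigma r}(u_iw)=\alpha_i\,\delta'$ with $\delta,\delta'$ independent of $i$. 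Using the prefix property $\widehat f_{\sigma r}(u)\preceq\widehat f_r(u\sigma)$ from \cite{reutenauers91}, the left quotient $\widehat f_{\sigma r}(u_iw)^{-1}\widehat f_r(u_iw\sigma)$ is well defined and equals $\delta'^{-1}\delta$, independently of $i$. The same computation with $z=w$ and the range $\{v''\mid v''\,r_0=r_0\}$ (which contains $\varepsilon$, so $\widehat f_{r_0}(u_iw)\preceq f(u_iw)$) handles the second equality $\widehat f_{r_0}(u_iw)^{-1}f(u_iw)$. Both defining equalities of $\sim_{L^R}$ therefore hold, so $u_1\sim_{L^R}u_2$, completing the inclusion.

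The main obstacle is the decomposition lemma, and in particular pinning down precisely why $r_z$ is constant across each meet: one must exploit that $\Sigma^\ast$ acts on $R$, so that $(w\sigma v)r_0=(w\sigma)(v\,r_0)$, and that restricting the meet to suffixes of a single $R$-class freezes exactly this state. The remaining care is bookkeeping, namely checking that the $u_i$-contribution genuinely ignores $z$ beyond $r_z$, that $C(z)$ genuinely ignores $u_i$ beyond $l^\ast$, and that every left quotient invoked is legitimate, for which the prefix property of the $\widehat f_r$ is exactly what is needed.
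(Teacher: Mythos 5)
Your proof is correct. There is, however, nothing in the paper to compare it against: the paper states this theorem without proof, importing it from \cite{reutenauers91} (``heavily uses the following Theorem of...''), and afterwards only uses it as a black box. Your argument is therefore a genuine self-contained derivation, and it checks out against the paper's conventions. The splitting of the output of $B'$ on $u_iz$ is $f(u_iz)=\alpha_i(r_z)\,C(z)$ with $\alpha_i(r_z)=\lambda'(u_ir_z)\,\omega'(l_0,u_i,r_z)$ and $C(z)=\omega'(l^\ast,z,r_0)\,\rho'(l_z)$, where $l_z$ is the state reached from $l^\ast$ on $z$; this is legitimate because the right states met while reading $u_i$ depend on $z$ only through $r_z=zr_0$, and the left states met while reading $z$ depend on $u_i$ only through $l^\ast$. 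The decisive observation --- that $r_z$ is frozen at the single value $(w\sigma)r$ across both meets $\widehat f_r(u_iw\sigma)$ and $\widehat f_{\sigma r}(u_iw)$, so that the $i$-dependent word $\alpha_i((w\sigma)r)$ factors out of both longest common prefixes and cancels in the left quotients --- is exactly right, and the second defining equality of $\sim_{L^R}$ is handled correctly via $\varepsilon r_0=r_0$. Two small improvements are possible. First, you do not actually need to import the prefix property from \cite{reutenauers91}: since $\left\{w\sigma v\mid vr_0=r\right\}\subseteq\left\{wv'\mid v'r_0=\sigma r\right\}$, your $\delta'$ is the longest common prefix of a superset of the words whose longest common prefix is $\delta$, hence $\delta'$ is a prefix of $\delta$; this makes the whole argument self-contained. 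Second, totality of $f$ is used twice, not once: besides guaranteeing that all values of $\omega',\lambda',\rho'$ along the runs you split are defined, it also guarantees that $u_1\sim_{L'}u_2$ really means both words reach an actual common state $l^\ast$ of the (possibly partial) deterministic automaton $L'$, rather than both runs being undefined; this deserves an explicit sentence.
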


We also use the following Proposition which holds for total functions only.
\begin{prp}
Let $B=(L,R,\omega,\lambda,\rho)$ be a bimachine defining some total function $f$.
Then we have both ${L} \sqsubseteq {L_0}$ and ${R} \sqsubseteq {R_0}$.
\end{prp}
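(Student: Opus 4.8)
The plan is to exploit the symmetry of the construction and prove only ${R}\sqsubseteq{R_0}$, that is ${\sim_R}\sqsubseteq{\sim_{R_0}}$; the claim ${L}\sqsubseteq{L_0}$ then follows by the mirror-image argument, scanning inputs from the left rather than the right. So I would fix $u,v$ with $u\sim_R v$ and verify the two conditions defining $u\sim_{R_0}v$. The first condition, $wu\in\mathrm{dom}(f)\Leftrightarrow wv\in\mathrm{dom}(f)$ for all $w$, is immediate here because $f$ is total, so $\mathrm{dom}(f)=\Sigma^\ast$ and both sides always hold. This is exactly where totality enters, and it explains why the proposition is stated for total functions only.

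The core of the argument is the bound $\sup_{w}\lVert f(wu),f(wv)\rVert<\infty$. I would run the bimachine $B$ on $wu$ and on $wv$ in parallel and compare the two outputs position by position. Write $w=w_1\ldots w_m$. The key observation is that $\sim_R$ is a \emph{left} congruence, so $u\sim_R v$ yields $xu\sim_R xv$ for every suffix $x$ of $w$. Consequently, at each position $j\le m$ inside the $w$-block, the incoming right state of $B$ — the state of $R$ reached on the suffix $w_{j+1}\ldots w_m u$ (resp.\ $w_{j+1}\ldots w_m v$) — is the same for the two runs, while the incoming left state depends only on the common prefix $w_1\ldots w_{j-1}$ and is therefore also the same. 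Since the letters $w_j$ coincide, the transition outputs emitted by $\omega$ while scanning $w$ are identical in the two runs. Moreover $wu\sim_R wv$, so the leftmost (final) right state is common and $\lambda$ contributes the same prefix to both outputs.

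It then follows that $f(wu)$ and $f(wv)$ share a common prefix, namely $\lambda$ of the final right state followed by the entire $w$-block contribution, and differ only in the suffix produced while scanning $u$ (resp.\ $v$) together with the terminal value $\rho$ of the final left state. I would bound the length of this differing suffix by a constant independent of $w$: each transition emits a word of length at most $K=\max_{l,\sigma,r}|\omega(l,\sigma,r)|$, so scanning $u$ produces at most $K|u|$ letters and scanning $v$ at most $K|v|$, while $|\rho(\cdot)|\le M$ for $M$ the maximal length of a value of $\rho$. Hence $\lVert f(wu),f(wv)\rVert\le K(|u|+|v|)+2M$ uniformly in $w$, which gives $u\sim_{R_0}v$ and closes the argument.

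The step I expect to be the main obstacle is the bookkeeping that makes the whole $w$-block collapse into a common prefix: one must argue carefully that the incoming right states agree at \emph{every} position of the block — which is precisely the left-congruence property applied to each suffix of $w$ — and that all $w$-dependence in the outputs cancels, leaving differing suffixes whose length is controlled solely by $|u|$, $|v|$ and the fixed parameters of $B$. One should also record that totality ensures every transition used along these runs is defined, so the potential partiality of $\omega$ never interferes.
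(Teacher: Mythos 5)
Your proof is correct and follows essentially the same route as the paper's: fix $u\sim_R v$, note that determinism of $R$ (equivalently, the left-congruence property applied to each suffix of $w$) forces the two runs to agree on the entire $w$-block and on $\lambda$ of the final right state, cancel this common prefix, and bound the remaining suffixes by $K(|u|+|v|)+2M$ uniformly in $w$. The paper states this more compactly via the identity $f(wu)=\lambda(r_n)\,\omega(l_0,w,r)\,\omega(l_n,u,r_0)\,\rho(l)$ and the bound $k(|u|+|v|+1)$, but the decomposition and the constants involved are the same; your explicit treatment of the (trivial) domain condition via totality is a fine addition.
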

\begin{prf}
Let us show ${R} \sqsubseteq {R_0}$ since the proof is the same for $L_0$.
Let $u,v\in \Sigma^\ast$ such that $u\sim_R v$.
It means that $r_0\xrightarrow u r$ and $r_0\xrightarrow v r$ for some $r\in R$.
Let $w=\sigma_1\ldots\sigma_n\in \Sigma^n$ and let $l_0\xrightarrow {\sigma_1} l_1 \ldots l_{n-1}\xrightarrow {\sigma_n} l_n$ be the execution of $w$ on $L$.
Similarly, let $r_n\xrightarrow {\sigma_1} r_{n-1} \ldots r_{1}\xrightarrow {\sigma_n} r$ be the execution of $w$ on $R$ from $r$.
Then
$$\begin{array}{rcl}
\lVert f(wu),f(wv)\rVert &=&
\lVert \lambda(r_n)\omega(l_0,w,r)\omega(l_n,u,r_0)\rho(l),\lambda(r_n)\omega(l_0,w,r)\omega(l_n,v,r_0)\rho(l')\rVert\\
&=&\lVert \omega(l_n,u,r_0)\rho(l),\omega(l_n,v,r_0)\rho(l')\rVert\\
&\leq& k(|u|+|v|+1)
\end{array}$$
with $k$ being the maximum size of an output in $B$.
Hence $u\sim_{R_0}v$.
\end{prf}

In the following Proposition, we show that $L^0$ is maximal amongst the set of minimal left automata, but before we need the Lemma:

\begin{lem}
\label{lem-finer}
Let $B$ be a bimachine with $L$ as its left automaton and $R$ as its right automaton. Let $L'$ be finer than $L$ and $R'$ be finer than $R$. Then there is a bimachine $B'$ with $L'$ and $R'$ as its automata and equivalent to $B$.
\end{lem}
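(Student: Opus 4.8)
The plan is to refine $B$ pointwise by transporting its output and terminal functions along the canonical projections induced by the two refinements. First I would extract from $\sim_{L'}\sqsubseteq\sim_L$ a morphism of left automata $\pi_L:L'\to L$: every state of $L'$ reachable from $l_0'$ has the form $[u]_{L'}$, and I set $\pi_L([u]_{L'})=[u]_L$. This is well defined precisely because $\sim_{L'}$ is finer than $\sim_L$ (if $u\sim_{L'}v$ then $u\sim_Lv$), it sends $l_0'$ to $l_0$, and it commutes with the right action, $\pi_L(l'\cdot\sigma)=\pi_L(l')\cdot\sigma$. Symmetrically, $\sim_{R'}\sqsubseteq\sim_R$ yields a morphism of right automata $\pi_R:R'\to R$ fixing the initial state.

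Next I would define the bimachine $B'=(L',R',\omega',\lambda',\rho')$ by pulling the data of $B$ back through these projections,
$$\omega'(l',\sigma,r')=\omega(\pi_L(l'),\sigma,\pi_R(r')),\quad \lambda'(r')=\lambda(\pi_R(r')),\quad \rho'(l')=\rho(\pi_L(l')),$$
each of these partial functions being declared defined exactly when its right-hand side is defined.

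It then remains to verify $\llbracket B'\rrbracket=\llbracket B\rrbracket$. Fixing $u=\sigma_1\cdots\sigma_n$, I would take the runs $l_0'\xrightarrow{\sigma_1}\cdots\xrightarrow{\sigma_n}l_n'$ of $L'$ and $r_n'\xleftarrow{\sigma_1}\cdots\xleftarrow{\sigma_n}r_0'$ of $R'$; since $\pi_L$ and $\pi_R$ are automaton morphisms fixing the initial states, applying them factorwise produces exactly the corresponding runs of $L$ and $R$, with $l_i=\pi_L(l_i')$ and $r_i=\pi_R(r_i')$. Substituting into the definitions above, each factor $\omega'(l_{i-1}',\sigma_i,r_{n-i}')$ equals $\omega(l_{i-1},\sigma_i,r_{n-i})$, and $\lambda'(r_n')=\lambda(r_n)$, $\rho'(l_n')=\rho(l_n)$. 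Hence $B'$ computes the same product as $B$ factor by factor, with the same factors defined, so the two bimachines agree on every word and in particular have the same domain.

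The one point I would treat with care is the partiality of the transition functions, since $\pi_L$ is defined only on states reachable in $L'$ and a priori a word might be readable by $L$ but not by $L'$ (or conversely), which could silently alter the domain. This is the main obstacle, and I would neutralize it at the outset by assuming all four automata complete --- for instance replacing each by the complete automaton $A_\sim$ attached to its congruence --- which is harmless because membership of a word in $\mathrm{dom}(\llbracket B\rrbracket)$ is governed solely by the partiality of $\omega$, $\lambda$ and $\rho$, not of the transitions. Once every run is guaranteed to exist, the projection argument above applies verbatim.
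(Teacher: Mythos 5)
Your proof is correct and takes essentially the same approach as the paper: the paper's one-line argument that $B'$ can ``behave as $B$ by disregarding the finer information on its states'' is exactly the projection construction you make explicit via $\pi_L$ and $\pi_R$ and the pullback of $\omega$, $\lambda$, $\rho$. Your additional care about partial transition functions (completing the automata first, just as the paper itself does in the proof of Proposition~\ref{prpbt}) only makes the argument more rigorous than the original.
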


\begin{prf}
Since $L'$ and $R'$ are finer than $L$ and $R$, respectively, the bimachine $B'$ can behave as $B$ by disregarding the finer information on its states.
\end{prf}

\begin{prp}
Let $f$ be a total rational function. Let $R$ be a right automaton finer than $R_0$. Let $L^R$ be the left automaton of the bimachine  $B^R$.
Then $${L^0} \sqsubseteq {L^R}$$
\end{prp}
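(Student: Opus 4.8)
The plan is to exhibit a bimachine that defines $f$, keeps $R$ as its right automaton, and carries $L^0$ as its left automaton. Once such a bimachine exists, Theorem~\ref{thm1} finishes the argument at once: since $R$ is finer than $R_0$, that theorem guarantees that $L^R$ is minimal for $R$, meaning that every bimachine defining $f$ with $R$ as right automaton and some $L'$ as left automaton satisfies ${L'}\sqsubseteq{L^R}$. Instantiating this with $L'=L^0$ then gives exactly ${L^0}\sqsubseteq{L^R}$.

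To build the required bimachine I would start from the canonical bimachine $B^0=(L^0,R_0,\omega^0,\lambda^0,\rho^0)$, which defines $f$ by construction and whose two automata are $L^0$ and $R_0$. By hypothesis ${R}\sqsubseteq{R_0}$, and trivially ${L^0}\sqsubseteq{L^0}$, so both target automata are finer than the corresponding automata of $B^0$. Applying Lemma~\ref{lem-finer} with $L'=L^0$ and $R'=R$ then yields a bimachine $B'$ equivalent to $B^0$, hence still defining $f$, whose left automaton is $L^0$ and whose right automaton is $R$. This is precisely the input that Theorem~\ref{thm1} requires.

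There is no serious obstacle here: the statement dissolves into a direct chaining of Lemma~\ref{lem-finer} and Theorem~\ref{thm1}, and totality of $f$ enters only through the latter. The one point that deserves care is the bookkeeping of the refinement order, since the conclusion is mildly counterintuitive---passing to the coarser right automaton $R_0$ produces the finer left automaton $L^0$. The reason is exactly the minimality of $L^R$: a coarser right automaton forces the compatible left automaton to be at least as fine as $L^0$, and the inequality ${L^0}\sqsubseteq{L^R}$ records this. I would simply double-check that Theorem~\ref{thm1} is invoked with its hypothesis ``$R$ finer than $R_0$'' matching our assumption, and that equivalence of bimachines is understood to preserve the defined function, so that $B'$ genuinely defines $f$.
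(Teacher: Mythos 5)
Your proof is correct and follows essentially the same route as the paper: both apply Lemma~\ref{lem-finer} to $B^0$ (refining $R_0$ to $R$) to obtain a bimachine defining $f$ with $L^0$ and $R$ as its automata, and then invoke the minimality of $L^R$ from Theorem~\ref{thm1} to conclude ${L^0}\sqsubseteq{L^R}$. Your version merely spells out the bookkeeping that the paper's one-sentence proof leaves implicit.
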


\begin{prf}
This follows from Theorem~\ref{thm1}: Since ${R} \sqsubseteq {R_0}$, there is, according to Lemma~\ref{lem-finer}, a bimachine defining $f$ with $L^0$ and $R$ as its automata and we have ${L^0} \sqsubseteq {L^R}$.
\end{prf}

For a complete function $f$ we have proven that $L^0$ is maximal amongst the set of minimal left automata.
This means that if we want to show that $f$ is definable by some bimachine with a \textbf V-automaton as its left automaton, we only have to check if the finite set of automaton coarser than $L^0$ (and finer than $L_0$) contains a \textbf V-automaton.

If there exists a \textbf V-automaton coarser than $L^0$ and finer than $L_0$, then we define $L_\mathbf{V}$ as the finest left \textbf V-automaton such that ${L^0} \sqsubseteq {L_\mathbf{V}} \sqsubseteq {L_0}$.
Finally we define $R^\mathbf V=R^{L_\mathbf V}$ as the right automaton of the bimachine $B^{L_\mathbf V}$.

Before moving on, let us show the following lemma:
\begin{lem}
\label{lem_v_c}
Let ${\sim_1}\sqsubseteq {\sim_2}$ be two right (resp. left) congruences.
Let $\equiv_1$, $\equiv_2$ denote the transition congruences of the left (resp. right) automata associated with $\sim_1$ and $\sim_2$, respectively. Then ${\equiv_1}$ is finer than $ {\equiv_2}$.
\end{lem}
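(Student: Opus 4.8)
The plan is to reduce everything to the description of a transition congruence in terms of the action of words on states, as in Remark~\ref{rmk1}. First I would recall that the left automaton $A_i$ associated with the right congruence $\sim_i$ has $\Sigma^\ast/\sim_i$ as its set of states, with the right action $[w]\cdot u=[wu]$ and initial state $[\varepsilon]$. Since this automaton is deterministic and complete and all its states are reachable from $[\varepsilon]$, its transition congruence $\equiv_i$ is exactly the relation that identifies two words precisely when they induce the same action on every state; that is, $u\equiv_i v$ if and only if $wu\sim_i wv$ for all $w\in\Sigma^\ast$. This is the content of Remark~\ref{rmk1} specialized to $A_{\sim_i}$, whose associated right congruence is $\sim_i$ itself.

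Granting this characterization, the implication is immediate. Suppose $u\equiv_1 v$. Then $wu\sim_1 wv$ for every $w\in\Sigma^\ast$. Since $\sim_1\sqsubseteq\sim_2$, each of these relations upgrades to $wu\sim_2 wv$, and as this holds for all $w$ we conclude $u\equiv_2 v$. Hence $\equiv_1\sqsubseteq\equiv_2$, which is exactly the assertion that $\equiv_1$ is finer than $\equiv_2$. The argument for left congruences and their associated right automata is entirely symmetric: one replaces the right action by the left action and the condition $wu\sim_i wv$ by $uw\sim_i vw$, and the same one-line containment goes through.

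I do not expect any genuine obstacle here. The only point requiring care is the justification of the characterization of $\equiv_i$ through the action on states: one should observe that $A_{\sim_i}$ is complete with every state reachable, so that the trimness hypothesis of Remark~\ref{rmk1} is harmless and $\equiv_i$ coincides with the kernel of the induced morphism from $\Sigma^\ast$ into the monoid of state transformations. Once this is in place, the monotonicity of $\equiv$ with respect to $\sqsubseteq$ follows directly from the inclusion $\sim_1\sqsubseteq\sim_2$ as shown above.
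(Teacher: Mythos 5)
Your proof is correct and takes essentially the same route as the paper's: both reduce the claim to the characterization $u\equiv_i v$ if and only if $wu\sim_i wv$ for all $w\in\Sigma^\ast$ (Remark~\ref{rmk1}), from which the containment $\equiv_1\sqsubseteq\equiv_2$ follows immediately from $\sim_1\sqsubseteq\sim_2$. Your extra care in justifying that the trimness hypothesis of Remark~\ref{rmk1} is harmless for the associated automaton is a sensible elaboration of what the paper leaves implicit.
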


\begin{prf}
Let us assume that $\sim_1$, $\sim_2$ are left congruences, the proof being completely symmetrical.
We have that $u\equiv_i v$ if and only if for all $w\in \Sigma^\ast$, $wu\sim_i wv$.
Hence we can easily see that $u\equiv_1 v \Rightarrow u\equiv_2 v$.
\end{prf}

We can now claim the desired property:

\begin{prp}
Let \textbf V be a monoid variety.
Let $f$ be a total rational function.
The function $f$ is definable by an unambiguous \textbf V-NFT if and only if $R^\mathbf V$ is defined and is a \textbf V-automaton.
\end{prp}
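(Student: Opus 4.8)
The plan is to establish the two implications separately. Throughout, the engine driving the argument is the interplay between the minimality statement of Theorem~\ref{thm1} and its right--left symmetric analogue, the sandwiching bounds proved above for total functions, and the transfer of the \textbf V-property along refinements provided by Lemma~\ref{lem_v_c} together with Proposition~\ref{v-congruence} (a finer \textbf V-automaton forces every coarser one to be a \textbf V-automaton as well).

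For the direction from $R^\mathbf V$ to the transducer I would argue directly. If $R^\mathbf V$ is defined then so is $L_\mathbf V$, which is a \textbf V-automaton by its very definition; hence $B^{L_\mathbf V}=(L_\mathbf V,R^\mathbf V,\ldots)$ has both automata in \textbf V and is a \textbf V-bimachine defining $f$. Because $f$ is total, $\mathrm{dom}(f)=\Sigma^\ast$ is a \textbf V-language (it is recognized by the one-state automaton, whose trivial transition monoid lies in every variety), so item~3 of Proposition~\ref{prp3} is satisfied and the implication $3.\Rightarrow 1.$ yields an unambiguous \textbf V-NFT.

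For the converse I would start from an unambiguous \textbf V-NFT $T$ and apply the construction of Section~\ref{tobim} to obtain a bimachine $B=(L,R,\ldots)$ whose two automata both have transition monoid $M(T)\in\mathbf V$, so $L$ and $R$ are \textbf V-automata, with $L\sqsubseteq L_0$ and $R\sqsubseteq R_0$ by the sandwiching proposition for total functions proved above. The subtlety I anticipate here is that $L$ need not lie in the interval $[L^0,L_0]$, so I would not use $L$ directly but instead pass to $L^R$, the left automaton of $B^R$. Theorem~\ref{thm1} applied to $R$ (which is finer than $R_0$) makes $L^R$ the coarsest left automaton compatible with $R$, so $L\sqsubseteq L^R$; with Lemma~\ref{lem_v_c} and Proposition~\ref{v-congruence} this transports the \textbf V-property from $L$ up to $L^R$. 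Since also $L^0\sqsubseteq L^R\sqsubseteq L_0$, the automaton $L^R$ is a \textbf V-automaton inside $[L^0,L_0]$, which shows that $L_\mathbf V$, and hence $R^\mathbf V$, is defined and that $L_\mathbf V\sqsubseteq L^R$ (as $L_\mathbf V$ is the finest \textbf V-automaton in that interval).

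The main obstacle I expect is the final ``crossing over'': showing that $R^\mathbf V=R^{L_\mathbf V}$ is itself a \textbf V-automaton, which requires feeding the coarser left automaton $L_\mathbf V$ back into the right side while keeping a \textbf V-witness. Since $L_\mathbf V\sqsubseteq L^R$, Lemma~\ref{lem-finer} supplies a bimachine $B'$ with automata $L_\mathbf V$ and $R$ defining $f$; the symmetric form of Theorem~\ref{thm1} applied to $L_\mathbf V$ (finer than $L_0$) makes $R^{L_\mathbf V}$ the coarsest right automaton compatible with $L_\mathbf V$, so from $B'$ we obtain $R\sqsubseteq R^{L_\mathbf V}$. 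As $R$ is a \textbf V-automaton, Lemma~\ref{lem_v_c} and Proposition~\ref{v-congruence} once more transfer the property to the coarser $R^{L_\mathbf V}=R^\mathbf V$, completing the proof. The delicate bookkeeping throughout is keeping the refinement directions straight: each minimality statement produces the \emph{coarsest} canonical automaton compatible with the opposite side, so the \textbf V-property must always be pushed from a finer witness ($L$, respectively $R$) onto the coarser canonical object ($L^R$, respectively $R^\mathbf V$), never the reverse.
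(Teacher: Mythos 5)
Your proof is correct and follows essentially the same route as the paper's: both directions rest on Proposition~\ref{prp3}, Theorem~\ref{thm1} and its left--right symmetric form, and the transfer of the \textbf V-property along coarsenings via Lemma~\ref{lem_v_c} and Proposition~\ref{v-congruence}. The only deviation is cosmetic: where the paper introduces the intermediate automaton $R'=R^{(L^R)}$ and argues $R\sqsubseteq R'\sqsubseteq R^{\mathbf V}$, you apply Lemma~\ref{lem-finer} to $B^R$ to obtain a bimachine with automata $L_{\mathbf V}$ and $R$ and conclude $R\sqsubseteq R^{\mathbf V}$ in one step --- a slightly shorter path through the same ingredients, which in fact makes explicit the step the paper leaves implicit in its chain of refinements.
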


\begin{prf}
Let $f$ be a total function, let us assume that there exists a \textbf V-NFT defining $f$. According to Proposition~\ref{prp3} there exist a \textbf V-bimachine $B=(L,R,\omega,\lambda,\rho)$ defining $f$.
Let $L^R$ be the left automaton of the bimachine $B^R$ with $R$ as its right automaton.
Then, according to Theorem~\ref{thm1}, we have ${L} \sqsubseteq {L^R}$, hence from Lemma~\ref{lem_v_c}, $L^R$ is a \textbf V-automaton.
We have ${L^0} \sqsubseteq {L_\mathbf{V}} \sqsubseteq {L^R} \sqsubseteq {L_0}$.
Let $R'=R^{(L^R)}$ be the right automaton of the bimachine $B^{(L^R)}$ with $L^R$ as its left automaton.
We have, again according to Theorem~\ref{thm1}, ${R} \sqsubseteq {R'}$, hence $R'$ is also a \textbf V-automaton.
We have ${R^0} \sqsubseteq {R'} \sqsubseteq {R^\mathbf V} \sqsubseteq {R_0}$. Hence according to Lemma~\ref{lem_v_c}, $R^\mathbf V$ is also a \textbf V-automaton.

Conversely, we have in particular $B^{L_\mathbf V}$ a \textbf V-bimachine defining the total function $f$ hence, according to Proposition~\ref{prp3}, we can obtain a \textbf V-NFT defining $f$.
\end{prf}

\begin{rmk}
In particular, total functions are completely characterized by a finite set of canonical bimachines in the following sense:
Let $f$ be a complete function defined by a bimachine $B$ with $L$ and $R$ as its left and right automata. Then there exists a bimachine $B_{min}$ defining $f$ with $L_{min}$ and $R_{min}$ as its left and right automata with $L\sqsubseteq L_{min}$ and $R\sqsubseteq R_{min}$ such that:
$L^0 \sqsubseteq L_{min} \sqsubseteq L_0$ and $R^0 \sqsubseteq R_{min} \sqsubseteq R_0$.
\end{rmk}

\subsection{Partial functions}

Now that we have the result for total function, we need to extend it to partial function.
Let $f:\Sigma^\ast\rightarrow \Sigma^\ast$ be a (partial) rational function.
We define the completion of $f$:
$$\begin{array}{cccc}\bar f:& \Sigma^\ast &\rightarrow & (\Sigma\uplus\left\{\bot \right\})^\ast \\
&u & \mapsto & \left\{\begin{array}{l l} f(u) & \mathrm{if} \ u\in \mathrm{dom}(f) \\ \bot & \mathrm{otherwise}\end{array}  \right.
\end{array} $$

\begin{lem}
\label{lem_comp}
Let \textbf V be a monoid variety.
Let $f$ be a rational function such that $\mathrm{dom}(f)$ is a \textbf V-language.
The function $f$ is definable by an unambiguous \textbf V-NFT if and only if $\bar f$ is definable by an unambiguous \textbf V-NFT.
\end{lem}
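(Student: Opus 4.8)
The plan is to prove both implications by explicit transducer constructions, the whole weight resting on two standard facts: monoid varieties are closed under finite direct products and under division (submonoids), and the transition monoid of a disjoint union (resp.\ product) of two automata embeds into the product of their transition monoids. The hypothesis that $\mathrm{dom}(f)$ is a \textbf V-language is the resource that will be consumed in both directions, through a complete \textbf V-DFA $A$ recognizing $\mathrm{dom}(f)$ (for instance the automaton of its syntactic congruence, which is complete).

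For the forward direction I would start from an unambiguous \textbf V-NFT $T$ defining $f$, and build a transducer $T_\bot$ from $A$ by leaving all transition outputs equal to $\varepsilon$, the initial function equal to $\varepsilon$, and setting the terminal function to emit the single symbol $\bot$ exactly on the \emph{non-accepting} states of $A$. Since flipping the accepting states of a complete DFA does not alter its transition monoid, $T_\bot$ is a deterministic transducer with transition monoid $M(A)\in\mathbf V$ that computes $u\mapsto\bot$ on $\Sigma^\ast\setminus\mathrm{dom}(f)$ and is undefined on $\mathrm{dom}(f)$. I then take $\bar T$ to be the disjoint union of $T$ and $T_\bot$ (union of states, of initial states, and of final states). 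On input $u$, an accepting run can only live in the $T$-part when $u\in\mathrm{dom}(f)$ and only in the $T_\bot$-part when $u\notin\mathrm{dom}(f)$; as $T$ is unambiguous and $T_\bot$ deterministic, $\bar T$ has at most one accepting run and outputs $f(u)$, respectively $\bot$, so $\bar T$ defines $\bar f$. Its underlying automaton is the disjoint union of two \textbf V-automata, hence its transition monoid embeds into $M(T)\times M(A)\in\mathbf V$ and is therefore in \textbf V.

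For the converse I would start from an unambiguous \textbf V-NFT $\bar T$ defining $\bar f$ and form the product of its underlying automaton with $A$, keeping the outputs of $\bar T$ unchanged and restricting the final states so that the domain becomes exactly $\mathrm{dom}(f)$. For every $u\in\mathrm{dom}(f)$ we have $\bar f(u)=f(u)\in\Sigma^\ast$, so the unique accepting run of $\bar T$ on $u$ already contains no $\bot$; the restricted transducer therefore outputs precisely $f(u)$ and defines $f$. Taking the product of an unambiguous automaton with the deterministic $A$ preserves unambiguity, and the transition monoid of the product automaton divides $M(\bar T)\times M(A)\in\mathbf V$, so it lies in \textbf V. This produces an unambiguous \textbf V-NFT for $f$.

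The routine part is the bookkeeping on outputs and on which state set carries the accepting runs; the conceptual core in both directions is identical, namely that forming disjoint unions and products of automata keeps the transition monoid inside \textbf V via closure under finite products and division. The one place demanding genuine care is the forward direction's $\bot$-branch: it is exactly the step that uses $\mathrm{dom}(f)$ being a \textbf V-language, and the cleanest way to keep it inside \textbf V is to complement by flipping the accepting states of the \emph{complete} DFA $A$, which leaves $M(A)$ untouched rather than appealing abstractly to closure of \textbf V-languages under complement.
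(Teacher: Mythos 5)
Your proof is correct. The forward direction is essentially the paper's own argument: the paper also builds $\bar T$ by ``adding to $T$ a deterministic transducer that maps any word not in the domain of $f$ to $\bot$'', using the fact that the complement of a \textbf{V}-language is a \textbf{V}-language (the paper argues this at the monoid level via $\Sigma^\ast\backslash \mathrm{dom}(f)=\mu^{-1}(M\backslash P)$, you argue it by flipping accepting states of a complete \textbf{V}-DFA; same fact). The converse, however, takes a genuinely different route. The paper does not work at the transducer level at all: it invokes Proposition~\ref{prp3} to turn the unambiguous \textbf{V}-NFT for $\bar f$ into a complete \textbf{V}-bimachine, restricts the bimachine's output function to values not equal to $\bot$ (which leaves its two automata, hence its variety membership, untouched), and then applies Proposition~\ref{prp3} again, in the direction that consumes the hypothesis that $\mathrm{dom}(f)$ is a \textbf{V}-language, to recover an unambiguous \textbf{V}-NFT for $f$. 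You instead stay with transducers and take the product of $\bar T$ with the \textbf{V}-DFA $A$ for the domain, restricting final states; correctness rests on the observation that on $\mathrm{dom}(f)$ the unique accepting run of $\bar T$ outputs $f(u)$ with no $\bot$, and variety membership rests on the transition monoid of the product dividing $M(\bar T)\times M(A)$. The paper's route is shorter given that Proposition~\ref{prp3} is already established, and it delegates all unambiguity and variety bookkeeping to that proposition; your route is self-contained, makes the preservation of unambiguity and of the variety explicit through closure under product and division, and avoids the round trip through bimachines. One small point of hygiene in your converse: transitions of the product whose output contains $\bot$ should be erased or have their outputs replaced (they never occur on an accepting run, so this is harmless) so that the resulting transducer is formally over the alphabet $\Sigma$; this is the same bookkeeping the paper performs by restricting the bimachine's output function.
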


\begin{prf}
Let us assume $f$ is definable by an unambiguous \textbf V-NFT $T$.
Then there is a \textbf V-automaton recognizing the complement of $\mathrm{dom}(f)$.
We can easily see that the complement of a \textbf V-language is a \textbf V-language: If $M$ is a monoid in \textbf V such that $\mathrm{dom}(f)=\mu^{-1}(P)$ for some morphism $\mu$ and $P\subseteq M$, then $\Sigma^\ast\backslash \mathrm{dom}(f)=\mu^{-1}(M\backslash P)$.
By adding to $T$ a deterministic transducer that maps any word not in the domain of $f$ to $\bot$, we obtain an unambiguous \textbf V-NFT defining $\bar f$.

Conversely, let us assume $\bar f$ is definable by a \textbf V-NFT. Since $\bar f$ is total, according to Proposition~\ref{prp3} $f$ is definable by a \textbf V-bimachine. Then by restricting the output function to values not equal to $\bot$, we obtain a \textbf V-bimachine defining $f$. According to Proposition~\ref{prp3}, again, $f$ is definable by a \textbf V-NFT.
\end{prf}

\begin{rmk}
We have reduced the problem of deciding whether a function is definable by an unambiguous \textbf V-NFT to the membership problem of \textbf V.
\end{rmk}

\begin{thm}
\label{v-nft}
Let \textbf V be a monoid variety. Let us assume that the membership problem is decidable for \textbf V.
Then there is an algorithm to decide if a function defined by an NFT is definable by an unambiguous \textbf V-NFT.
\end{thm}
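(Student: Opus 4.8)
The plan is to reduce the general (partial) case to the total case already settled, and then to turn the characterization for total functions into an effective procedure using the decidability of the membership problem for \textbf V.

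First I would decide whether $\mathrm{dom}(f)$ is a \textbf V-language. Since $f$ is given by an NFT, $\mathrm{dom}(f)$ is the regular language recognized by the underlying automaton, so one computes its syntactic monoid and tests its membership in \textbf V, which is decidable by hypothesis. By Proposition~\ref{prp3} (equivalence of items 1 and 3), any function definable by an unambiguous \textbf V-NFT has a \textbf V-definable domain; hence if $\mathrm{dom}(f)$ is not a \textbf V-language the algorithm immediately answers ``no''.

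If $\mathrm{dom}(f)$ is a \textbf V-language, then Lemma~\ref{lem_comp} lets me replace $f$ by its completion $\bar f$: the function $f$ is definable by an unambiguous \textbf V-NFT if and only if $\bar f$ is, and $\bar f$ is total. I would then invoke the characterization for total functions, which states that $\bar f$ is definable by an unambiguous \textbf V-NFT exactly when $R^\mathbf V$ is defined and is a \textbf V-automaton. Making this effective requires three finite computations. One computes the syntactic congruences $\sim_{R_0}$ and $\sim_{L_0}$ (of finite index by \cite{reutenauers91}) and builds the maximal minimal left automaton $L^0=L^{R_0}$ via the canonical bimachine construction. Then, since there are only finitely many deterministic automata $L'$ with ${L^0}\sqsubseteq{L'}\sqsubseteq{L_0}$, I enumerate them and use the decidable membership test to search for a \textbf V-automaton among them; if none exists the answer is ``no'', otherwise I set $L_\mathbf V$ to be the finest such automaton. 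Finally I construct $R^\mathbf V=R^{L_\mathbf V}$ and test whether it is a \textbf V-automaton, answering ``yes'' precisely in that case.

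The hard part will be effectiveness rather than correctness: correctness is inherited verbatim from the total-function characterization together with Lemma~\ref{lem_comp} and Proposition~\ref{prp3}. The delicate points are that all the congruences produced by the construction of \cite{reutenauers91} (namely $\sim_{R_0}$, $\sim_{L^R}$, and the intermediate ones) have finite index and can actually be computed from an NFT for $f$, and that the lattice of automata lying between $L^0$ and $L_0$ is finite so that the search for $L_\mathbf V$ terminates; both facts follow from the finite-index statements cited from \cite{reutenauers91}. Once these are in place, every remaining test reduces to a membership query for \textbf V, which is decidable by assumption, so the whole procedure halts with the correct answer.
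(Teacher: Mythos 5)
Your proposal is correct and follows essentially the same route as the paper: check that $\mathrm{dom}(f)$ is a \textbf V-language, pass to the completion $\bar f$ via Lemma~\ref{lem_comp}, and then apply the total-function characterization by constructing $B^{L_\mathbf V}$ and testing whether $R^\mathbf V$ is a \textbf V-automaton, with Proposition~\ref{prp3} closing the loop back to an unambiguous \textbf V-NFT. Your treatment is in fact somewhat more explicit than the paper's about effectiveness (enumerating the finitely many automata between $L^0$ and $L_0$ to find $L_\mathbf V$), which the paper leaves implicit.
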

\begin{proof}
According to Proposition~\ref{prp3} and Lemma~\ref{lem_comp}, $f$ is definable by an unambiguous \textbf V-NFT iff $\bar f$ is definable by a \textbf V-NFT and the domain of $f$ is a \textbf V-language iff $\bar f$ is definable by a \textbf V-bimachine and the domain of $f$ is a \textbf V-language.

The construction goes as follows:
First we check that the domain of $f$ is a \textbf V-language.
We construct a NFT defining $\bar f$.
Then we construct the bimachine $B^{L_\mathbf V}$, if it exists, and check if $R^\mathbf V$ is a \textbf V-automaton.
If it is the case we use the construction of Lemma~\ref{lem_comp} to obtain a \textbf V-bimachine defining $f$.
Finally, from the construction of Proposition~\ref{prp3} we obtain an unambiguous \textbf V-NFT defining $f$

\end{proof}

\chapter{Remarks, examples and counter-examples}

\section{Monoid varieties}
The monoid varieties we consider most often are given in Figure~\ref{table},
 which is taken from \cite{dartois15}. We also give the corresponding logical and language characterizations.
The equalities are given with the symbol $\omega$ which can be seen as the \emph{idempotent power} of a finite monoid $M$, \emph{i.e.} the smallest integer $n$ such that any element to the $n$ is idempotent.
These equalities, called \emph{profinite equalities}, also characterize the monoid varieties. For a proper definition we refer the reader to \cite{pin97}.

\begin{rmk}
Since one can determine the idempotent power of a finite monoid, we can see that the membership problem is decidable for varieties defined by a finite set of profinite equalities. This is true for the varieties listed in Figure~\ref{table} and in particular for the variety of aperiodic monoids.
\end{rmk}
\begin{figure}[t]
\centering

\begin{tabular}{|c|c|c|c|}

\hline
Variety&Equations&Logic&Languages\\
\hline
\hline
Finite Monoids & - & MSO & Rational\\
\hline
Commutative \textbf{Com} & $xy=yx$ & MSO$[=]$ & Commutative\\
\hline
Aperiodic \textbf A & $x^\omega=x^{\omega+1}$ & FO & Star-free\\
\hline
$\mathcal D$-Aperiodic \textbf{DA} & $(xyz)^\omega y(xyz)^\omega=(xyz)^\omega$ & FO$^2=\Delta_2$ & Unambiguous\\
\hline
\textbf{J}$_1$ & $x=x^2,xy=yx$ & FO$^1$ & -\\
\hline
$\mathcal J$-trivial \textbf J & $y(xy)^\omega=(xy)^\omega=(xy)^\omega x$ & $\mathcal B \Sigma_1$ & \begin{tabular}{c}Piecewise\\testable\end{tabular}\\
\hline
\end{tabular}
\caption{Monoid varieties}
\label{table}
\end{figure}

\section{Determinization}
\label{app-det}
Here we see that, in general, determinization of transducers does not preserve the equalities of the transition monoid.

\begin{xmp}

Let us consider \textbf I the variety of idempotent monoids, \emph{i.e.} satisfying $x=x^2$.
Let $\Sigma=\left\{a,b\right\}$.
We define the function:
$$f: a^n\mapsto\left\{ \begin{array}{c} a\ \mathrm{if}\ n=1\\ b\ \mathrm{if}\ n>1   \end{array} \right.$$
The NFT of Figure~\ref{det1} defines $f$ and verifies $x=x^2$.

\begin{figure}[t]
\centering

\begin{tikzpicture}
\node[blanc] (q0) at (0,0) {0} ;
\node[blanc] (q1) at (2,1) {1} ;
\node[blanc] (q2) at (2,-1) {2} ;
\draw[thick,->,>=stealth] (-1,0) --(q0);
\draw[thick,->,>=stealth] (q1) -- (3,1);
\draw[thick,->,>=stealth] (q0)--(q1) node[midway,above,sloped] {$a\mid a$};
\draw[thick,->,>=stealth] (q0)--(q2) node[midway,below,sloped] {$a\mid b$};
\draw[thick,->,>=stealth] (q2)--(q1) node[midway,right] {$a\mid \varepsilon$};
\draw[thick] (q2) to[out=-45,in=0](2,-2) node[below right] {$a\mid \varepsilon$};
\draw[thick,->,>=stealth] (2,-2) to[out=180,in=-135](q2);
\end{tikzpicture}
\caption{\textbf I-NFT.}
\label{det1}
\end{figure}

Let us assume that $T'$ is an \textbf I-DFT defining $f$. Then we can decompose $f(a)=ut$ and $f(a^2)=uvt$, with $q_0\xrightarrow{a\mid u}q\xrightarrow{a\mid v}q$ and $t$ is the terminal output corresponding to $q$. Then $u=t=f(a)=\varepsilon$, which yields the contradiction.
\end{xmp}

\begin{xmp}
There is a sequential function definable by a \textbf{Com}-NFT that is not definable by any \textbf{Com}-DFT.
Let $\Sigma=\left\{a,b,c\right\}$.
We define the function $f$:
$$ \begin{array}{rcl}
ab &\mapsto& a\\
aab & \mapsto& a\\
aba &\mapsto& ab\\
ba &\mapsto& c \\
baa &\mapsto& c\\

\end{array}$$
The NFT of Figure~\ref{det2} defines $f$ and verifies $xy=yx$.

\begin{figure}[t]
\centering

\begin{tikzpicture}
\node[blanc] (q0) at (0,0) {0} ;
\node[blanc] (q1) at (3,-1) {1} ;
\node[blanc] (q2) at (3,1) {2} ;
\node[blanc] (q3) at (6,0) {3} ;
\node[blanc] (q4) at (6,2) {4} ;
\node[blanc] (q5) at (9,1) {5} ;
\node[blanc] (q6) at (-1,-2) {6} ;
\node[blanc] (q7) at (1,-2) {7} ;
\node[blanc] (q8) at (0,-4) {8} ;
\draw[thick,->,>=stealth] (-1,0) --(q0);
\draw[thick,->,>=stealth] (q5) -- (10,1);
\draw[thick,->,>=stealth] (q8) -- (1,-4);
\draw[thick,->,>=stealth] (q0)--(q2) node[midway,above] {$a\mid a$};
\draw[thick,->,>=stealth] (q2)--(q4) node[midway,above] {$a\mid \varepsilon$};
\draw[thick,->,>=stealth] (q1)--(q3) node[midway,above] {$a\mid \varepsilon$};
\draw[thick,->,>=stealth] (q3)--(q5) node[midway,above] {$a\mid \varepsilon$};
\draw[thick,->,>=stealth] (q0)--(q6) node[midway,left] {$a\mid a$};
\draw[thick,->,>=stealth] (q7)--(q8) node[midway,right] {$a\mid \varepsilon$};

\draw[thick,->,>=stealth] (q0)--(q1) node[midway,above] {$b\mid c$};
\draw[thick,->,>=stealth] (q2)--(q3) node[midway,above] {$b\mid b$};
\draw[thick,->,>=stealth] (q4)--(q5) node[midway,above] {$b\mid \varepsilon$};
\draw[thick,->,>=stealth] (q0)--(q7) node[midway,right] {$b\mid c$};
\draw[thick,->,>=stealth] (q6)--(q8) node[midway,left] {$b\mid \varepsilon$};

\end{tikzpicture}
\caption{\textbf{Com}-NFT.}
\label{det2}
\end{figure}

Let us assume that $T'$ is a DFT defining $f$ and verifying $xy=yx$. Then we can decompose $f(ab)=u_1t_1$,$f(ba)=u_2t_1$, $f(aba)=u_1vt_2$ and $f(baa)=u_2vt_2$ with $q_0\xrightarrow{ab\mid u_1}q_1\xrightarrow{a\mid v}q_2$, $q_0\xrightarrow{ba\mid u_2}q_1$ and $t_1$, $t_2$ are the terminal outputs corresponding to $q_1$ and $q_2$, respectively. Since $f(aba)$ and $f(baa)$ have no letter in common, we have $v=t_2=\varepsilon$. Hence $u_1=ab$, and we have a contradiction.
\end{xmp}

\begin{xmp}
There is a sequential function definable by a \textbf J-NFT that is not definable by any \textbf J-DFT.
Let $\Sigma=\left\{a,a_0,a_1,b,b_0,b_1,c,d\right\}$.
The \textbf J-NFT of Figure~\ref{det4} defines a subsequential function $f$.

\begin{figure}[t]
\centering

\begin{tikzpicture}
\node[blanc] (q0) at (0,0) {0} ;
\node[blanc] (q1) at (3,0) {1} ;
\node[blanc] (q2) at (-60:3) {2} ;
\draw[thick,->,>=stealth] (0,1) --(q0);
\draw[thick,->,>=stealth] (q2) -- ++(1,0);
\draw[thick,->,>=stealth] (q0)--(q1) node[midway,above] {$a\mid a_1$};
\draw[thick,->,>=stealth] (q0)--(q1) node[midway,below] {$b\mid b_1$};
\draw[thick,->,>=stealth] (q0)--(q2) node[midway,left] {$c\mid \varepsilon$};
\draw[thick,->,>=stealth] (q1)--(q2) node[midway,right] {$d\mid \varepsilon$};
\draw[thick,->,>=stealth] (q0) to[out=135,in=90] (-1,0) to[out=-90,in=-135](q0);

\node[above left] at (-1,0) {$a\mid a_0$}; 
\node[below left] at (-1,0) {$b\mid b_0$}; 

\end{tikzpicture}
\caption{\textbf J-NFT.}
\label{det4}
\end{figure}

Let us assume that $T'$ is a \textbf J-DFT defining $f$. Then we can decompose $f((ab)^\omega c)=ut_c$, $f((ab)^\omega ac)=u\alpha t_c$ and $f((ab)^\omega ad)=u\alpha t_d$ with $q_0\xrightarrow{(ab)^\omega\mid u}q\xrightarrow{a\mid \alpha}q$. Since $f((ab)^\omega c)$ and $f((ab)^\omega ac)$ have no common suffix, we have $t_c=\epsilon$. Hence we obtain that $f((ab)^\omega ac)$ is a prefix of $f((ab)^\omega ad)$ which gives the contradiction.
\end{xmp}

\section{V-domain}
\label{app-v-dom}
Let $\Sigma=\left\{a,b\right\}$.
We give an example of a \textbf{Com}-bimachine $B=(L,R,\omega,\lambda,\rho)$ such that its domain is not a commutative language.
The right automaton is the complete automaton with one state. The left automaton is given in Figure~\ref{v-bim}.
The functions $\lambda$ and $\rho$ are constant and equal to $\varepsilon$.
We define $\omega(l_0,a,r_0)=\omega(l_1,b,r_0)=\varepsilon$ and $\omega$ is not defined otherwise.
Then the domain of the function is equal to $\left\{ab\right\}$ which is not a commutative language.
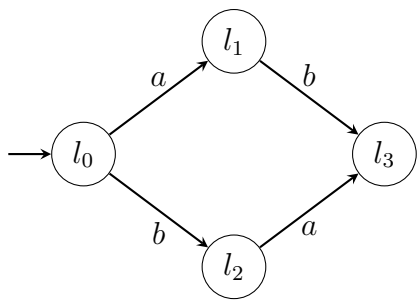
\begin{figure}[t]
\centering

\begin{tikzpicture}
\node[blanc] (l0) at (0,0) {$l_0$} ;
\node[blanc] (l1) at (2,1.5) {$l_1$} ;
\node[blanc] (l2) at (2,-1.5) {$l_2$} ;
\node[blanc] (l3) at (4,0) {$l_3$} ;
\draw[thick,<-,>=stealth] (l0) -- ++(-1,0);
\draw[thick,->,>=stealth] (l0)--(l1) node[midway,above] {$a$};
\draw[thick,->,>=stealth] (l0)--(l2) node[midway,below] {$b$};
\draw[thick,->,>=stealth] (l1)--(l3) node[midway,above] {$b$};
\draw[thick,->,>=stealth] (l2)--(l3) node[midway,below] {$a$};

\end{tikzpicture}
\caption{Left automaton of the bimachine $B$.}
\label{v-bim}
\end{figure}

\section{Disambiguation}

\begin{xmp}
It has been shown that any rational function is definable by some unambiguous transducer. Moreover there is for transducers a disambiguation algorithm which yields an unambiguous transducer with an exponential number of states.
Let $\Sigma=\left\{a,b\right\}$.
We show in Figure~\ref{dis} an example of an aperiodic automaton for which the algorithm described in \cite{filiotg12} never yields an aperiodic automaton.

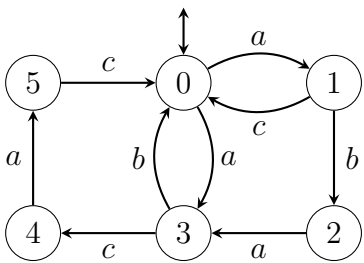
\begin{figure}[t]
\centering

\begin{tikzpicture}
\node[blanc] (q0) at (0,2) {0} ;
\node[blanc] (q1) at (2,2) {1} ;
\node[blanc] (q2) at (2,0) {2} ;
\node[blanc] (q3) at (0,0) {3} ;
\node[blanc] (q4) at (-2,0) {4} ;
\node[blanc] (q5) at (-2,2) {5} ;

\draw[thick,<->,>=stealth] (0,3) --(q0);
\draw[thick,->,>=stealth] (q3) --(q4) node[midway, below] {$c$};
\draw[thick,->,>=stealth] (q3) to[bend left] (q0) ;
\draw[thick,->,>=stealth] (q4) --(q5) node[midway, left] {$a$};
\draw[thick,->,>=stealth] (q5) --(q0) node[midway, above] {$c$};
\draw[thick,->,>=stealth] (q0) to[bend left] (q3);
\draw[thick,->,>=stealth] (q0) to[bend left] (q1);
\draw[thick,->,>=stealth] (q1) to[bend left] (q0);
\draw[thick,->,>=stealth] (q1) --(q2) node[midway, right] {$b$};
\draw[thick,->,>=stealth] (q2) --(q3) node[midway, below] {$a$};
\node at (-0.6,1){$b$};
\node at (0.6,1){$a$};
\node at (1,2.6){$a$};
\node at (1,1.4){$c$};

\end{tikzpicture}
\caption{Aperiodic NFA.}

\label{dis}
\end{figure}

As you can see this automaton is aperiodic, yet not counter-free.
For state $0$ there are two possible transitions reading letter $a$. If we choose $(0,a,1)<_\delta (0,a,3)$ then in the disambiguated automaton we have
$$(0,\varnothing)\xrightarrow{(ab)^{k}}(2,\varnothing)$$
If and only if $k$ is odd. Similarly if $(0,a,3)<_\delta (0,a,1)$ then
$$(0,\varnothing)\xrightarrow{(ac)^{k}}(4,\varnothing)$$
If and only if $k$ is odd. In both cases the obtained automaton is periodic.
\end{xmp}

\chapter*{Conclusion}
Using techniques inspired by \cite{choffrut03,reutenauers91}, we have been able to extend decidability results for varieties of rational languages to varieties of rational functions.
In the deterministic case the notion of minimal DFT gives a natural way to decide if a subsequential functions belongs to a given variety.
In the non-deterministic case we need to consider not one minimal machine, but a finite set of canonical bimachines and if one of these machines satisfies the desired algebraic property, then the function it defines belongs to the corresponding variety.

Some of the questions that remain are for instance the complexity of the decision procedures described in this report.
The question of the equivalence between \textbf V-NFT and unambiguous \textbf V-NFT for a general variety also remains unanswered. 

We foresee several interesting directions in which the techniques of this report can be taken.
The first one would be to try to extend the results to the more general class of regular functions, defined by two-way transducers \cite{engelfrieth01} and streaming string transducers \cite{alurc10}, or some intermediate class.
A second possible direction is the generalization to relations, with probably some condition of finite value, for instance by bounding the ambiguity of the transducers.
Another extension would be to consider infinite words and see for instance if some minimization is possible for DFTs on infinite words.
There is also the domain of weighted automata for which common techniques are shared with transducers \cite{filiotgr14}.

\bibliographystyle{alpha}
\bibliography{biblio}

\begin{thebibliography}{MSTV06}

\bibitem[A{\v C}10]{alurc10}
Rajeev Alur and Pavol {\v C}ern{\'{y}}.
\newblock Expressiveness of streaming string transducers.
\newblock In {\em {IARCS} Annual Conference on Foundations of Software
  Technology and Theoretical Computer Science, {FSTTCS} 2010, December 15-18,
  2010, Chennai, India}, pages 1--12, 2010.

\bibitem[BB79]{berstelb79}
Jean Berstel and Luc Boasson.
\newblock Transductions and context-free languages.
\newblock {\em Ed. Teubner}, pages 1--278, 1979.

\bibitem[BC02]{bealc02}
Marie{-}Pierre B{\'{e}}al and Olivier Carton.
\newblock Determinization of transducers over finite and infinite words.
\newblock {\em Theor. Comput. Sci.}, 289(1):225--251, 2002.

\bibitem[B{\"u}c60]{buchi60}
Julius~Richard B{\"u}chi.
\newblock Weak second-order arithmetic and finite automata.
\newblock {\em Mathematical Logic Quarterly}, 6(1-6):66--92, 1960.

\bibitem[CD15]{cartond15}
Olivier Carton and Luc Dartois.
\newblock Aperiodic two-way transducers and {FO-}transductions.
\newblock {\em CSL}, 2015.

\bibitem[Cho03]{choffrut03}
Christian Choffrut.
\newblock Minimizing subsequential transducers: a survey.
\newblock {\em Theor. Comput. Sci.}, 292(1):131--143, 2003.

\bibitem[Dar15]{dartois15}
Luc Dartois.
\newblock {\em M{\'e}thodes alg{\'e}briques pour la th{\'e}orie des automates}.
\newblock PhD thesis, Universit{\'e} Paris-Diderot, {\'E}cole doctorale Paris
  Centre, 2015.

\bibitem[DG08]{diekertg08}
Volker Diekert and Paul Gastin.
\newblock First-order definable languages.
\newblock In {\em Logic and Automata: History and Perspectives [in Honor of
  Wolfgang Thomas].}, pages 261--306, 2008.

\bibitem[EF95]{ebbinghausF95}
Heinz{-}Dieter Ebbinghaus and J{\"{o}}rg Flum.
\newblock {\em Finite model theory}.
\newblock Perspectives in Mathematical Logic. Springer, 1995.

\bibitem[EH01]{engelfrieth01}
Joost Engelfriet and Hendrik~Jan Hoogeboom.
\newblock {MSO} definable string transductions and two-way finite-state
  transducers.
\newblock {\em {ACM} Trans. Comput. Log.}, 2(2):216--254, 2001.

\bibitem[Elg61]{elgot61}
Calvin~Creston Elgot.
\newblock Decision problems of finite automata design and related arithmetics.
\newblock {\em Transactions of the American Mathematical Society}, 98(1):pp.
  21--51, 1961.

\bibitem[ES76]{eilenbergs76}
Samuel Eilenberg and Marcel{-}Paul Sch{\"u}tzenberger.
\newblock On pseudovarieties.
\newblock {\em Advances in Mathematics}, 19(3):413 -- 418, 1976.

\bibitem[FGR12]{filiotg12}
Emmanuel Filiot, Raffaella Gentilini, and Jean{-}Fran{\c{c}}ois Raskin.
\newblock Quantitative languages defined by functional automata.
\newblock In {\em {CONCUR} 2012 - Concurrency Theory - 23rd International
  Conference, {CONCUR} 2012, Newcastle upon Tyne, UK, September 4-7, 2012.
  Proceedings}, pages 132--146, 2012.

\bibitem[FGR14]{filiotgr14}
Emmanuel Filiot, Raffaella Gentilini, and Jean{-}Fran{\c{c}}ois Raskin.
\newblock Finite-valued weighted automata.
\newblock In {\em 34th International Conference on Foundation of Software
  Technology and Theoretical Computer Science, {FSTTCS} 2014, December 15-17,
  2014, New Delhi, India}, pages 133--145, 2014.

\bibitem[Fil15]{filiot15}
Emmanuel Filiot.
\newblock Logic-automata connections for transformations.
\newblock In {\em Logic and Its Applications - 6th Indian Conference, {ICLA}
  2015, Mumbai, India, January 8-10, 2015. Proceedings}, pages 30--57, 2015.

\bibitem[FKT14]{filiotkt14}
Emmanuel Filiot, Shankara~Narayanan Krishna, and Ashutosh Trivedi.
\newblock First-order definable string transformations.
\newblock {\em CoRR}, abs/1406.7824, 2014.

\bibitem[MP71]{mcnaughtonp71}
Robert McNaughton and Seymour Papert.
\newblock {\em Counter{-}free automata}.
\newblock M.I.T. Press Cambridge, Mass, 1971.

\bibitem[MSTV06]{mckenzieSTV06}
Pierre McKenzie, Thomas Schwentick, Denis Th{\'{e}}rien, and Heribert Vollmer.
\newblock The many faces of a translation.
\newblock {\em J. Comput. Syst. Sci.}, 72(1):163--179, 2006.

\bibitem[Ner58]{nerode58}
A.~Nerode.
\newblock Linear automaton transformations.
\newblock {\em Proceedings of the American Mathematical Society}, 9(4):pp.
  541--544, 1958.

\bibitem[Pin97]{pin97}
Jean{-}{\'E}ric Pin.
\newblock Syntactic semigroups.
\newblock In {\em Word, Language, Grammar}, volume~1 of {\em Handbook of Formal
  Languages}, chapter~10, pages 679--738. Springer, 1997.

\bibitem[RS91]{reutenauers91}
Christophe Reutenauer and Marcel{-}Paul Sch{\"{u}}tzenberger.
\newblock Minimization of rational word functions.
\newblock {\em {SIAM} J. Comput.}, 20(4):669--685, 1991.

\bibitem[RS95]{reutenauers95}
Christophe Reutenauer and Marcel{-}Paul Sch{\"u}tzenberger.
\newblock Vari{\'e}t{\'e}s et fonctions rationnelles.
\newblock {\em Theoretical Computer Science}, 145(1–2):229 -- 240, 1995.

\bibitem[Sak09]{sakarovitch09}
Jacques Sakarovitch.
\newblock {\em Elements of Automata Theory}.
\newblock Cambridge University Press, 2009.

\bibitem[Sch61]{schutzenberger61}
Marcel{-}Paul Sch{\"{u}}tzenberger.
\newblock A remark on finite transducers.
\newblock {\em Information and Control}, 4(2-3):185--196, 1961.

\bibitem[Sch65]{schutzenberger65}
Marcel{-}Paul Sch{\"{u}}tzenberger.
\newblock On finite monoids having only trivial subgroups.
\newblock {\em Information and Control}, 8(2):190--194, 1965.

\bibitem[Str96]{straubing96}
Howard Straubing.
\newblock Finite models, automata, and circuit complexity.
\newblock In {\em Descriptive Complexity and Finite Models, Proceedings of a
  {DIMACS} Workshop, January 14-17, 1996, Princeton University}, pages 63--96,
  1996.

\bibitem[Tra61]{trakhtenbrot61}
Boris~Avraamovich Trakhtenbrot.
\newblock Finite automata and logic of monadic predicates.
\newblock {\em Doklady Akademii Nauk SSSR}, 140:326--329, 1961.
\newblock In Russian.

\end{thebibliography}

\end{document}